\crefname{lemma}{Lemma}{Lemmas}
\crefname{enumi}{}{}
\newtheorem{theorem}{Theorem}[section]
\newtheorem{lemma}{Lemma}[section]
\DeclarePairedDelimiterX\st[2]{\{}{\}}{\,#1 \;\delimsize\vert\; #2\,}
\DeclarePairedDelimiter\semant{\llbracket}{\rrbracket}
\author[ ]{Ryan Bernstein}
\affil[ ]{Columbia University}
\affil[ ]{\textit{ryan.bernstein@columbia.edu}}
\date{\today}
\title{Multi-Model Probabilistic Programming}
\begin{document}

\maketitle

\begin{abstract}
Probabilistic programming makes it easy to represent a probabilistic model as a program.
Building an individual model, however, is only one step of probabilistic modeling.
The broader challenge of probabilistic modeling is in understanding and navigating spaces of alternative models.
There is currently no good way to represent these spaces of alternative models, despite their central role.

We present an extension of probabilistic programming that lets each program represent a network of interrelated probabilistic models.
We give a formal semantics for these \emph{multi-model probabilistic programs}, a collection of efficient algorithms for network-of-model operations, and an example implementation built on top of the popular probabilistic programming language \emph{Stan}.

This network-of-models representation opens many doors, including search and automation in model-space, tracking and communication of model development, and explicit modeler degrees of freedom to mitigate issues like \(p\text{-hacking}\).
We demonstrate automatic model search and model development tracking using our Stan implementation, and we propose many more possible applications.
\end{abstract}
\pagebreak
\tableofcontents
\pagebreak

\cleardoublepage
\section{Introduction}
\label{sec:org92a967f}
\label{intro}
A probabilistic program is a program that represents a probabilistic model.
Probabilistic programming suggests an analogy between software engineering and probabilistic modeling: developing a model is like developing a program.
However, there is a key difference: while there are only ever a handful of relevant versions of a program in development, there are often a multitude of alternative probabilistic models that remain relevant throughout development, communication and validation.
Probabilistic programming systems currently ignore this multiplicity.

We present an extension of probabilistic programming that lets users encode \emph{networks of models}, graphs of models connected by similarity.
Just as probabilistic programs let users represent and query probabilistic models, these \emph{multi-model probabilistic programs} let users represent and query networks of probabilistic models.
\subsection{Uses of networks of models}
\label{sec:orgec79dc2}
   \label{uses}
We focus on four categories of use cases for networks of probabilistic models:

\begin{enumerate}
\item \textbf{Automation in model-space}
\label{automation}

Networks of models allow us to deploy automation in model-space.
For example, algorithms can search for models or neighborhoods in the network that optimize a real-valued model scoring function.
We demonstrate a greedy graph search in \cref{birthday} and more advanced search methods are discussed in \cref{future-applications}.
Other examples of automation on networks include:
\begin{itemize}
\item \emph{Stacking} methods seek an optimal weighted subset of the models to combine into a high-performing ensemble or ``stacked'' model \cite{stacking}.
\item \emph{Projection prediction} methods search for the ``simplest'' model that reproduces the predictions of a gold standard model or ensemble \cite{projpred}.
\end{itemize}
\item \textbf{Understanding the problem and its solution space.}
\label{space}

Statisticians often use multiple related models to gain insight into their problem and solution space \cite{bwf}.
Examples include:
\begin{itemize}
\item Plotting alternative models by evaluation metrics to understand trade-offs such as model complexity vs. accuracy \cite{loo}.
\item Comparing diagnostic samples, such as posterior predictive samples, to understand the impact of model decisions \cite{postpred}.
\item Applying ``multiverse'' methods, which seek to quantify modeling uncertainty by sampling from a whole set of plausible models \cite{multiverse}.
\end{itemize}
\item \textbf{Tracking and communicating the branching path of development.}
\label{tracking}

Probabilistic model development is an inherently iterative and branching process: models are improved by a cycle of criticism and adjustment \cite{bwf,box}.
Development follows an often-backtracking path through model space; a path that model developers often have difficulty managing \cite{guo,patel,variolite}.

When the relevant region of model space is explicitly represented as a network of models, the path of development can also be explicitly documented.
This documentation is useful for managing development \cite{variolite}, for third parties interested in learning from or extending the work, and, as discussed in case \cref{dof}, as context for third-party auditors assessing the quality of the final model.

Standard version control tools like Git can serve this use case to some extent, but as the case study in \cref{golf} shows, the network of models abstraction earns more than version control: it keeps all relevant models available and tracks their semantic relationships.
Two surveys of data scientists, one by \citeauthor{guo} and one by \citeauthor{variolite}, both report that version control is not typically used for managing exploratory data analysis \cite{guo,variolite}.
\item \textbf{Making modeler degrees of freedom explicit.}
\label{dof}

\(P\text{-hacking}\) \cite{p-hacking} and the Garden of Forking Paths \cite{garden-of-forking-paths} are issues in statistics with a common root cause called \emph{modeler degrees of freedom} \cite{degrees-of-freedom}: when a modeling task, such as the analysis of scientific data, includes modeling decisions with multiple justifiable solutions, the modeler can, intentionally or not, tune their solutions to cherry-pick a desirable model, such as a model with a ``significant'' \(p\text{-value}\).

There are two ways in which explicit networks of models can alleviate the issues of modeler degrees of freedom.
The first way is to aid in reporting the set of analyses that were done.
\Citeauthor{playing}, discussing questionable research practices, argue: ``the potentially (highly) questionable part of your actions as a researcher is not that you engage in all kinds of exploratory analyses. Instead, the questionable part is not reporting truthfully and explicitly the exploratory nature of these analyses'' \cite{playing}.
When researchers explicitly report their path of exploration through the network of models, their degrees of freedom become transparent.

The second way is to automate sensitivity analysis.
Sensitivity analyses aim to ``assess whether altering any of the assumptions made leads to different final interpretations or conclusions'' \cite{leukaemia}.
They are broadly recommended for scientific reports such as clinical trials \cite{sensitivity-clinical-trials}.
In a survey of solutions to researcher degrees of freedom, \citeauthor{four-solutions} finds that sensitivity analyses ``provide an effective solution to the \(p\text{-hacking}\) problem'', and also the Forking Paths problem in some cases\footnote{Rubin distinguishes between result-biased and result-neutral Forking Paths, and finds that sensitivity analysis is not necessarily sufficient for result-neutral cases.} \cite{four-solutions}.
Explicit networks of models let both researchers and third-party auditors automate sensitivity analysis by a simple procedure: for each conclusion drawn from a final model, check the extent to which that conclusion is also drawn by the model's neighbors.
When each neighbor differs by one modeling decision, the analysis tells us the modeling decisions on which the conclusions depend.
\end{enumerate}

Some of these use cases, like automated model search and visualizations of model space, are achieved in practice only by those with sufficient time and expertise to implement ad-hoc methods with hand-enumerated sets of models.
Other use cases, like explicit modeler degrees of freedom and automated sensitivity analysis, are rarely ever achieved in practice despite being valuable in theory.
We argue that all of these use cases could become convenient and routine if we had a standard representation of networks of models.

We note that each of the four use cases above has a natural definition and utility for drawing edges between models:
\begin{itemize}
\item For use case \cref{automation} (automation), edges should be between the most similar models: the network is then more analogous to a continuous and differentiable space, and methods like greedy graph search more closely approximate gradient descent.
\item For use case \cref{space} (solution space mapping), again edges should be between the most semantically similar models; edges then provide a more consistent sense of distance and orientation.
\end{itemize}

\begin{itemize}
\item For use case \cref{tracking} (tracking and communicating development), edges should bridge sequential versions of a model. 
Authors and auditors then can cleanly trace model development, with each of its decision points and model transformations, as a tree within the network.
\item For use case \cref{dof} (explicit degrees of freedom), edges should be between model pairs that are that are one ``decision'' apart; then each edge is like a step in the Garden of Forking Paths.
\end{itemize}

Ideally, a standard construction of model networks should be compatible with all of these use cases.

\subsection{Representing networks of models}
\label{sec:org0d0d45a}
To support all of the above use cases, representations of networks of models should support the following operations efficiently:
\begin{enumerate}
\item \label{usecase:network} \textbf{Explicitly generate the network of models (if practical).}
\item \label{usecase:neighbors} \textbf{Given a node, generate its set of neighbors.} This operation becomes necessary when the network is too large to practically generate in its entirety; for instance, when we are searching through a large model space.
\item \label{usecase:concretize} \textbf{Given a node, generate its corresponding probabilistic program.}
\end{enumerate}

\noindent In addition, we would like our representation to be:
\begin{enumerate}
\item \textbf{Easy to read and understand.} Especially for cases \cref{tracking,dof}, clarity is the priority.
\item \textbf{Simple to write.} Probabilistic modelers may not be expert programmers.
\item \textbf{Scalable to many models.} If a representation is too redundant, large numbers of models become cumbersome.
\item \textbf{Standardized.}

A standard format enables a stable network-of-models API to easily build general model-space tools.
\end{enumerate}

One obvious choice for representing multiple models is a directory of relevant probabilistic programs.
According to surveys done by \citeauthor{guo} and \citeauthor{variolite}, this is the typical solution among data scientists \cite{guo,variolite}.

While appealing for their simplicity, file collections are a poor solution because they become uninterruptible, unmanageable, and memory intensive as the number of models grows large, and because they discard the semantic relationships between models.
\citeauthor{variolite} also report that data scientists have issues with file naming, keeping track of the relationships between files, and maintaining a mental map of their code \cite{variolite}.

Another possible choice of representation is a program in a general-purpose language that generates the set of probabilistic programs.
This approach is flexible and scalable, but its flexibility also makes it prohibitively difficult to write, understand, and standardize.

Our proposed representation is a middle ground: a meta-programming feature to augment existing probabilistic programming languages, so that meta-programs, which we call multi-model programs, represent networks of probabilistic models.
We argue that our meta-programming approach is nearly as flexible as general program generation and easier to write and understand than the directory-of-programs approach for nontrivial examples.
We call our meta-programming feature \emph{swappable modules}.

\subsection{Swappable modules in Stan}
\label{sec:org115cfc7}

To demonstrate swappable modules, we use Stan as our example host language, mainly because Stan is popular, performant, and has a clear, established semantics \cite{stan-manual,slicstan}.
Stan is also a highly structured and restrictive language, so by adding a swappable modules to Stan, we are demonstrating an usually difficult case that can easily be transferred to other languages.

We refer to the Stan language augmented with swappable modules as \emph{modular Stan}.
We have built a prototype compiler \cite{mstan-github} and interactive visualization website \cite{website} for modular Stan.

\begin{figure}[htbp]
\centering
\includegraphics[width=400pt]{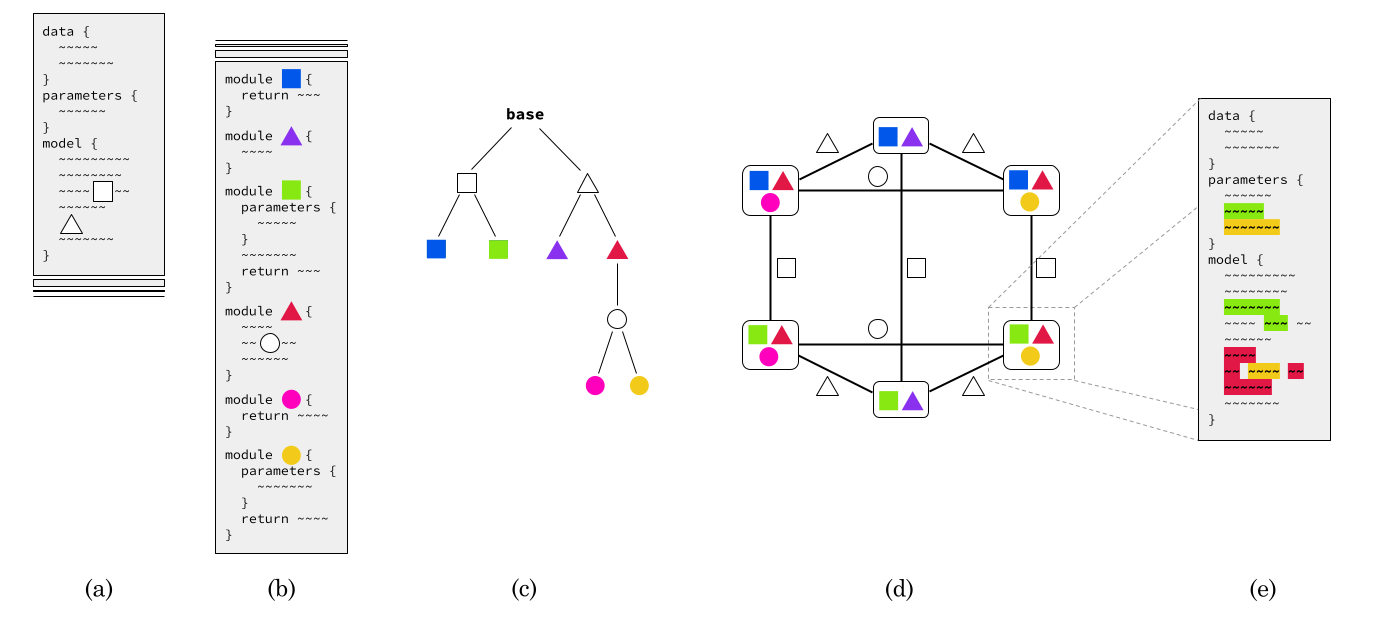}
\caption{\label{fig:color-shape}The correspondence between a modular Stan program and a network of models. Holes are represented as empty shapes. Modules are represented as filled shapes, with the same shape as the hole that they fill and a unique color. (a) and (b) represent a modular Stan program in its two parts, the base (a) and the module implementations (b). (c) represents the relationships between the base, holes and modules implementations. (d) represents the model graph, where each node is a model corresponding to a valid selection of modules, and each edge is labeled with the hole by which its endpoints differ. (e) shows the Stan program corresponding to a node; it is synthesized from the selected modules of the node.}
\end{figure}

\Cref{fig:color-shape} shows an abstracted modular Stan program and two data structures derived from it.

A modular Stan program is made of two parts.
The first part is the \emph{base}.
The base is like a Stan program, except that its code can contain \emph{holes}, which are syntactically similar to function calls but are more flexible.
The second part is a list of \emph{module implementations}.
Each module implementation describes a new way to ``fill'' a particular hole.
More than one module implementation be specified to ``fill'' the same hole.
Module implementations can themselves contain holes.

We can visualize the holes and implementations, with their implementation-fills-hole and code-contains-hole relationships, as a rooted directed acyclic graph of base, hole and module implementation nodes, where the root represents the program's base, each base and module implementation node points to the holes it contains, and each hole node points to the module implementations that fill it. We refer to this graph as the \emph{module graph} of a program.
The module graph is not necessarily a tree because multiple pieces of code can contain the same hole.

To build a valid Stan program out of a modular Stan program, we need to \emph{select} one module implementation to fill each hole in its base, and one implementation to fill each hole in those implementations, and so on, until there are no holes.
A minimal set of module implementations that leaves no empty holes is called a \emph{valid selection}.
Given a modular Stan program and a valid selection, we can produce a valid Stan program by filling each hole with the given module implementation.

A modular Stan program can therefore represent many Stan programs: one for each valid selection.
If we consider each of these Stan programs to be a node, we find a natural network structure: draw an edge between two Stan program nodes when their selections conflict by only one choice.
\Cref{fig:color-shape} shows such a graph with correspondence of each single node to a Stan program and thereby to a probabilistic model.
We argue that this graph construction is a powerful way to represent networks of models.

Holes and module implementations are like function calls and function definitions, but with important differences:
\begin{itemize}
\item Selecting a module implementation can add model parameters and have other changes that are not local to their call site.
Stan requires the set of parameters to be fixed at compile time; therefore modules are selected before Stan's compile time.
\item More than one module implementation can be specified to fill a hole, like function overloading but with identical type signatures.
\item Because module implementation code is essentially inlined statically, holes can appear in places where function calls cannot.
\end{itemize}

By defining a module system for Stan, we are also incidentally letting users write their normal Stan programs in a more modular way.
Modularity of this level is a feature that is both important for probabilistic modeling workflow \cite{bwf} and conspicuously absent from probabilistic programming languages like Stan \cite{bob-probprog}.

Probabilistic programming languages augmented with swappable modules meet all of our earlier criteria to represent networks of models:
\begin{itemize}
\item They are easy to read and write, because only one abstraction (the \texttt{module}) and a small amount of syntax is introduced on top of the host language, and the \texttt{module} abstraction already fits naturally into the probabilistic modeling workflow \cite{bwf}.
\item They are scalable, because the combinatory nature of implementation selection can define many models with little code. The language extensions in \cref{extensions} further increase that expressiveness.
\item They are standardized, because compilers enforce well-defined languages.
\item The required network-building, neighbor-finding and model-selection operations can be implemented efficiently.
\end{itemize}

In addition, the definition of edges as between models that differ by one choice module is compatible with all four uses cases' definitions of edges.

\bigskip
In this paper we introduce the syntax, semantics, and operation algorithms of a swappable module system for Stan.
For each algorithm we provide proofs of completeness, correctness and efficiency.

We also introduce language extensions and macros to conveniently express common patterns of model variation in terms of our module system.
We also introduce a macro system to concisely express common patterns of model variation and large, complex model families.

We demonstrate some of the benefits of multi-model probabilistic programs with two brief but real-world data science case studies.
\section{Related work}
\label{sec:orgeec4b64}
The 'network of models' is an established concept in statistics and machine learning, but explicit and general network of models abstractions are not supported within probabilistic programming languages or other statistical software. For instance, multi-model methods like ensemble methods and model search are common in machine learning, but are typically ad-hoc in that they do not start with a declarative network topology for their set of model. The closest method that we are aware of is perhaps grid search, which searches a pre-defined grid of model hyperparameters.

There are existing probabilistic programming systems which allow users to define their programs by combining probabilistic subcomponents. 
For example, Prophet \cite{prophet} allows users to combine subcomponents into time-series models. 
This effect could also be achieved in embedded probabilistic programming languages whose host languages have sufficiently expressive module systems. 
We are not aware of any other module systems specialized to encapsulate flexible components of structured probabilistic programs.

\Citeauthor{variolite} developed a tool for code editors, Variolite, to support exploratory data science by tracking alternative snippets of code in version control.
Variolite lets developers write, visualize and manage iterative and branching versions of data science pipelines in a similar way to our proposal.
Variolite addresses what we call use case \cref{tracking} (development tracking).
Variolite differs from our system primarily in that it is a tool for code editors rather than a metaprogramming feature, it does not produce a network of models, and because it is language agnostic, it only supports swapping out regions of code rather than more general semantic units.

There are existing systems that allow users to specify a program's components at compile-time.
Backpack \cite{backpack} is a build-system tool for the Haskell ecosystem that lets users swap out external software libraries that implement a common interface.
Much of work done by \citeauthor{backpack} to introduce Backpack as a mixin linker also applies to our swappable module system.
In practice, the C preprocessor is often used for this purpose; it can include or exclude sections of code depending on user flags or system environment properties.
These systems are not commonly applied to probabilistic programs or to study networks of programs.

\subsection{Comparison to ML-like module systems}
\label{sec:org9996ee5}
Our swappable module system bears some resemblance to ML-like module systems \cite{ml,ml-modules}. We find OCaml to be a helpful comparison point \cite{ocaml-modules}.
The \texttt{hole} and \texttt{module} approach can be understood in the language of ML-like modules: 

Each \texttt{hole} in a program, which is to say each unique hole identifier referenced in a program, can be thought of as declaring a module signature and a module-valued variable of that signature.
Each statement and expression referencing a \texttt{hole} is like a reference to a field of that \texttt{hole}'s corresponding variable.
A \texttt{hole}'s variable may take the value of any \texttt{module} that ``implements'' it.
A \texttt{hole}'s signature is inferred from its usage and implementations.
The value assigned to a \texttt{hole}'s variable may either be specified outside of the program or left non-deterministic.
Blocks of code that contain \texttt{hole}s can then be thought of as ML functors in that their \texttt{hole}s are like implicit module-valued arguments.

However, unlike ML-like modules:
\begin{enumerate}
\item Modules are not applied at any point in the program. Rather, the programs' semantic domain is the set of programs generated by any combination of module applications, as though module applications were non-deterministic; then the user can determine the modules to apply in an optional mode of compilation called \emph{concretization} that we discuss in \cref{concretize}.
\item When a \texttt{module} is assigned to a \texttt{hole}, or \emph{selected}, there can be global effects on the semantics of the program, because the \texttt{module} may add code to Stan's top-level blocks. For instance, a \texttt{module} may introduce a new model parameter, which changes the domain over which the program defines a joint distribution.
This effect is especially noteworthy in languages like Stan in which model parameters are fixed at compile time, because it implies that module application must happen before compile time.
\end{enumerate}

We are not aware of prior examples of inferred implicit module signatures, non-deterministic functor application semantics, or module application with non-local effects on the resulting program.

\section{Background: Stan}
\label{sec:org583f226}
  \label{stan-syntax}
Stan is our example host language, so we give a brief overview of Stan programs and of Stan's syntax in this section.

Like all probabilistic programs, Stan programs represent probabilistic models.
Stan programs are C-like, imperative, and are written as a sequence of top-level blocks.
Here is a simple example of a Stan program:

\begin{verbatim}
data {
    int N;
    vector[N] x;
}
parameters {
    real mu;
    real sigma;
}
model {
    mu ~ normal(0, 1);
    sigma ~ lognormal(0, 1);
    x ~ normal(mu, sigma);
}
\end{verbatim}

We see three of Stan's blocks: \texttt{data}, \texttt{parameters}, and \texttt{model}.
The \texttt{data} block declares the observed variables and the \texttt{parameters} block declares the unobserved variables. The \texttt{model} block define the log-density of the joint distribution of the observed and unobserved variables. Each \texttt{\textasciitilde{}} statement implicitly increments a variable \texttt{target} that represents the value of the overall log-density function; for instance, \texttt{mu \textasciitilde{} normal(0, 1);} could be rewritten \texttt{target += normal\_lpdf(mu, 0, 1);}.

The above program represents the probabilistic model:
 \begin{align*}
   &\mu \sim Normal(0, 1)\\
   &\sigma \sim LogNormal(0, 1)\\
   &x \sim Normal(\mu, \sigma)
\end{align*}
where \(x\) is an observed variable. When the program is compiled and executed given data for \(x\), it should produce samples from the posterior distributions \(P(mu | x)\) and \(P(sigma | x)\).

For our purposes, it is sufficiently precise to take the semantic domain of a Stan program \(p\), \(\semant{p}\), to be a joint distribution \(P(d, \theta)\) of the variables declared in the \texttt{data} block, \(d\), and \texttt{parameters} block, \(\theta\), of \(p\), from which can be inferred a posterior distribution \(P(\theta \mid d)\).

\subsection{Syntax}
\label{sec:org736e29a}
Stan programs are organized into blocks of statements that describe different aspects of a probabilistic model.
Each is an ordered subset of blocks:
\begin{verbatim}
STAN_PROG: FUNCTIONS?
           DATA?
           TRANSFORMED_DATA?
           PARAMETERS?
           TRANSFORMED_PARAMETERS?
           MODEL?
           GENERATED_QUANTITIES?
\end{verbatim}
Here \texttt{?} indicates that the block may or may not be present.

Below is the syntax of the \texttt{data} block:
\begin{verbatim}
DATA: data { STMT_DECL;* }
STMT_DECL: TYPE identifier
TYPE: int | real | vector | matrix | ...
\end{verbatim}
Here, \texttt{identifier} stands in for valid Stan variables names.
The \texttt{*} symbols are Kleene stars to indicate an element that can be repeated or absent.

The \texttt{parameters} block is similar\footnote{Strictly speaking, we should not allow discrete types like \texttt{int} to be declared as \texttt{parameters}.} :
\begin{verbatim}
PARAMETERS: parameters { STMT_DECL;* }
\end{verbatim}

The \texttt{model} block is more flexible:
\begin{verbatim}
MODEL: model { STMT_LPDF;* }
STMT_LPDF: STMT_BASIC
         | identifier ~ identifier( EXPR,* );
         | target+= EXPR;
STMT_BASIC: STMT_DECL | STMT_ASSIGNMENT | STMT_FOR
          | STMT_IFELSE | STMT_FUNCTION_APPLICATION | ...
\end{verbatim}
Here, \texttt{target} is a reserved variable in Stan that represents the accumulated log-value of the density function.

\texttt{STMT\_BASIC} and \texttt{EXPR} closely resemble C-like languages, so we omit their details here.

Stan allows user-defined function declared in the \texttt{functions} block:
\begin{verbatim}
FUNCTIONS: functions { FUNC_DECL* }
FUNC_DECL: RET_TYPE identifier ((TYPE identifier),*) { STMT_FUNC;* }
STMT_FUNC: STMT_BASIC | return EXPR
RET_TYPE: TYPE | void
\end{verbatim}

There are three more blocks: 
\begin{verbatim}
TRANSFORMED_DATA: transformed data { STMT_BASIC;* }
TRANSFORMED_PARAMETERS: transformed parameters { STMT_LPDF;* }
GENERATED_QUANTITIES: generated quantities { STMT_BASIC;* }
\end{verbatim}

The \texttt{transformed data} and \texttt{transformed parameters} blocks let users define transformed versions of the observed and hidden variables in a way that works efficiently with the inference process. The \texttt{generated quantities} block lets users define output quantities calculated from the samples of the parameters.

\subsection{Effects and Scope}
\label{sec:orgf546d73}
\label{stan-effects-scope}
Let \texttt{Block} be the set of Stan block types, \(\{\texttt{data}, \texttt{parameters}, \texttt{model}, \dots\}\).

Stan statements and expressions are sometimes allowed to be impure in particular ways depending which block contains that code.
We call those impurities \emph{effects.}
We say that when code uses the random number generator, it has the \texttt{RNG} effect, and when it increments the program's density function with \texttt{\textasciitilde{}} or \texttt{target+=} statements, it has the \texttt{LPDF} effect.
The set of effects \texttt{Eff} is then \(\{\texttt{RNG}, \texttt{LPDF}\}\).
Stan's specification implicitly defines a mapping \texttt{effects} from some \(block \in \texttt{Block}\) to the set of effects allowed within that block, \(\texttt{effects}(block)\subset \texttt{Eff}\).

In Stan programs, the declarations that a statement may reference depends on the statement's block and the declaration's block.
For instance, code in a \texttt{transformed data} block can reference top-level declarations in a \texttt{data} block but not in a \texttt{parameters} block.
To know whether it is valid to inserting new statements into a given block, therefore, we need to know to which blocks' declarations that statement is allowed to refer.
Stan's specification implicitly defines a mapping \texttt{scope} from some \(block \in \texttt{Block}\) to the set of blocks whose top-level declared variables statements that block may reference, \(\texttt{scope}(block) \subset \texttt{Block}\).

\subsection{Program validity}
\label{sec:orgd53f428}
\label{valid-stan}

We define \(valid_{\texttt{Stan}}\) to denote whether a program is \emph{valid}, or roughly whether we expect it to compile.
Let \(valid_\texttt{Stan}(SP)\) for a Stan program \(SP\) if and only if all of the following are true:
\begin{enumerate}
\item \textbf{Effects and scopes of code are available in their block}.
\item \textbf{\(SP\) typechecks as Stan code.}
\end{enumerate}

\section{Modular Stan syntax}
\label{sec:orgeca1ab8}
\label{mstan-syntax}

Below is an example modular Stan program:
\begin{verbatim}
data {
  int N;
  vector[N] x;
}
model {
  x ~ normal(Mean(), Stddev());
}

module "standard" Mean() {
  return 0;
}

module "standard" Stddev() {
  return 1;
}

module "normal" Mean() {
  parameters {
    real mu;
  }
  mu ~ normal(0, 1);
  return mu;
}

module "lognormal" Stddev() {
  parameters {
    real<lower=0> sigma;
  }
  sigma ~ lognormal(0, StddevInformative());
  return sigma;
}

module "yes" StddevInformative() {
  return 1;
}

module "no" StddevInformative() {
  return 100;
}
\end{verbatim}

In the above program, the base is made up of the \texttt{data} and \texttt{model} blocks, \texttt{Mean}, \texttt{Stddev} and \texttt{StddevInformative} are holes, and each block starting with the keyword \texttt{module} are the module implementations.

The above modular Stan program has similar structure to the example in \cref{fig:color-shape}.
It also includes the example program \cref{stan-syntax} as one of its nodes (where \texttt{Mean} is filled by \texttt{normal}, \texttt{Stddev} is filled by \texttt{lognormal}, and \texttt{StddevInformative} is filled by \texttt{yes}).

Modular Stan makes two additions to Stan's syntax: \emph{Holes} and \emph{module implementations.}

Holes are statements or expressions that are syntactically similar to function applications.
We define variants of the Stan syntax rules that are allowed to include holes.
For each Stan grammar rule \texttt{RULE} that can directly or indirectly contain a \texttt{STMT\_BASIC}, \texttt{STMT\_LPDF}, or \texttt{EXPR}, we define a new rule \texttt{RULE\_M} that replaces those rules with the following \texttt{STMT\_BASIC\_M}, \texttt{STMT\_LPDF\_M}, and \texttt{EXPR\_M} rules, respectively.
\begin{verbatim}
STMT_BASIC_M: hole_identifier(EXPR_M,*);
            | STMT_DECL | STMT_ASSIGNMENT_M | STMT_FOR_M
            | STMT_IFELSE_M | STMT_FUNCTION_APPLICATION_M | ...

STMT_LPDF_M: identifier ~ hole_identifier(EXPR_M,*); 
           | identifier ~ identifier( EXPR_M,* );
           | target+= EXPR_M;
           | STMT_BASIC_M
EXPR_M: hole_identifier( EXPR_M,* ) | ...
\end{verbatim}
We use \texttt{hole\_identifier} to stand in for valid hole names.

Module implementations are reminiscent of function definitions, and appear at the top level alongside blocks:

\begin{verbatim}
MODULE_IMPLEMENTATION_M:
  module "impl_identifier" hole_identifier((TYPE identifier,)*) {
    PARAMETERS?
    STMT_LPDF_M;*
    return EXPR_M;*
  }
\end{verbatim}
We use \texttt{impl\_identifier} to stand in for valid implementation names, while \texttt{module} is a new keyword.

A modular Stan program is then:

\begin{verbatim}
MODULAR_STAN_PROG: STAN_PROG_M
                   MODULE_IMPLEMENTATION_M*
\end{verbatim}

We make two small additions to the syntax and capabilities of modular Stan in \cref{extensions}, but this base syntax is sufficient to introduce its semantics and algorithms.
\section{Modular Stan semantics}
\label{sec:orgb3f8bd1}
\label{semantics}
\subsection{Basic operations on programs}
\label{sec:orgfd4863d}
We assume that there is some parsing procedure \texttt{Parse} such that, for all strings \(F\) of the language defined by the syntax in \cref{mstan-syntax}, \(P = \texttt{Parse}(F)\), where \(P\) is some reasonable representation of \(F\) that we refer to loosely as a ``modular Stan program''. 

While the actual representation of a program \(P\) is an implementation detail, we can think of \(P\) as effectively a pair \(P = (P_{base}, impls(P))\), where \(P_{base}\) represents the Stan-like \emph{base} of \(P\) and \(impls(P)\) is the set of all module implementations defined in \(P\).
\(P\) also implicitly includes the set of all holes referenced by \(P_{base}\) and \(impls(P)\).

We likewise are not concerned with the representations details of implementations or holes, we only need to define operations on them.

Below are the basic operations we use to interact with programs \(P\), implementations \(i\), holes \(h\), sets \(I\) of implementations, and sets \(H\) of holes:

\begin{description}
\item[{\(impls(h)\)}] is the set of implementations that implement a hole \(h\).
\item[{\(impls(P)\)}] is the set of all implementations defined in \(P\).
\item[{\(impls(H)\)}] \(= \bigcup_{h \in H} impls(h)\).
\item[{\(holes(i)\)}] is the set of holes referenced in the definition of an implementation \(i\).
\item[{\(holes(P_{base})\)}] is the set holes referenced in the base of \(P\).
\item[{\(holes(I)\)}] \(= \bigcup_{i \in I} holes(i)\).
\item[{\(holes(P)\)}] \(= holes(P_{base}) \cup holes(impls(P))\).
\item[{\(par(i)\)}] is the hole that the implementation \(i\) implements, also called the \emph{parent} of \(i\).
\item[{\(pars(I)\)}] \(= \bigcup_{i \in I} pars(i)\).
\item[{\(pars(P)\)}] \(= pars(impls(P))\).
\end{description}

The above operations are specific to the context of a program \(P\). Since the intended \(P\) is usually clear, we only give the operation a subscript when disambiguation is necessary.

\bigskip

We note that \(par\) and \(impls\) operations are like inverses, so:
$$i \in impls(h) \Leftrightarrow h \in pars(i)$$
$$h \in pars(I) \Leftrightarrow \exists i \in I\ s.t.\ i \in impls(h)$$
$$i \in impls(H) \Leftrightarrow \exists h \in H\ s.t.\ h = par(i)$$

It is also be useful to note that, for all \(I_1 \subset I_2\):
$$holes(I_1) \subset holes(I_2)$$
$$impls(I_1) \subset impls(I_2)$$
$$pars(I_1) \subset pars(I_2)$$
We also need to query certain syntactic elements of the code.
Some of these operations are not fully detailed in the interest of brevity.

We call locations or spans \emph{sites}.
We define a \texttt{HoleSite} as some data structure that captures the syntactic information of a hole called within code.

These are the operations on \texttt{HoleSite}s \(hs\), programs \(P\), implementations \(i\), and blocks \(b\):

\begin{description}
\item[{\(sites(i)\) or \(sites(P_{base})\)}] is the set of hole sites in the code of \(i\) or \(P_{base}\), so that \(|sites(c)| \geq |holes(c)|\).

\item[{\(sites(P)\)}] is the set of hole sites in all of the code of \(P\).

\item[{\(site(b)\)}] is the site of the start of the code of block \(b\).

\item[{\(block(hs)\)}] is the \texttt{block} that contains \(hs\), if any.

\item[{\(hole(hs)\)}] is the hole that is called at \(hs\).

\item[{\(scope(i)\)}] is the set of \texttt{block}s whose top-level declarations \(i\) references.

\item[{\(effects(i)\)}] \(\subset \texttt{Eff}\) is the set of \texttt{effect}s whose top-level declarations \(i\) references.
\end{description}

We can syntactically break down implementations \(i\) into a triple, \((i_{body}\), \(i_{return}\), \(i_{parameters})\), so that:
\begin{description}
\item[{\(i_{body}\)}] is the sequence of statements that makes of the code of the implementation,
\item[{\(i_{return}\)}] is the expression returned, if any,
\item[{\(i_{parameters}\)}] is the sequence of declarations of parameters made by the implementation.
\end{description}
\subsection{Structural constraints}
\label{sec:org0fd3dce}
    \label{structural-constraints}
Not all modular Stan programs that can be parsed are valid; we also impose certain structural and semantic constraints.
Input programs that do not meet the constraints are be rejected by the compiler.

Below are the structural constraints on a modular program \(P\):
\begin{enumerate}
\item \label{acyclic} The dependency graph of modules is acyclic:

For any graph \(G\), let \(N(G)\) be the set of nodes and \(E(G)\) be the set of edges.
Let the \emph{module dependency graph} \(MDG(P)\) be the directed graph with nodes \(N(MDG(P))=\texttt{impls}(P) \cup \{P_{base}\}\) and edges \(E(MDG(P)) = \st*{i_1 \rightarrow i_2}{i_1 \in impls(P) \cup \{P_{base}\}, i_2 \in impls(P), par(i_2) \in holes(i_1)}\).
We require that this graph is acyclic.
We also refer to this property as \(Acyclic(P)\).
\item \label{every-hole} Every hole has an implementation. \(\forall h \in holes(P) \cup holes(P_{base})\), \(impls(h) \neq \emptyset\).
\item \label{unique-ids} Hole identifiers are unique and (hole identifier, implementation identifier) pairs are unique.
\end{enumerate}

When a program \(P\) meets these constraints, we say \(valid_{structure}(P)\).
\subsection{Module signatures and semantic constraints}
\label{sec:orgdcf6368}
\subsubsection{Module Signatures}
\label{sec:orgc407e5b}
We would like to be able to guarantee that any concrete Stan program generated from a modular Stan program \(P\) is valid.
In order to do that, we need to understand the type, scope, and effect implications filling holes with their implementations.

To that end, we attempt to infer a \emph{signature} for every hole in \(P\).
A signature is like a function type plus extra information.
When no signature can be inferred for a hole, we reject the program as invalid.
Signatures let us specify semantic constraints on input programs, and are also be useful for generating Stan programs in \cref{concretize}.

A signature \(s\) is a tuple:
$$s = (s_{arg-types}, s_{ret-types}, s_{effects}, s_{scope})$$

\(s_{arg-types}\) specifies the argument types of a hole. \(s_{arg-types}\) is a sequence of Stan types, like the \texttt{TYPE} syntactic element introduced in \cref{stan-syntax}.

\(s_{ret-type}\) specifies the return type of a hole. \(s_{ret-type}\) is a Stan return type, like the \texttt{RET\_TYPE} syntactic element introduced in \cref{stan-syntax}.

\(s_{effects}\) refers to the set of effects that a hole's implementation may have.

\(s_{scope}\) refers to the scope of non-local variables that a hole's implementation may reference. 

\begin{enumerate}
\item Module Signature Inference
\label{sec:org5560ef2}
We define a procedure for either inferring the signature of a hole given the type determinations at each of the hole's call sites and all of the hole's implementations, or rejecting the program as invalid.

We rely on an operation \texttt{ReturnType} that infers the type of the expression returned by implementation code, if any, given that the types of all other expressions in that code are available.
Type annotating is a standard operation for compilers of typed languages including the Stan compiler \cite{stanc3-semantic-checking}.
We use the following interface:
$$\texttt{ReturnType}(arguments, i)$$
where \(arguments\) is the collection of variable names and types available within the code of the implementation \(i\).

We visit each hole in topological order by dependency, such that when we visit a hole \(h\), for all \(i \in impls(h)\), all of the holes \(h' \in holes(i)\) have already been visited.
This ordering is always possible because of the \(Acyclic(P)\) property.

\begin{align*}
  signature(h) = \big(&argtypes(impls(h)_0),\\
  &\texttt{ReturnType}(argtypes(impls(h)_0), impls(h)_0),\\
  &\bigcup_{i \in impls(h)} effects(i_{body}) \cup \bigcup_{h' \in holes(i)} signature(h')_{effects},\\
  &\bigcup_{i \in impls(h)} scope(i_{body}) \cup \bigcup_{h' \in holes(i)} signature(h')_{scope}\big)
\end{align*}

\(impls(h)_0\) is an arbitrary element of \(impls(h)\); \(impls(h)\) are never empty by structural constraint \cref{every-hole}.

This way, \(s_{arg-types}\) is assigned to the argument types of any of its implementations, \(s_{ret-types}\) is assigned to the type that can be inferred from any of its implementations with the return types of dependent holes, and \(s_{effects}\) \(s_{scope}\) are the unions of the effects and scopes required by any of the hole's implementations or descendants.
\end{enumerate}
\subsubsection{Semantic constraints}
\label{sec:orgbfd5f55}
\label{semantic-constraints}
We give a set of semantic constraints on programs in terms of signatures:

\begin{enumerate}
\item Implementations match signature argtypes. \(\forall i \in impls(P)\), \(i_{argtypes} = signature(par(i))_{argtypes}\)
\item Implementations match signature rettype. \(\forall i \in impls(P)\), \(i_{rettype} = signature(par(i))_{rettype}\)
\item Effects and scopes of holes are available in their block or module signature. 
\(\forall i \in impls(P), \forall st \in sites(i)\):
\begin{enumerate}
\item \(signature(st_{hole})_{effects} \subset signature(par(i))_{effects}\)
\item \(signature(st_{hole})_{scope} \subset signature(par(i))_{scope}\)
\end{enumerate}
\(\forall st \in sites(P_{base})\):
\begin{enumerate}
\item \(signature(st_{hole})_{effects} \subset effects(block)\)
\item \(signature(st_{hole})_{scope} \subset scope(block)\)
\end{enumerate}
\item Effects and scopes of code are available in their block or module signature.
\item \(P_{base}\) would typecheck under Stan if all holes \(h\) were function calls to function signatures with \(signature(h)_{argtypes}\) parameter types and \(signature(h)_{rettype}\) return type.
\item The body of each implementation \(i\) would typecheck under Stan if it were the body of a function with \(signature(par(i))_{argtypes}\) parameter types and \(signature(par(i))_{rettype}\) return type, and if the module-defined parameters \(i_{parameters}\) were included as model parameters.
\end{enumerate}

When a program \(P\) meets these constraints, we say \(valid_{semantics}(P)\).

Constraint 6 defines the scope available to code within modules: local variables, module arguments, module-defined global variables (such as parameters), and base-defined global variables (such as parameters).
\subsection{Program validity}
\label{sec:orgcce24bb}
We say that a modular Stan program \(P\) is valid if \(P\) meets both the structural constrains in \cref{structural-constraints} and the semantic constraints in \cref{semantic-constraints}: \(valid(P) = valid_{structure}(P) \land valid_{semantics}(P)\).
\subsection{Selections}
\label{sec:org002e592}
\emph{Selections} are subsets of the implementations in a modular program.
We call them ``selections'' because subsets of implementations are ``selected'' as components to build a concrete Stan model.
When a selection specifies a Stan program and has no extra implementations, the selection is \emph{valid}.

Next we give formal criteria to recognize valid selections, \(valid_P(I)\).
We show in \cref{concretize} that selections satisfying \(valid_P(I)\) are exactly those needed to define concrete Stan programs.
We first need to define \(siblings\), an intersection-like operation on selections.
\(siblings(I_1, I_2)\) is the set of pairs of implementations across two sets that are different but share a parent:
$$siblings(I_1, I_2) = \st*{(i_1, i_2)}{i_1 \in I_1, i_2 \in I_2, i_1 \neq i_2, par(i_1) = par(i_2)}$$

We note a useful property:
$$I_1 \subset I_2 \implies \forall I', siblings(I_1, I') \subset siblings(I_2, I')$$

\bigskip
The property \(valid_P(I)\) is true if and only if all of the following three criteria are met:
\begin{enumerate}
\item The selection only includes implementations that are in the program: $$I \subset impls(P)$$
\item Every hole in the program base and the selection has an implementation in the selection, and no extra implementations are included: $$impls(I) = holes(P_{base}) \cup holes(I)$$
\item The selection does not include any pair of implementations that implement the same hole, as these would be contradictory definitions: $$siblings(I, I) = \emptyset$$
\end{enumerate}

The following are convenient restatements of the above properties 2 and 3:
\begin{enumerate}
\item Each hole found in the program base and the selection has exactly one implementation in the selection: $$\forall h \in holes(P_{base}) \cup holes(I),\ |\st*{i}{i\in I, par(i) = h}| = 1$$
\item \(\forall h \in pars(I)\), \(h \in holes(I)\) or \(h \in holes(P_{base})\).
\end{enumerate}
\subsection{Semantic domain}
\label{sec:orgba8a0ea}
We are now equipped to formally define our high-level operations \cref{usecase:network,usecase:neighbors,usecase:concretize} and the semantic domain.

\(\texttt{Concretize}(P, I)\) is the concrete Stan program that results from including each implementation from the set valid selection \(I\) into the base of the valid modular program \(P\).
We consider an implementation of \texttt{Concretize} correct if for all modular programs \(P\) and selections \(I\) such that \(valid(P)\) and \(valid_P(I)\):

\begin{enumerate}
\item \(valid_{\texttt{STAN}}(\texttt{Concretize}(P, I))\)
\item \(\texttt{Concretize}(P, I)\) includes the same set of Stan statements in \(P_{base}\) and \(I\).
\end{enumerate}

\(\texttt{ModelGraph}(P)\) is the graph of all valid selections of \(P\), connected if they disagree on implementation of one hole.
\begin{align*}
  N(\texttt{ModelGraph}(P)) &= \st*{I}{I \subset impls(P), valid_P(I)}\\
  E(\texttt{ModelGraph}(P)) &= \st*{(I_1, I_2)}{I_1, I_2 \in N(\texttt{ModelGraph}(P)), |siblings(I_1, I_2)| = 1}
\end{align*}

\(\texttt{ModelNeighbors}(P, I)\) is the set of all selections that share an edge with \(I\) in the model graph:
   \(\texttt{ModelNeighbors}(P, I) = \st*{I'}{(I, I') \in E(\texttt{ModelGraph}(P))}\)

We can give a formal semantics of a valid modular program \(P\) in these terms: \(\semant{P}\) has the same graph structure as \(\texttt{ModelGraph(P)}\), but with the selection-valued nodes replaced by the corresponding probabilistic models:

\[N(\semant{P}) = \st*{\semant{\texttt{Concretize(P, I)}}}{I \in N(\texttt{ModelGraph}(P))}\]

In this way, a valid modular program \(P\) represents the graph of all probabilistic models that can be produced by recombination of modules, with connections between models that differ by only one choice of module.
\section{Algorithms}
\label{sec:org98db867}
\label{algorithms}
\subsection{\texttt{Concretize}}
\label{sec:orga63581d}
  \label{concretize}
In this section we develop a function \(\texttt{Concretize}(P, I)\): the concrete Stan program that is derived from a modular Stan program \(P\) by applying the implementations of a valid selection set \(I\).

We build \(\texttt{Concretize}\) by careful use of function inlining as a subroutine.
We use the following simplified interface:

\begin{verbatim}
InlineFunction(Program, CallSite, Stmts, Params)
InlineFunction(Program, CallSite, Stmts, Params, Return)
\end{verbatim}

Here, \texttt{Program} is the whole program to be updated, \texttt{CallSite} is a data structure indicating the span of code to be replaced, \texttt{Stmts} are a list of statements that make up the body of the function, \texttt{Params} is a list of the function's parameters, and \texttt{Return}, when present, is return expression (modules can only have zero or one \texttt{return} statements).
When we use \texttt{InlineFunction}, we take \texttt{Program} to be a modular Stan program, \texttt{CallSite} to be a \texttt{HoleSite}, \texttt{Stmts} to be \texttt{STMT\_FUNC\_M+}, \texttt{Params} to be \texttt{STMT\_DECL+}, and \texttt{Return} to be \texttt{EXPR\_M}.

\texttt{InlineFunction} is an operation standard in most optimizing compilers, including the Stan compiler \cite{opt-docs}.

\texttt{InlineFunction} assumes the following preconditions:
\begin{enumerate}
\item \texttt{Program} typechecks in the scope of \texttt{CallSite}.
\item \texttt{CallSite} supplies arguments matching \texttt{Params}.
\item The function represented by \texttt{Stmts}, \texttt{Params} and \texttt{Return} typechecks.
\end{enumerate}

Then we assume the following properties of \texttt{Program'} \texttt{=} \texttt{InlineFunction(Program,} \texttt{CallSite,} \texttt{Stmts,} \texttt{Params,} \texttt{Return)}:
\begin{enumerate}
\item References to \texttt{Params} in \texttt{Stmts} and \texttt{Return} are replaced by argument expression of \texttt{CallSite}.
\item \texttt{Stmts} are inserted in order before \texttt{CallSite} in the same scope as \texttt{CallSite}.
\item When \texttt{Return} is provided, \texttt{CallSite} is replaced by \texttt{Return}, which does not change the expression type.
\item \texttt{Program'} typechecks wherever \texttt{Program} typechecks.
\end{enumerate}

\subsubsection{\texttt{ApplyImpl}}
\label{sec:org1b65cc0}

We define an operation \(\texttt{ApplyImpl}(P, i)\), the result of taking a valid modular Stan program \(P\) and ``applying'' an implementation \(i \in impls(P)\), or using \(i\) to ``fill'' all instances of the hole \(par(i)\).
The hole that \(i\) fills, \(par(i)\), no longer appears in the resulting program.

\begin{verbatim}
ApplyImpl(P, i):
  sites = \st*{ site}{site \in HoleSites(p), site_{hole} = par(impl) }
  P' := P
  for site in sites:
    P' := InlineFunction(P', site, i_{body}, signature(st_{hole})_{arg-types}, i_{return})
  return InlineFunction(P', site(\texttt{parameters}), {}, i_{parameters})
\end{verbatim}

\noindent
\texttt{ApplyImpl} does not update module signatures.

\subsubsection{Correctness of \texttt{ApplyImpl}}
\label{sec:org2f5d003}
\begin{lemma}[$\texttt{ApplyImpl}(P, i)$ replaces the hole filled by $i$ with $i$'s holes]
\label{applyimpl-holes}
For all \(i' \in impls(P) \cup \{P_{base}\}\), \(holes_{\texttt{ApplyImpl}(P, i)}(i') = holes_P(i') \cup holes_P(i) - \{par(i)\}\) if \(par(i) \in holes_P(i')\) and \(holes_P(i')\) otherwise.
\end{lemma}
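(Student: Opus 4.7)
The plan is to trace, for each $i' \in impls(P) \cup \{P_{base}\}$, how the set $holes(i')$ evolves across the sequence of \texttt{InlineFunction} calls made by $\texttt{ApplyImpl}(P, i)$. The guiding observation is that a single \texttt{InlineFunction} call at a hole site $st$ lying inside $i'$ with $hole(st) = par(i)$ deletes that one occurrence of $par(i)$ from $i'$ and, in its place, inserts copies of $i_{body}$ and $i_{return}$, which together reference exactly the holes in $holes_P(i)$. Calls targeting sites outside $i'$ do not touch $i'$'s code. The final \texttt{InlineFunction} call, which inlines $i_{parameters}$ into the \texttt{parameters} block of $P_{base}$, affects no hole set, since $i_{parameters}$ consists of pure declarations (\texttt{STMT\_DECL}) that contain no hole calls.

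If $par(i) \notin holes_P(i')$, then $i'$ contains no site calling $par(i)$, so no iteration of the loop edits $i'$, and the conclusion $holes_{\texttt{ApplyImpl}(P,i)}(i') = holes_P(i')$ follows immediately from the observation above.

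If $par(i) \in holes_P(i')$, I would induct on the finite list of sites $st \in sites(i')$ with $hole(st) = par(i)$. Each iteration removes one such site from $i'$ and inserts copies of every element of $holes_P(i)$. Once all such original sites have been processed, the only way $par(i)$ could remain in $holes(i')$ is if some freshly inserted copy of $i_{body}$ or $i_{return}$ itself referenced $par(i)$; but $par(i) \in holes_P(i)$ would produce the edge $i \to i$ in $MDG(P)$, contradicting the acyclicity constraint \ref{acyclic} guaranteed by $valid_{structure}(P)$. Hence $par(i)$ is removed from $holes(i')$ while $holes_P(i)$ is added, yielding $(holes_P(i') \setminus \{par(i)\}) \cup holes_P(i)$ as claimed.

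The main obstacle is the last step of Case 2: arguing that no inlined copy of $i_{body}$ reintroduces $par(i)$ into $holes(i')$. This is the one place where the proof depends on more than local bookkeeping about \texttt{InlineFunction}, and it rests entirely on the acyclicity of $MDG(P)$. Every other part of the argument is a direct consequence of the documented pre- and post-conditions of \texttt{InlineFunction} and the definition of $\texttt{ApplyImpl}$.
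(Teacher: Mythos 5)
Your proof is correct and follows essentially the same route as the paper's: both arguments reduce to the observations that inlining replaces each occurrence of $par(i)$ in $i'$ with code whose holes are exactly $holes_P(i)$, and that $par(i) \notin holes_P(i)$ by acyclicity of $MDG(P)$ (otherwise $i \rightarrow i$ would be an edge). Your site-by-site induction is just a slightly more operational packaging of the paper's subset/superset bookkeeping, and your extra remark that the final \texttt{InlineFunction} call on $i_{parameters}$ introduces no holes is a harmless (and correct) addition.
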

\begin{proof}
If \(par(i) \not\in holes_P(i')\), \texttt{ApplyImpl} does not make any replacements in \(i'\), so \(holes_{\texttt{ApplyImpl}(P, i)}(i') = holes_P(i')\).

Otherwise, \(\texttt{ApplyImpl}(P, i)\) is made up of code from \(i\) and \(i'\), so:
\begin{align*}
holes_{\texttt{ApplyImpl}(P, i)}(i') &\subset holes_P(i) \cup holes_P(i') \\
&= (holes_P(i) \cap \{par(i)\}) \cup (holes_P(i) - \{par(i)\}) \\
&\ \ \ \ \ \ \cup (holes_P(i') \cap \{par(i)\}) \cup (holes_P(i') - \{par(i)\})
\end{align*}

\(holes_P(i) \cap \{par(i)\} = \emptyset\), because otherwise \(i\rightarrow i \in E(MDG(P))\), which would violate \(Acyclic(P)\) and \(valid(P)\).

\texttt{ApplyImpl} replaces all \(h \in holes_P(i') \cap \{par(i)\}\) with \(i_{body}\), so \(h \not\in holes_{\texttt{ApplyImpl}(P, i)}(i')\).

Therefore:
$$holes_{\texttt{ApplyImpl}(P, i)}(i') \subset (holes_P(i) - \{par(i)\}) \cup (holes_P(i') - \{par(i)\})$$

\texttt{ApplyImpl} does not remove any \(holes_P(i') - \{par(i)\}\), so \(holes_P(i') - \{par(i)\} \subset holes_{\texttt{ApplyImpl}(P, i)}(i')\).

\texttt{ApplyImpl} inserts \(holes_P(i) - \{par(i)\}\) as part of \(i_{body}\), so \(holes_P(i) - \{par(i)\} \subset holes_{\texttt{ApplyImpl}(P, i)}(i')\).

Therefore:
\begin{align*}
  holes_{\texttt{ApplyImpl}(P, i)}(i') =& (holes_P(i) - \{par(i)\}) \cup (holes_P(i') - \{par(i)\})\\
  =& holes_P(i) \cup holes_P(i') - \{par(i)\}
\end{align*}
\end{proof}

\begin{lemma}[$\texttt{ApplyImpl}(P, i)$ replaces the hole filled by $i$ with $i$'s holes]
\label{applyimpl-effects}
For all sites \(st \in sites(P)\) such that \(st_{hole} = par(i)\), the code \(i_{body}\) replacing \(st\) in \(\texttt{ApplyImpl}(P, i)\) has \(effects(i_{body}) \subset signature(par(i))_{effects}\), \(scope(i_{body}) \subset signature(par(i))_{scope}\), and for all holes \(h \in holes(i)\), \(signature(h)_{effects}\) \(\subset\) \(signature(par(i))_{effects}\) and \(signature(h)_{scope}\) \(\subset\) \(signature(par(i))_{scope}\).
\end{lemma}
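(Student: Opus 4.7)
The plan is to unpack the definition of \(signature(par(i))\) given by the module signature inference procedure and observe that \(i\), being an element of \(impls(par(i))\), contributes exactly the terms that appear on the left-hand sides of the four containments we need to establish.

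First I would recall the inference rule, which sets
\[
signature(par(i))_{effects} = \bigcup_{i' \in impls(par(i))} effects(i'_{body}) \cup \bigcup_{h' \in holes(i')} signature(h')_{effects}
\]
and analogously for \(signature(par(i))_{scope}\). Since \(i \in impls(par(i))\) (a consequence of \(i \in impls(P)\) together with the inverse relationship \(i \in impls(h) \Leftrightarrow h = par(i)\) noted in the basic operations section), the term \(effects(i_{body})\) appears as a summand of the right-hand union. This immediately yields \(effects(i_{body}) \subset signature(par(i))_{effects}\), and the same argument with \(scope\) in place of \(effects\) gives the scope containment. For any \(h \in holes(i)\), the summand \(signature(h)_{effects}\) likewise appears in the outer union (taking \(i' = i\), \(h' = h\)), giving \(signature(h)_{effects} \subset signature(par(i))_{effects}\); and again the identical argument works for \(scope\).

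Finally I would note that the containment statements concern the \emph{intrinsic} effects and scope of \(i_{body}\) and of its holes, which do not change when \(i_{body}\) is inlined at \(st\) by \texttt{InlineFunction}: the inlining operation transplants statements and substitutes argument expressions for parameters, but does not introduce new uses of the RNG, new \(\texttt{target+=}\) statements, or new top-level declarations from other blocks. Hence the effects and scope of \(i_{body}\) as it appears at the replacement site are the same as those of \(i_{body}\) in \(i\).

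The main potential obstacle is purely notational rather than mathematical: the lemma as stated carries over the title of the previous lemma, so one must be careful to prove the four containment claims actually listed in the body and not the hole-replacement statement suggested by the title. Once the signature inference rule is written out, every claim follows by a single step of set-theoretic inclusion, so no induction on the module dependency graph is needed — the acyclicity property is only used implicitly, in that the inference procedure is well-defined in the first place.
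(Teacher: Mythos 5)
Your proposal is correct and takes essentially the same approach as the paper: unfold the signature inference rule, observe that \(i \in impls(par(i))\) contributes the terms \(effects(i_{body})\), \(scope(i_{body})\), and \(signature(h)_{\cdot}\) for \(h \in holes(i)\) as summands of the union defining \(signature(par(i))_{\cdot}\), and note that inlining does not alter these intrinsic properties. In fact your write-up states the containment in the correct direction (\(\supset\) from the union to its summand), whereas the paper's one-line proof writes \(signature(par(i))_{effects} \subset effects(i_{body}) \cup \cdots\), which appears to be a reversed inclusion sign; your version is the more careful one.
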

\begin{proof}
By definition of \(signature\):
$$signature(par(i))_{effects} \subset effects(i_{body}) \cup \bigcup_{h' \in holes(i)} signature(h')_{effects}$$
and
$$signature(par(i))_{scope} \subset scope(i_{body}) \cup \bigcup_{h' \in holes(i)} signature(h')_{scope}$$
\end{proof}

\begin{lemma}[Preconditions of ~InlineFunction~ are met]
\label{applyimpl-preconditions}
When \(valid(P)\), the preconditions of the \texttt{InlineFunction} calls in \texttt{ApplyImpl(P, i)} are met.
\end{lemma}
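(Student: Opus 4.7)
The plan is to verify each of \texttt{InlineFunction}'s three preconditions for each of the \texttt{InlineFunction} calls in \texttt{ApplyImpl}, treating the loop iterations inductively. I would read ``typechecks'' throughout as the natural extension to modular programs: a modular program typechecks exactly when semantic constraints 5 and 6 hold, i.e.\ when the base and each implementation body would typecheck under Stan with each hole treated as a function call matching its inferred signature. Under this reading, \(valid(P)\) immediately gives the base case, and property~4 of \texttt{InlineFunction} (typechecking is preserved) gives the inductive step.

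First I would handle the loop iterations. Fix the iteration at some site \(st\) with \(st_{hole} = par(i)\) and let \(P'\) be the program at the start of this iteration. For precondition 1, the induction hypothesis says \(P'\) typechecks in the extended sense above, so in particular it typechecks in the scope of \(st\). For precondition 2, semantic constraint 1 gives \(i_{argtypes} = signature(par(i))_{argtypes}\), and constraints 5/6 guarantee that the arguments supplied at \(st\) match \(signature(st_{hole})_{argtypes} = signature(par(i))_{argtypes}\), which is exactly what \texttt{InlineFunction} needs for \texttt{Params} \(= signature(st_{hole})_{arg\text{-}types}\). For precondition 3, semantic constraint 6 states precisely that \(i_{body}\) together with its return expression typechecks under Stan as a function with parameters \(signature(par(i))_{argtypes}\) and return type \(signature(par(i))_{rettype}\), which (by constraints 1 and 2) matches the \texttt{Params} and return type passed to \texttt{InlineFunction}. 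I also need to confirm that after the replacement, the new program still satisfies the induction hypothesis for the next site; this follows from \texttt{InlineFunction}'s property 4 together with \cref{applyimpl-holes} (so the set of remaining holes is understood) and \cref{applyimpl-effects} (so any holes introduced from \(i_{body}\) have effects and scopes permitted in the enclosing block, which is what keeps constraints 3 and 4 intact for the new program).

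Next I would handle the final call, \texttt{InlineFunction}\((P', site(\texttt{parameters}), \{\}, i_{parameters})\). This is the one use of \texttt{InlineFunction} with an empty statement body, no return expression, and no arguments to match, so preconditions 2 and 3 are essentially vacuous: there are no \texttt{Params} for the call site to fail to match, and an empty function body with the declarations \(i_{parameters}\) as formal parameters trivially typechecks by virtue of constraint 6 (which explicitly folds \(i_{parameters}\) into the ambient scope). Precondition 1 follows from the loop's final induction conclusion that \(P'\) still typechecks.

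The main obstacle is the bookkeeping around the extended notion of ``typechecks'' for a modular program: I need to be confident that property~4 of \texttt{InlineFunction}, which is stated for ordinary Stan, carries over when the \texttt{Stmts} being inlined may themselves contain holes. I would justify this by noting that under the ``holes-as-function-calls-with-their-signatures'' interpretation, each inlining step is indistinguishable from an ordinary Stan inlining, and that \cref{applyimpl-effects} ensures the inserted holes remain well-typed in their new enclosing block. Once that reading is fixed, the three-preconditions check above is essentially a direct appeal to semantic constraints 1, 2, 5, and 6, and the lemma follows.
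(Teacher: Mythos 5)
Your proposal is correct and follows essentially the same route as the paper's proof: verify the three preconditions for the first inlining call via semantic constraints 5 and 6 (plus constraint 1 for argument matching), carry typechecking through subsequent iterations by induction using property 4 of \texttt{InlineFunction}, and observe that the final \texttt{parameters}-block call has trivially satisfied preconditions. Your extra care in spelling out the ``holes as function calls with their inferred signatures'' reading of typechecking, and in citing \cref{applyimpl-effects} to keep the inductive invariant intact, makes explicit what the paper leaves implicit, but it is the same argument.
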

\begin{proof}
Consider the first call: $$InlineFunction(P, site, i_{body}, signature(st_{hole})_{arg-types}, i_{return})$$
\begin{enumerate}
\item If \(site\) is in the base of \(P\), then the \(P\) typechecks in the scope of \(site\) by semantic constraint 5, otherwise it must be in an implementation body, in which case it typechecks by semantic constraint 6.
\item We know \texttt{site} looks like a function call because of the definition of \texttt{HoleSites}. Its arguments must match \(signature(\texttt{site}_{hole})_{arg-types}\) by semantic constraint 5.
\item Implied by semantic constraint 6.
\end{enumerate}

\bigskip

On subsequent calls, \(P'\) is the output of a previous \texttt{InlineFunction}, which by induction meets its preconditions.
\(P'\) typechecks by the same argument as \(P\) because of property 4.
Both of the other preconditions hold by the same reasoning as the initial call.

\bigskip

Consider the final call: $$InlineFunction(P', site(\texttt{parameters}), \{\}, i_{parameters})$$
\begin{enumerate}
\item \(P'\) typechecks at \(site(\texttt{parameters})\) because \(P\) does by semantic constraint 5 and \(P'\) typechecks wherever \(P\) does.
\item \(site(\texttt{parameters})\) is a location rather than a hole site and so does not supply any arguments, which matches \texttt{\{\}}.
\item \texttt{Params} and \texttt{Return} are empty, and \texttt{Stmts} can only be a list of \texttt{STMT\_DECL} and so typechecks.
\end{enumerate}
\end{proof}

\begin{theorem}[$\texttt{ApplyImpl}$ produces valid modular Stan programs]
\label{applyimpl-correct}
\(valid(P), i \in P \implies valid(\texttt{ApplyImpl}(P, i))\)
\end{theorem}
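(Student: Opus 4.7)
The plan is to decompose the claim into verifying $valid_{structure}(P') \wedge valid_{semantics}(P')$ constraint-by-constraint for $P' := \texttt{ApplyImpl}(P, i)$, leaning on \cref{applyimpl-holes,applyimpl-effects,applyimpl-preconditions} and the stated postconditions of \texttt{InlineFunction}. Structural constraint 3 (uniqueness) is immediate because \texttt{ApplyImpl} introduces no new hole or implementation identifiers. Structural constraint 2 follows from \cref{applyimpl-holes}: the only hole that can disappear from $holes(P)$ is $par(i)$, and for every surviving $h \in holes(P')$, $impls_{P'}(h) = impls_P(h)$ remains nonempty. For structural constraint 1 (acyclicity), I would argue that each edge of $MDG(P')$ either coincides with an edge of $MDG(P)$ or is a ``shortcut'' $i' \to i''$ created by inlining $i_{body}$ into $i'$ where $par(i'') \in holes(i)$, which corresponds to the length-two path $i' \to i \to i''$ in $MDG(P)$; hence any cycle in $MDG(P')$ expands to a closed walk in $MDG(P)$, contradicting $Acyclic(P)$.

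For the semantic constraints, my main intermediate lemma would be a monotonicity statement: $signature(h)_{P'} \subseteq signature(h)_{P}$ for every $h \in holes(P')$, proved by induction along a topological order of $MDG(P)$ restricted to the surviving holes. With monotonicity in hand, constraints 1 and 2 (argtypes and rettypes) follow because \texttt{InlineFunction} property 3 preserves expression types and because no implementation's declared argument list is modified by \texttt{ApplyImpl}. Constraints 5 and 6 (typechecking of the base and implementation bodies) follow from \cref{applyimpl-preconditions} together with \texttt{InlineFunction} property 4, which propagates typecheckability through each inlining step.

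Constraints 3 and 4 (effect and scope containment) split into two kinds of cases. Sites or code inside an implementation are immediate from the recursive definition of $signature$, since $signature(st_{hole})$ is one of the summands in $signature(par(i'))$ whenever $st \in sites(i')$. Sites and code in $P_{base}$ are the real work: for those already present in $P_{base}$ with $st_{hole} \neq par(i)$, monotonicity plus the original $P$-constraint yields $signature(st_{hole})_{P'} \subseteq signature(st_{hole})_{P} \subseteq effects(block)$; for sites or code introduced by inlining $i_{body}$ into $P_{base}$, \cref{applyimpl-effects} together with monotonicity bounds them by $signature(par(i))_P$, which was itself bounded by $effects(block)$ via the $P$-constraint applied at the original $par(i)$ call site in the base.

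The hard part will be pinning down the monotonicity lemma cleanly, since inlining simultaneously adds the code of $i_{body}$ (which could naively introduce new effects into an impl body) and removes the $par(i)$ reference (which removes other contributions). The key observation is that any effect newly attributable to inlined $i_{body}$ was already reachable in $P$ through the single $par(i) \to i$ edge in $MDG(P)$, while contributions from the other implementations of $par(i)$ simply drop out. Formalizing this induction carefully, in parallel for both $effects$ and $scope$, is where I would expect to spend the most effort.
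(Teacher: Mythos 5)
Your proposal is correct and follows essentially the same constraint-by-constraint decomposition as the paper, including the identical cycle-expansion argument for acyclicity and the same use of \cref{applyimpl-effects}, \cref{applyimpl-preconditions}, and \texttt{InlineFunction} property 4 for the semantic constraints. The one place you diverge --- the signature monotonicity lemma you expect to be the hard part --- is work the paper avoids entirely by stipulating that \texttt{ApplyImpl} does not update module signatures, so \(signature\) in the output program is literally the same function as in \(P\) and no induction over the dependency order is needed.
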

\begin{proof}
Structural constraints:

\begin{enumerate}
\item \emph{The dependency graph of modules is acyclic}.

Suppose there is a cycle \(C\) in \(MDG(\texttt{ApplyImpl}(P, i)))\).

Let \(i_0 \rightarrow i_1\) be an edge in \(C\); then \(par(i_1) \in holes_{\texttt{ApplyImpl}(P, i)}(i_0)\).
By \cref{applyimpl-holes}, either \(par(i_1) \in holes_{P}(i_0)\), in which case \(i_0 \rightarrow i_1 \in E(MDG(P))\); or \(par(i) \in holes(i_0)\) and \(par(i_1) \in holes_{P}(i_0)\), in which case \(i_0 \rightarrow i, i \rightarrow i_1 \in E(MDG(P))\).

In either case, for all edges \(i_0 rightarrow i_1 \in C\), there exists a path from \(i_0\) to \(i_1\) in \(MDG(P)\), so there exists a cycle in \(MDG(P)\).
As that would contradict \(valid(P)\), there is no cycle \(C\) in \(\texttt{ApplyImpl}(P, i)\).
\item \emph{Every hole has an implementation.}

\texttt{ApplyImpl} does not remove any implementations from or add any new holes to \(P\), so any holes without implementations would also be without implementations in \(P\), which would violate \(valid(P)\).
\item \emph{Hole identifiers are unique and (hole identifier, implementation identifier) pairs are unique.}

\texttt{ApplyImpl} does not change any hole implementation identifier names, so any collisions would exist in \(P\) and violate \(valid(P)\).
\end{enumerate}

Semantic constraints:

\begin{enumerate}
\item \emph{Implementations match signature argtypes.}

\texttt{ApplyImpl} does not change signatures or implementation argument types.
\item \emph{Implementations match signature rettype}.
\(\forall i' \in impls(P)\), \(i'_{rettype} = signature(par(i'))_{rettype}\).

If \(i_{return}\) is a hole expression of \(par(i)\), then it is replaced by \(\texttt{ApplyImpl}(P, i)\).
By \cref{applyimpl-preconditions}, property 3 of \texttt{InlineFunction} for the replacement, so the type of \(i_{return}\) is not changed.

If the type of \(i_{return}\) depends on a hole expression of \(par(i)\), by the same reasoning, the type does not change. Otherwise, it is unaffected by \texttt{ApplyImpl}.
\item \emph{Effects and scopes of holes are available in their block or module signature}. 

Suppose \(\exists st \in sites(\texttt{ApplyImpl}(P, i))\) so that \(st\) violates one of the conditions.

If \(st \in sites(P)\), since \(block\) and \(signature\) are unaffected by \texttt{ApplyImpl}, then \(st\) would violate \(valid(P)\).
Therefore, \(st \not\in sites(P)\) and must have been part of a replacement of \(par(i)\) by \(i_{body}\).

By \cref{applyimpl-effects}, we have \(signature(st_{hole})_{effects} \subset signature(par(i))_{effects}\) and \(signature(st_{hole})_{scope} \subset signature(par(i))_{scope}\).
Since the scope and effects of \(st_{hole}\) are subsets of the scope and effects of \(par(i)\), and \(par(i)\) passes the four conditions, \(st_{hole}\) must also pass the four conditions.
\item \emph{Effects and scopes of code are available in their block or module signature.}

By identical reasoning to the previous case, for any code in a block or module implementation, the code either existed in \(P\), in which case it must be valid, or it replaced \(par(i)\), in which case by \cref{applyimpl-effects} it requires only a subset of the scope and effects of \(par(i)\), and is therefore also valid.
\item \emph{\(P_{base}\) would typecheck under Stan if all holes \(h\) were function calls to function signatures with \(signature(h)_{argtypes}\) parameter types and \(signature(h)_{rettype}\) return type.}

By \cref{applyimpl-preconditions}, property 4 of \texttt{InlineFunction} applies for all updates to \(P\) in \texttt{ApplyImpl}.
Since \(P_{base}\) typechecks, \(\texttt{ApplyImpl}(P, i)_{base}\) also typechecks.
\item \emph{The body of each implementation \(i\) would typecheck under Stan if it were the body of a function \(signature(par(i))_{argtypes}\) parameter types and \(signature(par(i))_{rettype}\) return type, and if the module-defined parameters \(i_{parameters}\) were included as model parameters.}

By \cref{applyimpl-preconditions}, property 4 of \texttt{InlineFunction} applies for all updates to  \(i_{body}\) for any \(i \in impls(P)\) in \texttt{ApplyImpl}.
Since \(i_{body}\) typechecks in \(P\) for all \(i \in impls(P)\), \(i_{body}\) also typechecks in \(\texttt{ApplyImpl}(P, i)\).
\end{enumerate}
\end{proof}

\subsubsection{\texttt{ApplyImpls}}
\label{sec:org0bb2ce0}

Let \(\texttt{ApplyImpls}(P, I)\) be the result of applying \(\texttt{ApplyImpl}(P, i)\) for each \(i \in I\).
Formally, if we let \(I = <i_1, \dots i_N>\), \(P_0 = P\), and \(P_j = ApplyImpl(P_{j-1}, i_j)\) for \(j \in [1, N]\), then \(\texttt{ApplyImpls}(P, I) = P_N\).

\subsubsection{Correctness of \texttt{ApplyImpls}}
\label{sec:org4701eb3}
\begin{theorem}[$\texttt{ApplyImpls}(P, I)$ adds $I$'s holes and removes the holes filled by $I$]
\label{applyimpls-holes}
For all \(i' \in impls(P) \cup \{P_{base}\}\), \(holes_{\texttt{ApplyImpls}(P, I)}(i') \subset holes_P(i') \cup holes_P(I) - pars(I)\)
\end{theorem}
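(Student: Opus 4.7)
The plan is to proceed by induction on $N = |I|$, using \cref{applyimpl-holes} as the workhorse at each step. The base case $N = 0$ is immediate: $\texttt{ApplyImpls}(P, \emptyset) = P$ and $holes_P(\emptyset) = pars(\emptyset) = \emptyset$, so the required inclusion collapses to $holes_P(i') \subset holes_P(i')$.

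For the inductive step, write $I = I' \cup \{i_N\}$ with $I' = \langle i_1, \ldots, i_{N-1} \rangle$, let $P_{N-1} = \texttt{ApplyImpls}(P, I')$, and note that $\texttt{ApplyImpls}(P, I) = \texttt{ApplyImpl}(P_{N-1}, i_N)$ by definition. Applying \cref{applyimpl-holes} with $P_{N-1}$ in the role of $P$ and $i_N$ in the role of $i$ gives
$$holes_{\texttt{ApplyImpls}(P, I)}(i') \subset holes_{P_{N-1}}(i') \cup holes_{P_{N-1}}(i_N) - \{par(i_N)\}.$$
I would then invoke the inductive hypothesis twice, once on $i'$ and once on $i_N$ (both of which lie in $impls(P) \cup \{P_{base}\}$), to bound each hole-set on the right in terms of the original program $P$:
\begin{align*}
holes_{P_{N-1}}(i') &\subset holes_P(i') \cup holes_P(I') - pars(I'),\\
holes_{P_{N-1}}(i_N) &\subset holes_P(i_N) \cup holes_P(I') - pars(I').
\end{align*}
Combining these two bounds with the elementary identities $(A - C) \cup (B - C) = (A \cup B) - C$ and $(X - C) - \{d\} = X - (C \cup \{d\})$, together with $holes_P(I) = holes_P(I') \cup holes_P(i_N)$ and $pars(I) = pars(I') \cup \{par(i_N)\}$, yields exactly $holes_{\texttt{ApplyImpls}(P, I)}(i') \subset holes_P(i') \cup holes_P(I) - pars(I)$.

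The one non-mechanical obstacle is ensuring that the inductive machinery remains well-defined at each step: the application of \cref{applyimpl-holes} at step $N$ requires $i_N \in impls(P_{N-1})$, and the lemma's internal acyclicity argument implicitly requires $valid(P_{N-1})$. Since \texttt{ApplyImpl} only inlines at hole sites and appends to the \texttt{parameters} block, and never removes a module definition from the implementation list, each $i_N$ survives as an implementation through all earlier \texttt{ApplyImpl} calls. Validity is then propagated along the sequence by a parallel induction on $j$, using \cref{applyimpl-correct} at every step. The ordering chosen to unfold \texttt{ApplyImpls} is immaterial for the final bound because the right-hand side depends only on the set $I$, not on the sequence in which its elements are applied.
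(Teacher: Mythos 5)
Your proposal is correct and follows essentially the same route as the paper's proof: induction along the sequence of applications, invoking \cref{applyimpl-holes} at each step and applying the inductive hypothesis both to $i'$ and to $i_N$ before combining with the set identities. Your added remarks on propagating validity via \cref{applyimpl-correct} and on the survival of $i_N$ in $impls(P_{N-1})$ address preconditions the paper leaves implicit, but do not change the argument.
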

\begin{proof}
Let \(I_j = <i_1, \dots i_j>\) so that \(I_N = I\).
Note that \(\texttt{ApplyImpls}(P, I_j) = P_j\).

We prove the following by induction on \(j\):
$$holes_{P_j}(i') \subset holes_P(i') \cup holes_P(I_j) - pars(I_j)$$

For \(j=0\):
\(holes_{P_0}(i') \subset holes_P(i') \cup holes_{P_0}(I) - pars(I) = holes_P(i') \cup \emptyset - \emptyset = holes_{P_0}(i')\).

For \(j>0\):
By \cref{applyimpl-holes}, \(holes_{P_j}(i') = holes_{\texttt{ApplyImpl}(P_{j-1}, i_j)}(i') \subset holes_{P_{j-1}}(i') \cup holes_{P_{j-1}}(i_j) - par(i_j)\).

By induction:

\(holes_{P_j}(i') \subset (holes_P(i') \cup holes_P(I_{j-1}) - pars(I_{j-1})) \cup holes_{P_{j-1}}(i_j) - par(i_j)\).

Again by induction, we can use \(holes_{P_{j-1}}(i_j) \subset holes_P(i_j) \cup holes_P(I_{j-1}) - pars(I_{j-1})\):

\(holes_{P_j}(i') \subset (holes_P(i') \cup holes_P(I_{j-1}) - pars(I_{j-1})) \cup (holes_P(i_j) \cup holes_P(I_{j-1}) - pars(I_{j-1})) - par(i_j)\).

\(holes_{P_j}(i') \subset holes_P(i') \cup (holes_P(I_{j-1}) \cup holes_P(i_j) \cup holes_P(I_{j-1})) - (pars(I_{j-1}) \cup par(i_j)\).

\(holes_{P_j}(i') \subset holes_P(i') \cup holes_P(I_j) - par(I_j)\).
\end{proof}
\subsubsection{\texttt{Concretize}}
\label{sec:orgd9e4596}
We define \texttt{Concretize} as \(\texttt{Concretize}(P, I) = \texttt{ApplyImpls}(P, I)_{base}\) when \(I\) is a valid selection.
\subsubsection{Correctness of \texttt{Concretize}}
\label{sec:org686d47b}
\begin{lemma}[$\texttt{Concretize}(P, I)$ has no holes]
\label{concretize-no-holes}
For \(valid(P)\) and \(valid_P(I)\), \(holes(\texttt{Concretize}(P, I)) = \emptyset\).
\end{lemma}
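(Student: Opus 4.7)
The plan is to reduce the claim to the already-proved theorem \cref{applyimpls-holes} applied specifically to the base, and then use the validity of the selection $I$ to show that the containing set on the right-hand side is empty. Since $\texttt{Concretize}(P, I)$ is defined as $\texttt{ApplyImpls}(P, I)_{base}$, the goal $holes(\texttt{Concretize}(P, I)) = \emptyset$ is exactly $holes_{\texttt{ApplyImpls}(P, I)}(P_{base}) = \emptyset$.

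First, I would instantiate \cref{applyimpls-holes} with $i' = P_{base}$, which is a legal choice since the theorem quantifies over $impls(P) \cup \{P_{base}\}$. This yields
\[
holes_{\texttt{ApplyImpls}(P, I)}(P_{base}) \;\subset\; holes_P(P_{base}) \cup holes_P(I) - pars(I).
\]
So it suffices to show $holes_P(P_{base}) \cup holes_P(I) \subset pars(I)$.

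Next, I would invoke the validity of the selection. The alternative statement of condition 2 given in \cref{semantics} says that every $h \in holes(P_{base}) \cup holes(I)$ has exactly one implementation in $I$; in particular, for every such $h$ there exists $i \in I$ with $par(i) = h$, hence $h \in pars(I)$. Therefore $holes_P(P_{base}) \cup holes_P(I) \subset pars(I)$, and the right-hand side of the containment from \cref{applyimpls-holes} is empty. Combining, $holes_{\texttt{ApplyImpls}(P, I)}(P_{base}) \subset \emptyset$, so it equals $\emptyset$, which gives $holes(\texttt{Concretize}(P, I)) = \emptyset$.

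There is no real obstacle in this proof; the technical work has already been absorbed into \cref{applyimpls-holes}, and the remaining step is just an unpacking of the valid-selection conditions. The only thing to be careful about is making sure to apply \cref{applyimpls-holes} to $P_{base}$ rather than to some particular implementation, and to invoke the correct restatement of condition 2 of $valid_P(I)$ so that the set-difference collapses cleanly.
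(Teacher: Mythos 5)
Your proposal is correct and follows essentially the same route as the paper: instantiate \cref{applyimpls-holes} at $P_{base}$, then use condition 2 of $valid_P(I)$ (the paper states it as the equality $holes(P_{base}) \cup holes(I) = pars(I)$, of which you use only the needed inclusion) to make the set difference empty. No gaps.
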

\begin{proof}
By \cref{applyimpls-holes}, \(holes(\texttt{ApplyImpls}(P, I)_{base}) \subset holes(P_{base}) \cup holes(I) - pars(I)\).
By \(valid_P(I)\), \(holes(P_{base}) \cup holes(I) = pars(I)\).
Therefore, \(holes(\texttt{ApplyImpls}(P, I)_{base}) \subset \emptyset\).
\end{proof}

\begin{theorem}[$\texttt{Concretize}(P, I)$ is a valid Stan program]
\label{concretize-correct}
\(valid(P)\) and \(valid_P(I)\) implies \(valid_{\texttt{Stan}}(\texttt{Concretize}(P, I))\).
\end{theorem}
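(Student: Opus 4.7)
The plan is to reduce the claim to two facts that are already available in the development: $\texttt{ApplyImpls}(P,I)$ is still a valid modular Stan program, and its base contains no holes. Together these two observations collapse the modular-program validity conditions at $P_{base}$ to the Stan-program validity conditions at $\texttt{Concretize}(P,I)$.

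First, I would establish by induction on $|I|$ that $valid(\texttt{ApplyImpls}(P,I))$. The base case $I = \emptyset$ is trivial since $\texttt{ApplyImpls}(P,\emptyset) = P$ and $valid(P)$ is assumed. For the inductive step, write $I = I' \cup \{i\}$ and observe that by the definition of $\texttt{ApplyImpls}$, it suffices to apply \cref{applyimpl-correct} to the validly constructed intermediate program $\texttt{ApplyImpls}(P,I')$ together with any implementation $i$ (which remains in the program because $\texttt{ApplyImpl}$ does not delete implementations). This gives $valid(\texttt{ApplyImpls}(P,I))$.

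Next I would invoke \cref{concretize-no-holes} to get $holes(\texttt{Concretize}(P,I)) = \emptyset$. The point of having no holes is that several of the modular semantic constraints for $\texttt{ApplyImpls}(P,I)$ specialize directly to the Stan validity conditions on its base. Specifically, semantic constraint 5 says that $P_{base}$ would typecheck as Stan if each hole were treated as a function call of the appropriate signature; with no holes remaining in $\texttt{Concretize}(P,I)$, this is literally the statement that $\texttt{Concretize}(P,I)$ typechecks as a Stan program, giving clause 2 of $valid_{\texttt{Stan}}$. Likewise, semantic constraint 4 (effects and scopes of code are available in their block) for the base of $\texttt{ApplyImpls}(P,I)$ is exactly clause 1 of $valid_{\texttt{Stan}}$, since with no holes present the ``or module signature'' alternative never applies at the top level of $P_{base}$.

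The main obstacle is the bookkeeping in step one: the $i$ that is applied at each inductive step must still be an element of $impls$ of the current intermediate program so that \cref{applyimpl-correct} is applicable, and \cref{concretize-no-holes} must be invoked on the final modular program, not on $P$ itself. After that, the reduction from modular-program constraints to Stan constraints is essentially a syntactic check: each $valid_{\texttt{Stan}}$ requirement has a direct counterpart in the modular constraints whose hole-dependent clauses vacuously reduce once $holes = \emptyset$. I would close with a short remark verifying that \texttt{Concretize} returns the base only, so that no spurious top-level \texttt{module} declarations from $impls(\texttt{ApplyImpls}(P,I))$ leak into the Stan program.
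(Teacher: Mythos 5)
Your proposal matches the paper's proof essentially step for step: the paper likewise derives $valid(\texttt{ApplyImpls}(P,I))$ from \cref{applyimpl-correct} (leaving the induction over $I$ implicit where you spell it out), applies \cref{concretize-no-holes} to conclude the result is a concrete Stan program, and then maps semantic constraint 4 to the effects-and-scopes clause and semantic constraint 5 to the typechecking clause of $valid_{\texttt{Stan}}$. Your added care about the intermediate programs retaining $i \in impls$ and about \texttt{Concretize} returning only the base are reasonable refinements of details the paper glosses over, not a different argument.
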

\begin{proof}
By \cref{applyimpl-correct} we have \(valid(\texttt{ApplyImpls}(P, I))\).

By \cref{concretize-no-holes}, we have \(holes(\texttt{Concretize}(P, I)) = \emptyset\).
Therefore, \(SP = \texttt{Concretize}(P, I)\) is a concrete Stan program.

We can match the criteria of \(valid_{\texttt{Stan}}(SP)\) in \cref{valid-stan}:
\begin{enumerate}
\item \emph{Effects and scopes of code are available in their block}.
Implied by \(valid(\texttt{ApplyImpls}(P, I))\), semantic constraint 4.
\item \emph{\(SP\) typechecks as Stan code}.
Implied by \(valid(\texttt{ApplyImpls}(P, I))\), semantic constraint 5.
\end{enumerate}
\end{proof}

\subsubsection{User specification of a selection}
\label{sec:org68f3f98}
\label{selection-string}

In order for a user to call \texttt{Concretize}, they need to provide a ``selection''.

There is one more piece of syntax to define, though it does not appear in programs themselves: a \emph{selection string}.
A selection string is a string that references a set of module implementations within a given program.

\begin{verbatim}
SELECTION: hole_identifier:impl_identifier,+
\end{verbatim}

Selection strings can be used to narrow down and index into the set of programs represented by a modular program.
\subsection{\texttt{ModelGraph}}
\label{sec:orgd84950f}
In this section we present an efficient algorithm to implement the \(\texttt{ModelGraph}\) operation.
\subsubsection{Naive versions}
\label{sec:org6cc33dd}
To demonstrate what we mean by efficient, consider a naive construction of the model graph given a modular program \(P\):

   \begin{align*}
&N(\texttt{NaiveModelGraph}(P)) = \\
&\quad\st*{close(P_{base}, I)}{I \in \bigtimes_{h\in holes(p)} impls(h), valid_P(close(P_{base}, I))}\\
&E(\texttt{NaiveModelGraph}(P)) = \\
&\quad\st[\Big]{(I_1, I_2)}{I_1, I_2 \in N(\texttt{NaiveModelGraph}(P)), |siblings(I_1, I_2)| = 1}
   \end{align*}

Here, \(close(i, I)\) is the subset of \(I\) that is reachable from \(i\) by traversing edges in the module graph through \(I\).
Thus \(close(P_{base}, I)\) effectively removes redundant implementations from \(I\).
\texttt{NaiveModelGraph} works by enumerating all possible combinations of implementations and model pairs and then filtering out the invalid options.

\begin{figure}\centering
\begin{subfigure}{1.0\textwidth}
\includegraphics[width=\textwidth]{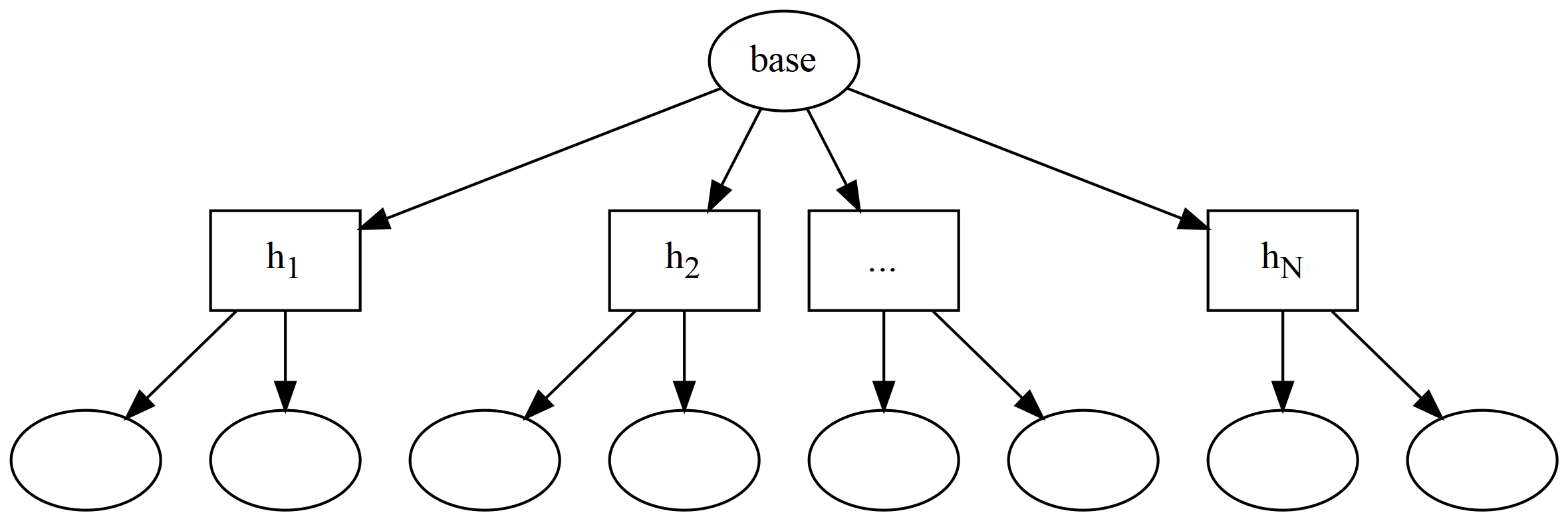}
\caption{}
\label{fig:patho:wide}
\end{subfigure}
\begin{subfigure}{0.3\textwidth}
\begin{center}
\includegraphics[width=\textwidth]{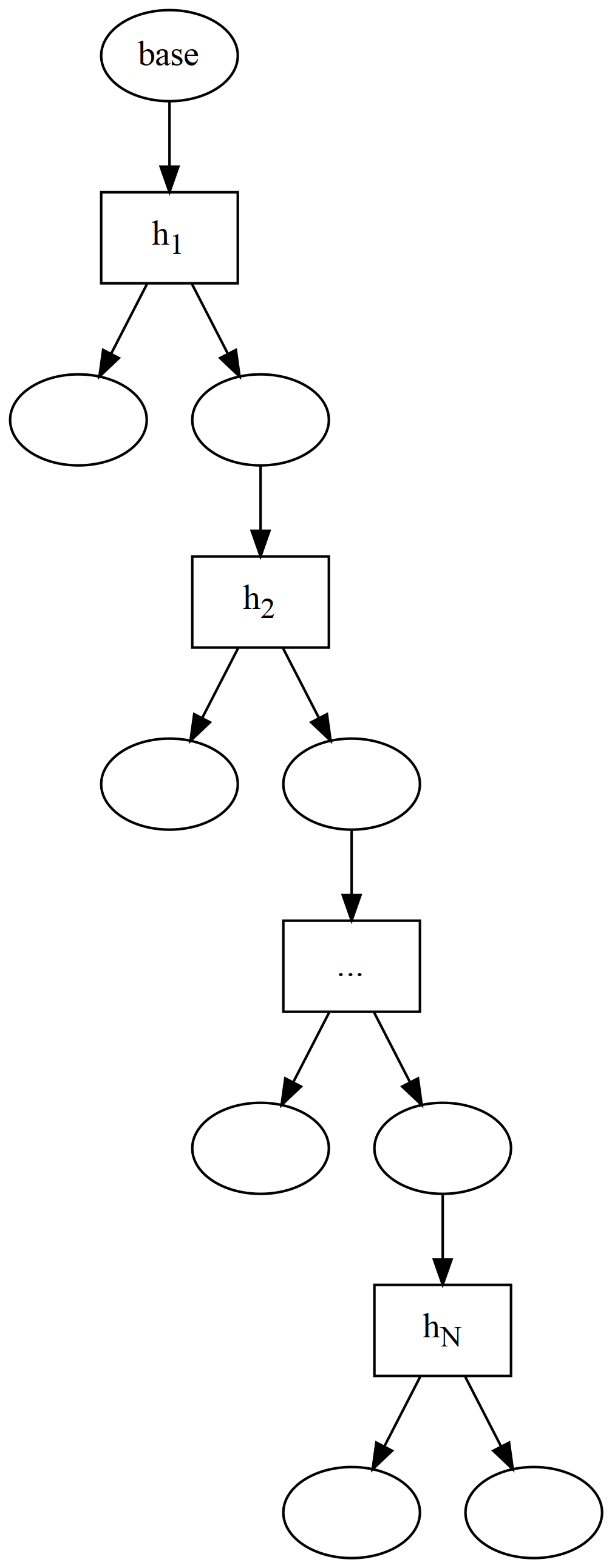}
\end{center}
\caption{}
\label{fig:patho:tall}
\end{subfigure}
\begin{subfigure}{0.3\textwidth}
\begin{center}
\includegraphics[width=\textwidth]{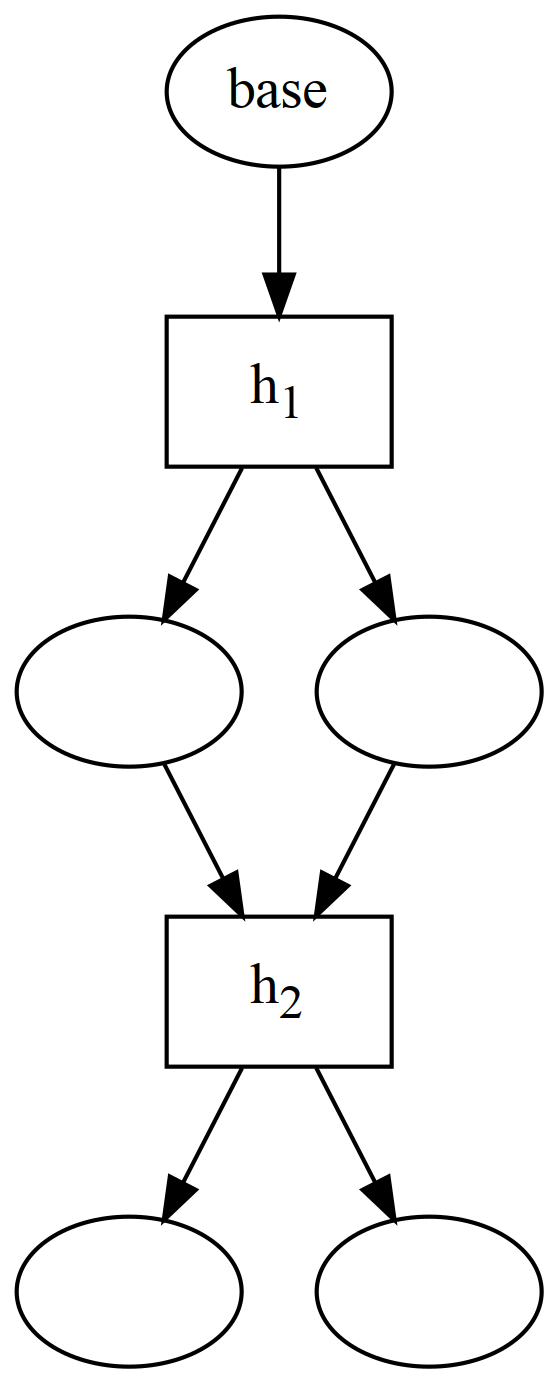}
\end{center}
\caption{}
\label{fig:patho:hole-branch}
\end{subfigure}
\begin{subfigure}{0.3\textwidth}
\begin{center}
\includegraphics[width=\textwidth]{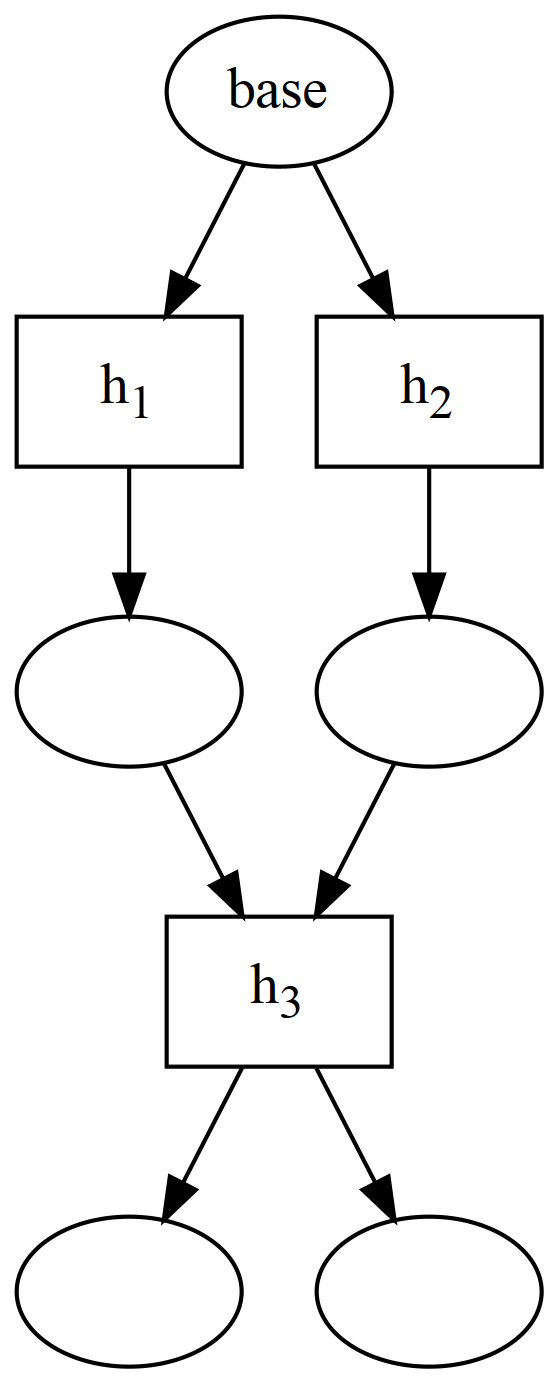}
\end{center}
\caption{}
\label{fig:patho:impl-branch}
\end{subfigure}
\caption{\label{fig:pathological}Pathological module graphs. Round nodes represent the base and module implementations while rectangular nodes represent holes.}
\end{figure}

\(E(\texttt{NaiveModelGraph}(P))\) is exponentially inefficient when \(P\) is \cref{fig:patho:wide}, because it would consider \(2^N\) candidate neighbors for each node, where in reality each node has only \(N-1\) neighbors.

\(N(\texttt{NaiveModelGraph}(P))\) is exponentially inefficient when \(P\) is \cref{fig:patho:tall}, because it would consider \(2^N\) candidate nodes, where in reality there are only \(N\) nodes.

Another approach would be a recursive construction. If the module graph is a tree, then the network of the sub-graph descending from each hole node is the graph join of the networks of the sub-graphs descending from its implementations, and the network of the sub-graph descending from each implementation node is the graph Cartesian product of the networks of the sub-graphs descending from its holes.
The issue with this approach is how to handle situations like \cref{fig:patho:hole-branch,fig:patho:impl-branch}: the module graph is not always a tree, so hole and implementation nodes will sometimes combine networks with overlapping information.
It is possible to modify the graph join and Cartesian product operations to handle the overlap correctly, but not without adding complexity and inefficiency that scales with the number recombination nodes.

\subsubsection{Algorithm}
\label{sec:orga0b0d62}
\label{modelgraph-algorithm}

Our algorithm builds up a the model graph's node and edge sets simultaneously by ``visiting'' each of the program's holes in turn.
The algorithm tracks the set of model and edge ``prefixes'': partial selections that only contain implementations for the holes visited thus far.
It also tracks the set of holes required so far by each prefix.
When the algorithm ``visits'' a hole, it ``expands'' the set of prefixes that require that hole to reflect its set of implementations.
To ensure that no ``prefix'' discovers that it requires a hole after that hole has been visited, and to avoid other complex edge cases, holes are visited in a topological order of dependency.

\begin{align*}
\texttt{expand}&(N, E, h) = \\
\Big( & \bigcup_{(H, I) \in N} \texttt{expand-node}(H, I, h), \\
& \bigcup_{(H, I) \in N} \texttt{new-edges}(H, I, h)\ \cup\
  \bigcup_{(H_1, I_1, H_2, I_2) \in E} \texttt{expand-edge}(H_1, I_1, H_2, I_2, h)
\Big)\\
\\
\texttt{expand}&\texttt{-node}(H, I, h) = \\
&\begin{dcases} 
\st*{\left( H \cup holes(i), I \cup \{i\} \right)}{i \in impls(h)}   & \text{if }h \in H \\
\left\{ \left( H, I \right) \right\}   & \text{otherwise} \\
\end{dcases}\\
\\
\texttt{new-ed}&\texttt{ges}(H, I, h) = \\
&\begin{dcases} 
\st*{\left( H \cup holes(i_1), I \cup \{i_1\}, H \cup holes(i_2), I \cup \{i_2\} \right)}{(i_1, i_2) \in {impls(h) \choose 2}}   & \text{if }h \in H \\
\emptyset   & \text{otherwise} \\
\end{dcases}\\
\\
\texttt{expand}&\texttt{-edge}(H_1, I_1, H_2, I_2, h) = \\
&\begin{dcases} 
\left\{ \left( H_1, I_1, H_2, I_2 \right) \right\}   & \text{if } h\not\in H_1, h\not\in H_2 \\
\st*{\left( H_1 \cup holes(i), I_1 \cup \{i\}, H_2, I_2 \right)}{i \in impls(h)}     & \text{if } h \in H_1, h\not\in H_2\\
\st*{\left( H_1 , I_1, H_2\cup holes(i), I_2 \cup \{i\} \right)}{i \in impls(h)}     & \text{if } h \not\in H_1, h\in H_2 \\
\st*{\left( H_1 \cup holes(i), I_1 \cup \{i\}, H_2\cup holes(i), I_2 \cup \{i\} \right)}{i \in impls(h)}     & \text{if } h \in H_1, h\in H_2 \\
\end{dcases}
\end{align*}

Let \(G_{Dep}(P)\) be the graph of dependencies between holes of a program \(P\) so that \(N(G_{Dep}(P)) = holes(P)\) and \(h_1 \rightarrow h_2 \in E(G_{Dep}(P))\) if and only if \(\exists i \in impls(h_1)\ s.t.\ h_2 \in holes(i)\).
Let \(\vec{H}\) be a topological ordering of \(G(P)\).

Let \((N_0, E_0) = \left( (holes(P_{base}), \emptyset), \emptyset \right)\).

Let \((N_j, E_j) = \texttt{expand}(N_{j-1}, E_{j-1}, h_j)\), \(\forall j \in [1, |\vec{H}|]\).

We show that \(\st{I}{(H, I)\in N_{|\vec{H}|}} = N(ModelGraph(P))\) and \(\st{(I_1, I_2)}{(H_1, I_1, H_2, I_2)\in E_{|\vec{H}|}} = E(ModelGraph(P))\).

\subsubsection{Proofs of correctness and completeness}
\label{sec:orgd620017}

Throughout this section, we assume that \(P\) is a modular program such that \(valid_{structure}(P)\).

\begin{lemma}[$\vec{H}$ is ordered by dependency]
\label{holes-topo}

There exists a topological ordering \(\vec{H} = ~topo~(G_{Dep}(P))\) and \(\forall j \in [1, |\vec{H}|], \forall i \in impls(P)\ s.t.\ \vec{H}_j \in holes(i), par(i) \in \vec{H}_{1:j-1} \text{ or } par(i) \in holes(P_{base})\).
\end{lemma}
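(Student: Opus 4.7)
The claim naturally splits in two parts: (a) a topological ordering of $G_{Dep}(P)$ exists at all, and (b) that ordering already carries enough information to place $par(i)$ before any $\vec{H}_j \in holes(i)$. My plan is to prove (a) by reducing acyclicity of $G_{Dep}(P)$ to acyclicity of $MDG(P)$, and to prove (b) by a one-line observation about the edge-generation rule of $G_{Dep}(P)$.

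For (a), I would argue by contradiction. Suppose $G_{Dep}(P)$ contains a directed cycle $h_1 \to h_2 \to \dots \to h_k \to h_1$. By the definition of $E(G_{Dep}(P))$, for each edge $h_\ell \to h_{\ell+1}$ (indices mod $k$) there is an implementation $i_\ell \in impls(h_\ell)$ with $h_{\ell+1} \in holes(i_\ell)$. Since $par(i_{\ell+1}) = h_{\ell+1} \in holes(i_\ell)$, we get $i_\ell \to i_{\ell+1} \in E(MDG(P))$, and so $i_1 \to i_2 \to \dots \to i_k \to i_1$ is a cycle in $MDG(P)$, contradicting structural constraint~\ref{acyclic} ($Acyclic(P)$). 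Thus $G_{Dep}(P)$ is a DAG, and any standard topological-sort routine (e.g. Kahn's algorithm) yields the desired $\vec{H}$.

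For (b), fix any $j$ and any $i \in impls(P)$ with $\vec{H}_j \in holes(i)$. Then $i$ is an implementation of $par(i)$ that references $\vec{H}_j$, which is precisely the condition for the edge $par(i) \to \vec{H}_j$ to lie in $E(G_{Dep}(P))$. In particular $par(i)$ is a node of $G_{Dep}(P)$, i.e.\ $par(i) \in holes(P) = holes(P_{base}) \cup holes(impls(P))$. Since $\vec{H}$ is a topological order, $par(i)$ strictly precedes $\vec{H}_j$; it cannot equal $\vec{H}_j$ because that would force a self-loop $i \to i$ in $MDG(P)$ via $par(i) = \vec{H}_j \in holes(i)$, violating $Acyclic(P)$. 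Hence $par(i) \in \vec{H}_{1:j-1}$, which implies the disjunction in the statement. The alternative clause $par(i) \in holes(P_{base})$ is in fact subsumed by the topological-order clause (because $holes(P_{base}) \subseteq holes(P) = N(G_{Dep}(P))$), but is stated explicitly because the \texttt{expand} algorithm of \cref{modelgraph-algorithm} initializes its prefix with $holes(P_{base})$ as already-required, so the two cases correspond to the two ways a prefix can already have $par(i)$ in its required-hole set by step $j$.

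The only real subtlety is the lifting of a $G_{Dep}$-cycle to an $MDG$-cycle. Everything hinges on making the correct choice of $i_\ell \in impls(h_\ell)$ for each cycle edge, which is legitimate because each edge in $G_{Dep}(P)$ is witnessed by at least one such implementation; once that choice is made the chain of $MDG$ edges is immediate. The second half of the argument reduces to reading off the definitions and does not require additional machinery.
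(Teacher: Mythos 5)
Your proof is correct and follows essentially the same route as the paper's: establish that \(G_{Dep}(P)\) is acyclic, take any topological order, and then read off the edge \(par(i) \rightarrow \vec{H}_j\) from the definition of \(E(G_{Dep}(P))\) to conclude \(par(i)\) precedes \(\vec{H}_j\). The only difference is that you make the acyclicity step explicit by lifting a hypothetical \(G_{Dep}\)-cycle to a cycle in \(MDG(P)\) (contradicting \(Acyclic(P)\)), where the paper simply asserts that cyclic hole dependencies would violate validity; your version is slightly more rigorous but not a different argument.
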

\begin{proof}
The graph \(G_{Dep}\) must be acyclic, because otherwise the holes of \(P\) would have cyclic dependencies and \(P\) would not be valid.
Therefore \(G_{Dep}\) has at least one topological order \(\vec{H}\).

For any such \(j\) and \(i\), we must have \(par(i) \rightarrow \vec{H}_j \in E(G_{Dep}(P))\), by definition of \(G_{Dep}\).
Then either \(par(i) \in holes(P_{base})\) and we are done, or \(par(i) \in holes(P)\), and so \(par(i) \in \vec{H}\), so \(\vec{H}_k\) for some \(k\).
By the definition of a topological ordering, \(k < j\).
Therefore, \(par(i) \in \vec{H}_{1:j-1}\).
\end{proof}

\begin{lemma}[Model prefixes have complete prefix hole sets]
\label{holes-complete}
\(\forall j \in [1, |\vec{H}|], \forall (H_{j-1}, I_{j-1}) \in N_{j-1}\), let \(N^I = \st{I'}{I' \in N(ModelGraph(P)), I_{j-1} \subset I'}\).

For all \(I'\) in \(N^I\), \(pars(I') \cap \vec{H}_{1:j} = H_{j-1} \cap \vec{H}_{1:j}\).
\end{lemma}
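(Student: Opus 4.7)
The plan is to argue directly from the prefix's construction, first establishing a sub-invariant about $(H_{j-1}, I_{j-1})$ and then decomposing any valid extension $I'$ into its old and new parts. I would first establish that $H_{j-1} = holes(P_{base}) \cup holes(I_{j-1})$ and $pars(I_{j-1}) = H_{j-1} \cap \vec{H}_{1:j-1}$. These follow by tracking the \texttt{expand-node} rule together with \cref{holes-topo}: the running required set grows by $holes(i)$ each time an implementation $i$ is added, and by topological order any hole $\vec{H}_k \in H_{j-1}$ must have been introduced strictly before step $k$, so it received an implementation when visited.

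Next, for any $I' \in N^I$, writing $I' = I_{j-1} \cup I''$ with $I_{j-1} \cap I'' = \emptyset$, the identity
\[pars(I') \cap \vec{H}_{1:j} = (pars(I_{j-1}) \cap \vec{H}_{1:j}) \cup (pars(I'') \cap \vec{H}_{1:j})\]
combined with the sub-invariant reduces the claim to showing $pars(I'') \cap \vec{H}_{1:j-1} = \emptyset$ and $pars(I'') \cap \{\vec{H}_j\} = H_{j-1} \cap \{\vec{H}_j\}$.

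The main obstacle is the first of these. Suppose for contradiction some $i \in I''$ has $par(i) \in \vec{H}_{1:j-1}$. Defining $i_1 \prec i_2$ by $par(i_1) \in holes(i_2)$ and following the chain upward from $i$ through $I''$, each step moves to an implementation whose parent is strictly earlier in topological order, so it remains in $\vec{H}_{1:j-1}$. Because $\prec$ is acyclic (inherited from $Acyclic(P)$), the chain terminates at some $i^* \in I''$ with $par(i^*) \notin holes(I'')$; by validity of $I'$, $par(i^*) \in holes(P_{base}) \cup holes(I_{j-1}) = H_{j-1}$, so by the sub-invariant $I_{j-1}$ already contains an implementation of $par(i^*)$, contradicting $siblings(I', I') = \emptyset$.

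The boundary hole $\vec{H}_j$ then splits into two cases. If $\vec{H}_j \in H_{j-1}$, then $\vec{H}_j$ is required by $I'$ but not decided by $I_{j-1}$ (since $\vec{H}_j \notin \vec{H}_{1:j-1}$), so validity forces $\vec{H}_j \in pars(I'')$. If $\vec{H}_j \notin H_{j-1}$, then $\vec{H}_j \in pars(I'')$ would force $\vec{H}_j \in holes(I'')$, yielding some $i' \in I''$ with $par(i') \in \vec{H}_{1:j-1}$, contradicting the previous step. Once the maximal-element argument is executed cleanly, the rest is routine bookkeeping.
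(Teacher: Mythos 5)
Your proof is correct, but it takes a genuinely different route from the paper's. The paper proves the lemma by a single induction on $j$ that simultaneously carries a ``consistency'' invariant, $I' \cap impls(\vec{H}_{1:j-1}) = I_{j-1} \cap impls(\vec{H}_{1:j-1})$, threading the relationship between the full selection $I'$ and the prefix $I_{j-1}$ through each application of \texttt{expand-node}. You instead confine the induction to two structural invariants about the algorithm's state alone ($H_{j-1} = holes(P_{base}) \cup holes(I_{j-1})$ and $pars(I_{j-1}) = H_{j-1} \cap \vec{H}_{1:j-1}$, both straightforward from \texttt{expand-node} together with \cref{holes-topo}), and then dispatch the arbitrary extension $I'' = I' - I_{j-1}$ in one shot with a minimal-element/chain argument: any $i \in I''$ whose parent lies in $\vec{H}_{1:j-1}$ leads, by walking up the acyclic dependency relation inside $I''$, to an implementation whose parent is already decided by $I_{j-1}$, contradicting $siblings(I', I') = \emptyset$. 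That chain device is essentially the one the paper itself uses later in \cref{no-siblings-equal}, so you are reusing a tool from elsewhere in the development. Your route buys a cleaner separation of concerns --- the inductive hypothesis never quantifies over all valid extensions of all prefixes --- and your key claim $pars(I'') \cap \vec{H}_{1:j-1} = \emptyset$ is equivalent to the paper's consistency property given the sub-invariants, so nothing downstream is lost. The cost is that the well-foundedness of the chain and the fact that it stays inside $I''$ (validity plus $par(i_c) \notin H_{j-1}$ forces $par(i_c) \in holes(I'')$) must be spelled out carefully; your sketch handles both, and your two-case analysis of the boundary hole $\vec{H}_j$ matches the paper's.
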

\begin{proof}
This is a proof by induction on \(j\).

We will simultaneously prove the following property, which we refer to as ``consistency,'' by induction for each \(j\):
  $$I' \cap impls(\vec{H}_{1:j-1}) = I_{j-1} \cap impls(\vec{H}_{1:j-1})$$

For \(j=1\):

Since \(N_0 = \{\emptyset\}\), \(I_{j-1} = \emptyset\), so \(N^I = N(ModelGraph(P))\).

If \(\vec{H}_1 \in pars(I')\), then by \(valid_P(I')\), \(\vec{H}_1 \in holes(P_{base})\), since if \(\exists i\in impls(I')\ s.t.\ \vec{H}_1 \in holes(i)\), then by \cref{holes-topo}, \(par(i) \in \vec{H}_{1:0} = \emptyset\).
Then since \(holes(P_{base}) \subset pars(I')\) and \(H_0 = holes(P_{base})\), \(pars(I') \cap \vec{H}_{1:1} = H_0 \cap \vec{H}_{1:1}\).

If \(\vec{H}_1 \not\in pars(I')\), then \(\vec{H}_1 \not\in holes(P_{base}) = H_0\), so \(pars(I') \cap \vec{H}_{1:1} = H_0 \cap \vec{H}_{1:1} = \emptyset\).
Therefore the lemma holds for \(j=1\).

Trivially, consistency holds for \(j=1\) because \(\not\exists h \in \vec{H}_{1:0}\).

For \(j>1\):

Since \((H_{j-1}, I_{j-1}) \in N_{j-1}\) and \(j \geq 2\), \(\exists (H_{j-2}, I_{j-2}) \in N_{j-2}\) such that \((H_{j-1}, I_{j-1}) \in \texttt{expand-node}(H_{j-2}, I_{j-2}, \vec{H}_{j-1})\).

If \(\vec{H}_{j-1} \in H_{j-2}\), then by induction of the lemma, \(\vec{H}_{j-1} \in pars(I')\), so \(I' \cap impls(\vec{H}_{j-1}) = \{i'\}\) for some \(i'\).
It also follows from the definition of \(\texttt{expand-node}\) that, for some \(i \in impls(\vec{H}_{j-1})\), \(I_{j-1} = I_{j-2} \cup \{i\}\), and since \(I_{j-2} \cap impls(\vec{H}_{j-1}) = \emptyset\), \(I_{j-1} \cap impls(\vec{H}_{j-1}) = \{i\}\).
Since \(I_{j-1} \subset I'\), \(i \in I'\).

If \(i/neq i'\), then since \(par(i) = par(i')\), \((i, i') \in siblings(I', I')\), but \(valid_{P}(I')\) implies \(siblings(I', I') = \emptyset\), so \(i = i'\).

Thus, if \(\vec{H}_{j-1} \in H_{j-2}\), then \(I' \cap impls(\vec{H}_{j-1}) = I_{j-1} \cap impls(\vec{H}_{j-1}) = \{i\}\).
If \(\vec{H}_{j-1} \not\in H_{j-2}\), then \(\vec{H}_{j-1} \not\in pars(I')\) and \(I_{j-1} = I_{j-2}\), so \(I' \cap impls(\vec{H}_{j-1}) = I_{j-1} \cap impls(\vec{H}_{j-1}) = \emptyset\).

Starting from induction on consistency:
$$I' \cap impls(\vec{H}_{1:j-2}) = I_{j-2} \cap impls(\vec{H}_{1:j-2})$$
$$(I' \cap impls(\vec{H}_{1:j-2})) \cup (I' \cap impls(\vec{H}_{j-1})) = (I_{j-2} \cap impls(\vec{H}_{1:j-2})) \cup (I_{j-1} \cup impls(\vec{H}_{j-1}))$$
Because \(I_{j-2} = I_{j-1} \cap impls(\vec{H}_{1:j-2})\):
$$(I' \cap impls(\vec{H}_{1:j-2})) \cup (I' \cap impls(\vec{H}_{j-1})) = (I_{j-1} \cap impls(\vec{H}_{1:j-2})) \cup (I_{j-1} \cup impls(\vec{H}_{j-1}))$$
$$I' \cap impls(\vec{H}_{1:j-1}) = I_{j-1} \cap impls(\vec{H}_{1:j-1})$$

Thus, consistency holds.

If \(\vec{H}_j \in holes(P_{base})\), then the lemma is true by the same reasoning as the \(j=1\) case.

Suppose \(\vec{H}_j \in H_{j-1}\) and \(\vec{H}_j \not\in holes(P_{base})\).
By the construction of \(H_{j-1}\) and the definition of \(\texttt{expand-node}\), \(\exists i \in I_{j-1}\) such that \(\vec{H}_j \in holes(i)\).
\(I_{j-1} \subset I'\), \(i \in I'\), so \(\vec{H}_j = par(i) \in pars(I')\).

Suppose \(\vec{H}_j \in pars(I')\) and \(\vec{H}_j \not\in holes(P_{base})\).
By \(valid_P(I')\) and \cref{holes-topo}, there exists some \(k < j\) such that \(par(i) = \vec{H}_k\) and \(i \in impls(\vec{H}_k)\).
By consistency, \(I' \cap impls(\vec{H}_{1:k}) = I_k \cap impls(\vec{H}_{1:k})\), so \(i \in I_k \subset I_{j-1}\).
By the definition of \(\texttt{expand-node}(H_{k-1}, I_{k-1}, \vec{H}_k)\), \(i \in I_k \implies holes(i) \subset H_k\), so \(\vec{H}_j \in H_k \subset H_{j-1}\).
\end{proof}

The following lemma shows that a model prefix can be expanded with any implementation on its horizon and remain a valid model prefix.

\begin{lemma}[Model prefixes can be expanded with any implementation]
\label{any-impl}
\(\forall j \in [1, |\vec{H}|], \forall I \in N(ModelGraph(P)), \forall h \in holes(I \cap impls(\vec{H}_{1:j}) \cup holes(P_{base})) - \vec{H}_{1:j}, \forall i \in impls(h), \exists I' \in N(ModelGraph(P))\ s.t.\ I \cap impls(\vec{H}_{1:j}) \cup \{i\} \subset I'\).
\end{lemma}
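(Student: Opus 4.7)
The plan is to construct $I'$ explicitly by starting from $I_{pre} := I \cap impls(\vec{H}_{1:j})$, adjoining $\{i\}$, and then greedily closing off any remaining open holes. The intuition: because $h$ has not yet been visited, no implementation in $I_{pre}$ has parent $h$, so adding $i$ cannot conflict with anything already present; any descendant holes opened by $i$ (or by implementations in $I_{pre}$ that point to not-yet-visited holes) can be filled one at a time using structural constraint \cref{every-hole}.

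First I would establish the key ``no conflict'' fact: for every $i' \in I_{pre}$, $par(i') \in \vec{H}_{1:j}$ by definition of $impls(\vec{H}_{1:j})$, so $pars(I_{pre}) \subset \vec{H}_{1:j}$. Since $h \notin \vec{H}_{1:j}$ by hypothesis, $h \notin pars(I_{pre})$, and hence $I_{pre} \cup \{i\}$ has no two distinct implementations with the same parent. Also $I_{pre} \subset I$ gives $siblings(I_{pre}, I_{pre}) \subset siblings(I, I) = \emptyset$ using the monotonicity of $siblings$ noted earlier.

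Next I would define a sequence $I'_0 = I_{pre} \cup \{i\}$, and while there exists some $h' \in (holes(P_{base}) \cup holes(I'_k)) \setminus pars(I'_k)$, pick any $i' \in impls(h')$ (nonempty by structural constraint \cref{every-hole}) and set $I'_{k+1} = I'_k \cup \{i'\}$. Because each step strictly enlarges the finite set $I'_k \subset impls(P)$, the sequence stabilizes at some $I' = I'_K$ with $holes(P_{base}) \cup holes(I') \subset pars(I')$. By construction we only ever add an implementation of a hole not already in $pars(I'_k)$, so no new sibling pair is introduced at any stage; combined with the base case this gives $siblings(I', I') = \emptyset$. Conversely each added $i'$ has $par(i') \in holes(P_{base}) \cup holes(I')$, so $pars(I') \subset holes(P_{base}) \cup holes(I')$, yielding $impls(I') = holes(P_{base}) \cup holes(I')$ (using $pars$ and $impls$ as inverses). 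Hence $valid_P(I')$, i.e.\ $I' \in N(ModelGraph(P))$, and $I_{pre} \cup \{i\} = I'_0 \subset I'$ as required.

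The main obstacle is the sibling-freeness check: one has to be precise that the greedy extension only targets holes that are currently unmatched, and that the seeding step $I_{pre} \cup \{i\}$ is itself sibling-free, which is exactly where the hypothesis $h \notin \vec{H}_{1:j}$ is used. Everything else (termination, completeness of the hole closure, $I' \subset impls(P)$) follows from finiteness of $impls(P)$ and the invariant maintained by the loop.
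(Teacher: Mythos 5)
Your construction runs in the opposite direction from the paper's. The paper starts from the full selection $I$, replaces $I\cap impls(h)$ with $\{i\}$, pads the result with an arbitrary implementation of \emph{every} hole of $P$, and then prunes unreachable implementations by iterating a removal operator $f$ to a fixed point; you instead seed with the prefix $I\cap impls(\vec{H}_{1:j})\cup\{i\}$ and grow it by a closure loop that fills one still-open hole at a time. Both constructions are legitimate, and your sibling-freeness argument (from $pars(I\cap impls(\vec{H}_{1:j}))\subset\vec{H}_{1:j}$, the hypothesis $h\notin\vec{H}_{1:j}$, and monotonicity of $siblings$) is correct and somewhat more explicit than the paper's corresponding remark; your termination argument is also fine.

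There is, however, one step you assert without proof, and it is exactly where the topological ordering has to do some work. To get $valid_P(I')$ you need $pars(I')\subset holes(P_{base})\cup holes(I')$ (the ``no extra implementations'' half of criterion 2). You check this for the implementations added by the closure loop and, implicitly, for $i$ itself, but not for the seed elements $i''\in I\cap impls(\vec{H}_{1:j})$. A priori, the hole $par(i'')$ might be demanded in $I$ only by some implementation that you discarded when you cut $I$ down to $impls(\vec{H}_{1:j})$; then $i''$ would be an orphan in $I'$ and $I'$ would violate criterion 2. This cannot happen, but one must say why: by $valid_P(I)$ there is some $i'''\in I$ with $par(i'')\in holes(i''')$ (unless $par(i'')\in holes(P_{base})$, in which case you are done), and since $par(i''')\rightarrow par(i'')$ is an edge of $G_{Dep}(P)$ while $par(i'')\in\vec{H}_{1:j}$, the topological order (\cref{holes-topo}) forces $par(i''')$ to occur earlier than $par(i'')$ in $\vec{H}$, hence $i'''\in I\cap impls(\vec{H}_{1:j})$ and the witness survives into your seed. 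Adding that argument closes the gap; without it the invariant your loop maintains is not established at $k=0$.
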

\begin{proof}
We will construct a selection set that satisfies the property by modifying \(I\).

Let \(I_1 = I - impls(h) \cup \{i\}\).
Let \(I_2 = I_1 \cup \st{\texttt{any}(impls(h))}{h \in holes(P) - pars(I_2)}\), where \(\texttt{any}(s)\) is an arbitrary element of a set.
The sets \(impls(h)\) are never empty by \(valid(P)\).
Then, \(pars(I_2) = holes(P)\) and \(I \cap impls(\vec{H}_{1:j}) \cup \{i\} \subset I_2\).

Let \(f(I) = \st{i}{i \in I, par(i) \in holes(I) \cup holes(P_{base})}\).
Let \(I_3\) be the fixed point applying \(f\) to \(I_2\).
We know \(f\) converges to some \(I_3\) because \(|I|\) is finite, cannot go below \(0\), and decreases monotonically under \(f\) until convergence.

\(I_3\) meets the criteria for \(valid_P(I_3)\):

\begin{enumerate}
\item \(I_3 \subset impls(P)\): \(I_3 \subset I_2 \subset impls(holes(P)) \cup I \cup \{i\} \subset impls(P)\) by \(valid_P(I)\).
\item By \(I_3 = f(I_3)\) we know \(pars(I_3) \subset holes(I_3) \cup holes(P_{base})\).
Since \(pars(I_2) = holes(P) \supset holes(I_3) \cup holes(P_{base})\) and \(f\) does not remove any \(h \in holes(I_3) \cup holes(P_{base})\), we have \(pars(I_3) = holes(I_3) \cup holes(P_{base})\).
\item \(siblings(I_3, I_3) = \emptyset\), because only one \(i \in impls(h)\) for each \(h \not\in pars(I)\) is added to \(I\) in the construction of \(I_1\) and \(I_2\).
\end{enumerate}

By construction, \(I \cap impls(\vec{H}_{1:j}) \cup \{i\} \subset I_3\).
\end{proof}

\begin{lemma}[$N_j$ represents exactly the model prefixes]
\label{nodes-subsets}

\(\forall j \in [0, |\vec{H}|], \st{I}{(H, I) \in N_j} = \st{I \cap impls(\vec{H}_{1:j})}{I \in N(ModelGraph(P))}\).
\end{lemma}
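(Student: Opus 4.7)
My plan is to prove the identity by induction on $j$, showing both containments simultaneously at each step.

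For the base case $j=0$, $N_0 = \{(holes(P_{base}), \emptyset)\}$ so the LHS is $\{\emptyset\}$, and since $impls(\vec{H}_{1:0}) = \emptyset$ the RHS is also $\{\emptyset\}$ (using that $N(\texttt{ModelGraph}(P))$ is nonempty for valid $P$, which follows by greedily filling holes in reverse topological order, terminating by $Acyclic(P)$). For the inductive step, assume the claim at $j-1$ and pick $(H_{j-1}, I_{j-1}) \in N_{j-1}$ paired with a witness $I \in N(\texttt{ModelGraph}(P))$ satisfying $I_{j-1} = I \cap impls(\vec{H}_{1:j-1})$.

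For the $\subseteq$ direction, take $(H_j, I_j) \in N_j$; by the definition of \texttt{expand}, it arises from $\texttt{expand-node}(H_{j-1}, I_{j-1}, \vec{H}_j)$ for some $(H_{j-1}, I_{j-1}) \in N_{j-1}$. If $\vec{H}_j \notin H_{j-1}$, then $(H_j, I_j) = (H_{j-1}, I_{j-1})$ and \cref{holes-complete} (applied at step $j-1$, which already predicts the status of $\vec{H}_j$) gives $\vec{H}_j \notin pars(I)$, so $I \cap impls(\vec{H}_{1:j}) = I_{j-1} = I_j$, as desired. Otherwise $\vec{H}_j \in H_{j-1}$ and $I_j = I_{j-1} \cup \{i\}$ for some $i \in impls(\vec{H}_j)$; here I first observe that by how $H_{j-1}$ is accumulated by $\texttt{expand-node}$, $\vec{H}_j \in holes(I_{j-1}) \cup holes(P_{base})$, which lets me invoke \cref{any-impl} to produce $I' \in N(\texttt{ModelGraph}(P))$ with $I_{j-1} \cup \{i\} \subset I'$. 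Then the $siblings(I', I') = \emptyset$ clause of $valid_P(I')$ forces $I' \cap impls(\vec{H}_j) = \{i\}$, and the \emph{consistency} statement proved inside \cref{holes-complete} gives $I' \cap impls(\vec{H}_{1:j-1}) = I_{j-1}$, so $I' \cap impls(\vec{H}_{1:j}) = I_{j-1} \cup \{i\} = I_j$.

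For the $\supseteq$ direction, take any $I \in N(\texttt{ModelGraph}(P))$; by the inductive hypothesis choose $(H_{j-1}, I_{j-1}) \in N_{j-1}$ with $I_{j-1} = I \cap impls(\vec{H}_{1:j-1})$. If $\vec{H}_j \notin pars(I)$, then \cref{holes-complete} gives $\vec{H}_j \notin H_{j-1}$, so $\texttt{expand-node}$ preserves the pair, $(H_{j-1}, I_{j-1}) \in N_j$, and $I \cap impls(\vec{H}_{1:j}) = I_{j-1}$. If instead $\vec{H}_j \in pars(I)$, then \cref{holes-complete} gives $\vec{H}_j \in H_{j-1}$, so $\texttt{expand-node}$ emits a pair for each $i' \in impls(\vec{H}_j)$; choose the unique $i \in I \cap impls(\vec{H}_j)$ (unique by $siblings(I, I) = \emptyset$) to get $(H_{j-1} \cup holes(i), I_{j-1} \cup \{i\}) \in N_j$, matching $I \cap impls(\vec{H}_{1:j}) = I_{j-1} \cup \{i\}$.

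The main obstacle I anticipate is the bookkeeping in the $\subseteq$ direction when $\vec{H}_j \in H_{j-1}$: \cref{any-impl} produces some $I'$ containing the chosen $i$, but I need to rule out any other implementation of $\vec{H}_j$ slipping into $I'$ and to confirm $I' \cap impls(\vec{H}_{1:j-1}) = I_{j-1}$ exactly. Both points reduce to $siblings(I',I') = \emptyset$ combined with the consistency identity extracted from the proof of \cref{holes-complete}, but they deserve an explicit paragraph. A secondary subtlety is the boundary case $j=1$, where \cref{any-impl} is invoked at $j-1=0$; this either needs the lemma interpreted degenerately at $j=0$ (its proof does not use $j\geq 1$) or a direct one-line construction of $I'$ from a greedy completion of $\{i\}$.
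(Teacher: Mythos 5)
Your proof is correct and follows essentially the same route as the paper's: induction on $j$, splitting on whether $\vec{H}_j \in H_{j-1}$ (equivalently $\vec{H}_j \in pars(I)$ via \cref{holes-complete}), and invoking \cref{any-impl} to realize $I_{j-1}\cup\{i\}$ as a prefix of a full valid selection; the only differences are stylistic (direct containments rather than the paper's proofs by contradiction). Your two flagged subtleties — pinning down $I'\cap impls(\vec{H}_{1:j-1}) = I_{j-1}$ exactly via consistency and $siblings(I',I')=\emptyset$, and the $j=1$ boundary of \cref{any-impl} — are real and are in fact glossed over in the paper's own write-up, so your extra care there is warranted rather than a gap.
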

\begin{proof}
This is a proof by induction on \(j\).

For \(j=0\), 
\(\st{I}{(H, I) \in N_0} = \{\emptyset\} = \st{I \cap \emptyset}{I \in N(ModelGraph(P))}\).

For \(j \geq 1\):

Suppose \(\exists I \in \st{I \cap impls(\vec{H}_{1:j})}{I \in N(ModelGraph(P))}\) such that \(I \not\in \st{I}{(H, I) \in N_j}\).
Let \(I' = I \cap impls(\vec{H}_{1:j-1})\).
By induction, \(I' \in \st{I}{(H, I) \in N_{j-1}}\), so \(\exists H'\ s.t.\ (H', I') \in N_{j-1}\).

If \(\vec{H}_j \in H'\), then by \cref{holes-complete}, \(\vec{H}_j \in pars(I)\), so \(\exists i \in I\ s.t.\ i \in impls(\vec{H}_j)\).
By the definition of \(\texttt{expand-node}(H', I', \vec{H}_j)\), \(\forall i' \in impls(\vec{H}_j), I' \cup \{i'\} \in \st{I}{(H, I) \in N_j}\), so \(I' \cup \{i\} = I \in \st{I}{(H, I) \in N_j}\), which contradicts the construction of \(I\).

If \(\vec{H}_j \not\in H'\), then by \cref{holes-complete}, \(\vec{H}_j \not\in pars(I)\), so \(I' = I\).
By the definition of \(\texttt{expand-node}(H', I', \vec{H}_j)\), \((H', I') \in N_j\), so \(I \in \st{I}{(H, I) \in N_j}\), which again contradicts the construction of \(I\).

Therefore, there is no such I, and \(\st{I \cap impls(\vec{H}_{1:j})}{I \in N(ModelGraph(P))} \subset \st{I}{(H, I) \in N_j}\).

Suppose \(\exists I \in \st{I}{(H, I) \in N_j}\) such that \(I \not\in \st{I \cap impls(\vec{H}_{1:j})}{I \in N(ModelGraph(P))}\).
By the definition of \(N_j\), \(\exists (H', I')\in N_{j-1}\ s.t.\ (H, I) \in \texttt{expand-node}(H', I', \vec{H}_{j})\).
By induction, \(I' \in \st{I \cap impls(\vec{H}_{1:j-1})}{I \in N(ModelGraph(P))}\), and therefore \(\exists I^* \in N(ModelGraph(P))\ s.t.\ I' \subset I^*\).

If \(\vec{H}_{j} \not\in H'\), then by the definition of \texttt{expand-node}, \(I = I'\), and by \cref{holes-complete}, \(\vec{H}_j \not\in pars(I^*)\); so \(I = I' = I^* \cap impls(\vec{H}_{1:j-1}) = I^* \cap impls(\vec{H}_{1:j})\), so \(I \in \st{I \cap impls(\vec{H}_{1:j})}{I \in N(ModelGraph(P))}\).

If \(\vec{H}_{j} \in H'\), then by the definition of \texttt{expand-node}, \(\exists i \in impls(\vec{H}_j)\ s.t.\ I = I' \cup \{i\}\), and by \cref{holes-complete}, \(\vec{H}_j \in pars(I^*)\).

Because \(\vec{H}_j \in holes(I^* \cap impls(\vec{H}_{1:j-1})) - \vec{H}_{1:j-1}\), by \cref{any-impl}, \(\exists I^i \in N(ModelGraph(P))\ s.t.\ I \cap impls(H_{1:j-1}) \cup \{i\} \subset I^i\), so \(I' \cup \{i\} = I \subset I^i\), so \(I \in \st{I \cap impls(\vec{H}_{1:j})}{I \in N(ModelGraph(P))}\).

Therefore, \(\st{I}{(H, I) \in N_j} \subset \st{I \cap impls(\vec{H}_{1:j})}{I \in N(ModelGraph(P))}\).
\end{proof}

\begin{theorem}[$N_{|\vec{H}|}$ is correct and complete]
\label{nodes-correct}
\(\st{I}{(H, I) \in N_{|\vec{H}|}} = N(ModelGraph(P))\).
\end{theorem}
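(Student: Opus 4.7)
The plan is to reduce Theorem \ref{nodes-correct} to an application of Lemma \ref{nodes-subsets} by showing that the intersection with $impls(\vec{H}_{1:|\vec{H}|})$ is the identity on any valid selection. I would first instantiate Lemma \ref{nodes-subsets} at $j = |\vec{H}|$, giving
\[
\{I : (H, I) \in N_{|\vec{H}|}\} = \{I \cap impls(\vec{H}_{1:|\vec{H}|}) : I \in N(ModelGraph(P))\}.
\]
So everything reduces to proving that $I \cap impls(\vec{H}) = I$ for every $I \in N(ModelGraph(P))$, which amounts to showing $I \subset impls(\vec{H})$.

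For this, I would argue as follows. Pick any $i \in I$. By $valid_P(I)$, we have $I \subset impls(P)$, so $par(i) \in pars(impls(P))$. By the structural constraint \ref{every-hole}, $par(i)$ is a hole that has implementations in $P$, and by definition of $holes(P)$ it lies in $holes(P_{base}) \cup holes(impls(P))$, i.e., in the node set $holes(P)$ of the dependency graph $G_{Dep}(P)$. Since $\vec{H}$ is a topological ordering of the entire node set of $G_{Dep}(P)$ (Lemma \ref{holes-topo}), we have $par(i) \in \vec{H}_{1:|\vec{H}|}$, so $i \in impls(\vec{H}_{1:|\vec{H}|})$. Hence $I \subset impls(\vec{H}_{1:|\vec{H}|})$, and $I \cap impls(\vec{H}_{1:|\vec{H}|}) = I$.

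Substituting back, $\{I : (H, I) \in N_{|\vec{H}|}\} = \{I : I \in N(ModelGraph(P))\} = N(ModelGraph(P))$, which is the desired equality.

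I do not expect any real obstacle here; the theorem is essentially a corollary of Lemma \ref{nodes-subsets} once one observes that $\vec{H}$ enumerates every hole in $P$ and that every implementation in a valid selection implements some hole of $P$. The only mild subtlety is to be careful that $holes(P)$ (the set of holes considered by the algorithm, and hence the nodes of $G_{Dep}(P)$) really does contain $par(i)$ for every $i \in impls(P)$ — but this follows immediately from the definitions $holes(P) = holes(P_{base}) \cup holes(impls(P))$ combined with the fact that every implementation's parent hole is referenced somewhere in the program (either in $P_{base}$ or in another implementation, by structural constraint \ref{every-hole} applied in reverse via the inclusion $pars(impls(P)) \subset holes(P)$ enforced by valid selections and the acyclicity argument).
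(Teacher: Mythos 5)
Your proof is correct and follows essentially the same route as the paper: instantiate Lemma \ref{nodes-subsets} at $j=|\vec{H}|$ and observe that intersecting a valid selection with $impls(\vec{H}_{1:|\vec{H}|})$ is the identity because $\vec{H}$ enumerates all of $holes(P)$. One small note: the cleanest justification that $par(i)\in\vec{H}$ for each $i\in I$ comes from criterion 2 of $valid_P(I)$ (which gives $pars(I)=holes(P_{base})\cup holes(I)\subset holes(P)$) rather than from structural constraint \ref{every-hole}, but you arrive at the right fact either way.
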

\begin{proof}
By \cref{nodes-subsets}, \(\st{I}{(H, I) \in N_{|\vec{H}|}} = \st{I \cap impls(\vec{H}_{1:|\vec{H}|})}{I \in N(ModelGraph(P))}\), and because \(\vec{H}_{1:|\vec{H}|} = pars(P)\), \(\st{I \cap impls(\vec{H}_{1:|\vec{H}|})}{I \in N(ModelGraph(P))} = N(ModelGraph(P))\).
\end{proof}

\begin{lemma}[Edge prefix endpoints are model prefixes]
\label{edge-endpoints}
\(\forall j \in [0, |\vec{H}|]\), \(\forall (H_1\), \(I_1\), \(H_2\), \(I_2) \in E_j\), \((H_1, I_1) \in N_j\) and \((H_2, I_2) \in N_j\).
\end{lemma}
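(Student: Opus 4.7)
The plan is to prove \cref{edge-endpoints} by induction on $j$, matching each edge's construction directly against the node construction in \texttt{expand-node}. The base case $j=0$ is immediate because $E_0 = \emptyset$, so the claim is vacuously true.

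For the inductive step, I would partition $E_j$ by source, according to the definition of \texttt{expand}: each $(H_1, I_1, H_2, I_2) \in E_j$ arises either as an element of $\texttt{new-edges}(H, I, \vec{H}_j)$ for some $(H, I) \in N_{j-1}$, or as an element of $\texttt{expand-edge}(H_1', I_1', H_2', I_2', \vec{H}_j)$ for some $(H_1', I_1', H_2', I_2') \in E_{j-1}$. I would then handle each source separately, checking in each case that the endpoints match nodes produced by \texttt{expand-node} applied to appropriate elements of $N_{j-1}$, so that they lie in $N_j$ by the definition of \texttt{expand} on nodes.

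For edges from $\texttt{new-edges}(H, I, \vec{H}_j)$, the case condition forces $\vec{H}_j \in H$, and the endpoints are exactly $(H \cup holes(i_k), I \cup \{i_k\})$ for $i_k \in impls(\vec{H}_j)$, $k=1,2$. These are precisely the elements of $\texttt{expand-node}(H, I, \vec{H}_j)$ in the case $\vec{H}_j \in H$, so both endpoints are in $N_j$. For edges from $\texttt{expand-edge}(H_1', I_1', H_2', I_2', \vec{H}_j)$, the inductive hypothesis gives $(H_1', I_1'), (H_2', I_2') \in N_{j-1}$, and then I would dispatch on the four cases of \texttt{expand-edge} according to whether $\vec{H}_j$ lies in $H_1'$ and/or $H_2'$: in each case the new endpoint $(H_k', I_k')$ (if $\vec{H}_j \notin H_k'$) or $(H_k' \cup holes(i), I_k' \cup \{i\})$ (if $\vec{H}_j \in H_k'$) coincides exactly with an element of $\texttt{expand-node}(H_k', I_k', \vec{H}_j)$, hence is in $N_j$.

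There is no real conceptual obstacle here; the lemma is essentially a syntactic check that the edge-construction rules are compatible with the node-construction rules. The only thing to be careful about is ensuring that, in the mixed cases of \texttt{expand-edge} (where $\vec{H}_j$ is in only one of $H_1'$, $H_2'$), the unchanged endpoint is still in $N_j$; this holds because when $\vec{H}_j \notin H_k'$, $\texttt{expand-node}(H_k', I_k', \vec{H}_j) = \{(H_k', I_k')\}$, so the prior node is carried forward into $N_j$ unchanged. Thus the main work is simply writing out the case split cleanly.
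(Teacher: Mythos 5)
Your proposal is correct and follows essentially the same route as the paper's proof: induction on $j$ with a vacuous base case, a split by whether the edge arises from \texttt{new-edges} or \texttt{expand-edge}, and a direct match of each resulting endpoint against the output of \texttt{expand-node} on the corresponding element of $N_{j-1}$. No further changes are needed.
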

\begin{proof}
This is a proof by induction on \(j\).

For \(j=0\), the lemma is trivially satisfied because \(E_0 = \emptyset\).

For \(j\geq 1\):

By definition of \texttt{expand}, \((H_1, I_1, H_2, I_2)\) came either from \texttt{new-edges} \texttt{expand-edge}.

If \((H_1, I_1, H_2, I_2) \in \texttt{new-edges}(H', I', \vec{H}_j)\) for some \((H', I') \in N_{j-1}\), then \(\vec{H}_j \in H'\) and \((H_1, I_1) = (H' \cup holes(i_1), I' \cup \{i_1\})\), \((H_2, I_2) = (H' \cup holes(i_2), I' \cup \{i_2\})\), so \((H_1, I_1), (H_2, I_2) \in \texttt{expand-node}(H', I', \vec{H}_j) \subset N_j\).

If \((H_1, I_1, H_2, I_2) \in \texttt{expand-edge}(H_1', I_1', H_2', I_2', \vec{H}_j)\) for some \((H_1', I_1', H_2', I_2') \in E_{j-1}\), then by induction \((H_1', I_1'), (H_2', I_2') \in N_{j-1}\).
If \(\vec{H}_j \in H_1'\), then \(\exists i \in impls(H_1')\ s.t.\ (H_1, I_1) = (H_1' \cup holes(i_1), I_1' \cup \{i_1\}) \in \texttt{expand-node}(H_1', I_1', \vec{H}_j) \subset N_j\).
If \(\vec{H}_j \not\in H_1'\), then \((H_1, I_1) = (H_1', I_1') \in \texttt{expand-node}(H_1', I_1', \vec{H}_j) \subset N_j\).
The same reasoning applies for \((H_2, I_2)\).
\end{proof}

\begin{lemma}[Edge prefix endpoints share one pair of siblings]
\label{edge-neighbors}
\(\forall j \in [0, |\vec{H}|], \forall (H_1, I_1, H_2, I_2) \in E_j\), \(|siblings(I_1, I_2)| = 1\).
\end{lemma}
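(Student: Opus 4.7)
The plan is to prove this by induction on $j$, following the same template used in the preceding edge-prefix and model-prefix lemmas. The base case $j=0$ is immediate since $E_0 = \emptyset$. For the inductive step, I would split on how $(H_1, I_1, H_2, I_2)$ was produced by $\texttt{expand}$: either it is fresh, arising from $\texttt{new-edges}(H, I, \vec{H}_j)$ for some $(H, I) \in N_{j-1}$, or it is the continuation of a prior edge, arising from $\texttt{expand-edge}(H_1', I_1', H_2', I_2', \vec{H}_j)$ for some $(H_1', I_1', H_2', I_2') \in E_{j-1}$ with $|siblings(I_1', I_2')|=1$ by induction.

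The crucial supporting observation, which I would establish as a small preliminary remark, is that for any $(H, I) \in N_{j-1}$ the selection $I$ contains no implementation of $\vec{H}_j$. This follows directly from \cref{nodes-subsets}: every such $I$ is of the form $I^* \cap impls(\vec{H}_{1:j-1})$, and since the holes in $\vec{H}$ are distinct, $I \cap impls(\vec{H}_j) = \emptyset$. By \cref{edge-endpoints}, the same conclusion holds for the $I_1', I_2'$ coming from any edge prefix in $E_{j-1}$. This is what prevents extra sibling pairs from being introduced when we add one or two implementations of $\vec{H}_j$.

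For the $\texttt{new-edges}$ case, $I_1 = I \cup \{i_1\}$ and $I_2 = I \cup \{i_2\}$ with $par(i_1)=par(i_2)=\vec{H}_j$ and $i_1 \neq i_2$, so $(i_1, i_2)$ is a sibling pair. Any other sibling pair would either sit entirely in $I$, contradicting $siblings(I,I) \subset siblings(I^*, I^*) = \emptyset$ (for a valid completion $I^*$), or pair $i_1$ or $i_2$ against some element of $I$ with parent $\vec{H}_j$, which is ruled out by the preliminary observation. For the $\texttt{expand-edge}$ case, I handle the four subcases on $h \in H_1'$ and $h \in H_2'$ uniformly: the added element is either the same $i$ on both sides (so $(i,i)$ does not count) or nothing, and in all cases the added element cannot form a new sibling pair with any pre-existing element because the other side has no implementation of $\vec{H}_j$ to pair with; thus $siblings(I_1, I_2) = siblings(I_1', I_2')$, which has size one by induction.

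The main obstacle is the bookkeeping in the $\texttt{expand-edge}$ case, specifically confirming in each of the four subcases that neither endpoint can accidentally acquire a second sibling pair. Once the preliminary observation that $I_1', I_2' \subset impls(\vec{H}_{1:j-1})$ is in hand, each subcase reduces to noting that pairs involving the freshly added $i$ (or $i_1, i_2$) can only point at each other, and all other potential pairs were already controlled by the inductive hypothesis.
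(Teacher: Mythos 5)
Your proof is correct and follows essentially the same route as the paper's: induction on $j$ with a case split on whether the edge prefix arises from \texttt{new-edges} or \texttt{expand-edge}, using the inductive hypothesis to carry the single sibling pair forward. Your preliminary observation that endpoints in $N_{j-1}$ and $E_{j-1}$ contain no implementations of $\vec{H}_j$ (via \cref{nodes-subsets} and \cref{edge-endpoints}) makes explicit a step the paper leaves implicit when it asserts that no additional sibling pairs can be gained, and it correctly covers the asymmetric \texttt{expand-edge} subcases where only one endpoint is extended.
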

\begin{proof}
This is a proof by induction on \(j\).

For \(j=0\), the lemma is trivially satisfied because \(E_0 = \emptyset\).

For \(j\geq 1\):

By definition of \texttt{expand}, \((H_1, I_1, H_2, I_2)\) came either from \texttt{new-edges} \texttt{expand-edge}.

If \((H_1, I_1, H_2, I_2) \in \texttt{new-edges}(H', I', \vec{H}_j)\) for some \((H', I') \in N_{j-1}\), then \(\exists (i_1, i_2) \in {impls(\vec{H}_j) \choose 2}\) such that \(I_1 = I' \cup \{i_1\}\), \(I_2 = I' \cup \{i_2\}\), \(par(i_1) = par(i_2) = \vec{H}_j\), so \((i_1, i_2) \in siblings(I_1, I_2)\).
Since \(I_1\) and \(I_2\) are otherwise identical, \(siblings(I_1, I_2)\) can have no other elements.

If \((H_1, I_1, H_2, I_2) \in \texttt{expand-edge}(H_1', I_1', H_2', I_2', \vec{H}_j)\) for some \((H_1', I_1', H_2', I_2') \in E_{j-1}\), then by induction \(siblings(I_1', I_2') = \{(i_1, i_2)\}\) for some \(i_1, i_2\).
Since \((I_1, I_2)\) is either \((I_1', I_2')\) or \((I_1' \cup \{i\}, I_2' \cup \{i\})\) for some \(i \in impls(\vec{H}_j)\), \(I_1\) and \(I_2\) cannot have gained or lost any siblings, so \(siblings(I_1, I_2) = \{(i_1, i_2)\}\).
\end{proof}

\begin{lemma}[Model prefixes without siblings are equal]
\label{no-siblings-equal}
\(\forall I_1, I_2 \in N(ModelGraph(P))\), \(\forall j \in [1, |\vec{H}|]\), let \(I_1' = I_1 \cap impls(\vec{H}_{1:j})\) and \(I_2' = I_2 \cap impls(\vec{H}_{1:j})\). If \(siblings(I_1, I_2) = \emptyset\), then \(I_1' = I_2'\).
\end{lemma}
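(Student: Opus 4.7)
The plan is induction on $j$, with inductive hypothesis that, for any $I_1, I_2 \in N(\texttt{ModelGraph}(P))$ satisfying $siblings(I_1, I_2) = \emptyset$, we have $I_1 \cap impls(\vec{H}_{1:j-1}) = I_2 \cap impls(\vec{H}_{1:j-1})$. The base case is vacuous (empty intersections). For the inductive step it suffices to show $I_1 \cap impls(\vec{H}_j) = I_2 \cap impls(\vec{H}_j)$, since combining this with the inductive hypothesis gives $I_1' = I_2'$.

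The crux of the inductive step will be a symmetric bicondition: $\vec{H}_j \in pars(I_1) \iff \vec{H}_j \in pars(I_2)$. I will prove the forward direction and invoke symmetry. Assume $\vec{H}_j \in pars(I_1)$. By $valid_P(I_1)$, either $\vec{H}_j \in holes(P_{base})$ — in which case $valid_P(I_2)$ immediately gives $\vec{H}_j \in pars(I_2)$ — or there exists $i^* \in I_1$ with $\vec{H}_j \in holes(i^*)$. In the latter case, \cref{holes-topo} places $par(i^*) \in \vec{H}_{1:j-1}$, so $i^* \in I_1 \cap impls(\vec{H}_{1:j-1})$; the inductive hypothesis then gives $i^* \in I_2$, so $\vec{H}_j \in holes(I_2) \subset pars(I_2)$.

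Given the bicondition, I split on whether $\vec{H}_j \in pars(I_1)$. If not (so also $\vec{H}_j \not\in pars(I_2)$), then both $I_1 \cap impls(\vec{H}_j)$ and $I_2 \cap impls(\vec{H}_j)$ are empty by the restatement of criterion 2 of $valid_P$. Otherwise, validity ensures each intersection contains a single element $i_1$ and $i_2$ respectively; any difference $i_1 \neq i_2$ would produce $(i_1, i_2) \in siblings(I_1, I_2)$, contradicting the hypothesis. Hence $i_1 = i_2$ and the intersections coincide.

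The main obstacle is the bicondition step, since a priori $I_1$ and $I_2$ could admit $\vec{H}_j$ into their respective parent sets through entirely unrelated chains of implementations, making it unclear why agreement on earlier prefixes should transfer. The topological ordering is what rescues the argument: it forces any implementation that witnesses $\vec{H}_j \in pars(I_k)$ to have its own parent strictly earlier in $\vec{H}$, exactly where the inductive hypothesis already equates $I_1$ and $I_2$.
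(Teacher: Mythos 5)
Your proof is correct, but it is packaged differently from the paper's. The paper runs an extremal argument: it takes the element $i$ of $I_1' - I_2'$ whose parent $\vec{H}_k$ is earliest in the topological order, uses $siblings(I_1,I_2)=\emptyset$ to conclude $\vec{H}_k \notin pars(I_2')$, and then uses validity plus \cref{holes-topo} to exhibit an $i' \in I_1'$ with $\vec{H}_k \in holes(i')$ and $par(i')$ strictly earlier, which by minimality lies in $I_2'$ and forces the contradiction $\vec{H}_k \in pars(I_2')$. You instead induct on the prefix index $j$ and isolate the biconditional $\vec{H}_j \in pars(I_1) \iff \vec{H}_j \in pars(I_2)$ as the crux; the two arguments are the same well-founded descent along $\vec{H}$ wearing different clothes, but your decomposition is arguably cleaner to check because the "at most one implementation per parent, and distinct siblings are forbidden" step is cleanly separated from the "parenthood agrees" step, at the cost of carrying an explicit inductive hypothesis. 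One small point of care: you cite \cref{holes-topo} as placing $par(i^*) \in \vec{H}_{1:j-1}$, whereas that lemma literally concludes $par(i^*) \in \vec{H}_{1:j-1}$ \emph{or} $par(i^*) \in holes(P_{base})$; the stronger conclusion you need does hold in both branches, since every hole of $P$ (base holes included) is a node of $G_{Dep}(P)$ and the edge $par(i^*) \rightarrow \vec{H}_j$ forces $par(i^*)$ to precede $\vec{H}_j$ in any topological order --- this is exactly how the paper itself invokes the ordering --- but you should say so rather than eliding the disjunct.
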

\begin{proof}
Let \(i\) be the element of \(I_1' - I_2'\) so that \(par(i) = \vec{H}_k\) with minimal \(k\).
Since \(i\not\in I_2'\) and \(siblings(I_1', I_2') = \emptyset\), \(\vec{H}_k \not\in pars(I_2')\).
Since \(valid_P(I_1)\) and \(\vec{H}_k \in pars(I_1')\), \(\vec{H}_k \in holes(P_{base}) \cup holes(I')\).
Since \(valid_P(I_2)\), \(holes(P_{base}) \subset pars(I_2')\), so \(\vec{H}_k \not\in holes(P_{base})\), so \(\exists i' \in I_1'\ s.t.\ \vec{H}_j \in holes(i')\)
By the topological sorting of \(\vec{H}\), \(par(i') = \vec{H}_m\) for some \(m < k\).
Since \(k\) is minimal in \(I_1' - I_2'\), \(i' \not\in I_1'-I_2'\), so \(i' \in I_2'\).
Then \(\vec{H}_k \in holes(i') \subset holes(I_2') \subset pars(I_2')\), but \(\vec{H}_k \not\in pars(I_2')\).
Therefore \(I_1' - I_2'\) is empty.
The same reasoning applies to show \(I_2' - I_1'\) is empty, so \(I_1' = I_2'\).
\end{proof}

\begin{lemma}[$E_j$ is a complete set of edge prefixes]
\label{edges-complete}
\(\forall j \in [1, |\vec{H}|], \st{\big(I_1 \cap impls(\vec{H}_{1:j}), I_2 \cap impls(\vec{H}_{1:j})\big)}{I_1, I_2 \in N(ModelGraph(P)), siblings(I_1, I_2) = \{(i_1, i_2)\}, par(i_1) \in \vec{H}_{1:j}} \subset \st{(I_1, I_2)}{(H_1, I_1, H_2, I_2) \in E_j}\).
\end{lemma}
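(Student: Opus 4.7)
The plan is induction on $j$, in parallel with the style used for \cref{nodes-subsets}. Fix $I_1, I_2 \in N(ModelGraph(P))$ with $siblings(I_1, I_2) = \{(i_1, i_2)\}$ and $par(i_1) \in \vec{H}_{1:j}$, and let $I_1^{(k)} = I_1 \cap impls(\vec{H}_{1:k})$ and $I_2^{(k)} = I_2 \cap impls(\vec{H}_{1:k})$. The goal is to exhibit some $(H_1, I_1^{(j)}, H_2, I_2^{(j)}) \in E_j$.

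For the base case $j=1$, the hypothesis forces $par(i_1) = \vec{H}_1$; by \cref{holes-topo}, $\vec{H}_1$ has no in-neighbor in $G_{Dep}(P)$ other than $holes(P_{base})$, so $\vec{H}_1 \in H_0 = holes(P_{base})$. Then the $\texttt{new-edges}$ branch of $\texttt{expand}$ applied to $(H_0, \emptyset)$ and $\vec{H}_1$ produces the desired tuple for the pair $(i_1, i_2) \in \binom{impls(\vec{H}_1)}{2}$.

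For the inductive step, split on whether $par(i_1) \in \vec{H}_{1:j-1}$ or $par(i_1) = \vec{H}_j$. In the first case (Case A), restricting to $\vec{H}_{1:j-1}$ preserves the sibling pair (so $siblings(I_1^{(j-1)}, I_2^{(j-1)}) = \{(i_1, i_2)\}$), so the induction hypothesis gives some $(H_1', I_1^{(j-1)}, H_2', I_2^{(j-1)}) \in E_{j-1}$. I then translate membership of $\vec{H}_j$ in $H_1', H_2'$ to membership in $pars(I_1), pars(I_2)$ via \cref{holes-complete}, and check each of the four $\texttt{expand-edge}$ branches. The only subtle branch is when $\vec{H}_j \in pars(I_1) \cap pars(I_2)$: here the unique implementations $i \in I_1$ and $i' \in I_2$ with parent $\vec{H}_j$ must coincide, because otherwise $(i,i')$ would be an extra sibling in $siblings(I_1, I_2)$, contradicting $par(i_1) \neq \vec{H}_j$. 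Picking $i$ in the parameterized family returned by $\texttt{expand-edge}$ then yields the desired tuple.

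In the second case (Case B), $i_1 \notin I_1^{(j-1)}$ and $i_2 \notin I_2^{(j-1)}$, so $siblings(I_1^{(j-1)}, I_2^{(j-1)}) = \emptyset$ and hence $I_1^{(j-1)} = I_2^{(j-1)}$ by \cref{no-siblings-equal}. By \cref{nodes-subsets} there exists $(H', I_1^{(j-1)}) \in N_{j-1}$. I verify $\vec{H}_j \in H'$ by noting that $\vec{H}_j = par(i_1) \in pars(I_1) \subset holes(P_{base}) \cup holes(I_1)$, and then showing that the witnessing implementation $i^*$ with $\vec{H}_j \in holes(i^*)$ must satisfy $par(i^*) \in \vec{H}_{1:j-1}$ by \cref{holes-topo}, so $i^* \in I_1^{(j-1)}$ and hence $\vec{H}_j \in holes(P_{base}) \cup holes(I_1^{(j-1)}) = H'$. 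Then $\texttt{new-edges}(H', I_1^{(j-1)}, \vec{H}_j)$ applied to the pair $(i_1, i_2) \in \binom{impls(\vec{H}_j)}{2}$ yields the desired tuple. The main obstacle is the bookkeeping in Case A's ``both-sides'' subcase: ruling out a spurious additional sibling pair is what actually makes the prefix edge descend correctly through $\texttt{expand-edge}$.
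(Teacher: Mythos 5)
Your plan is correct and follows essentially the same route as the paper's proof: induction on $j$, splitting on whether the distinguished sibling pair's parent is $\vec{H}_j$ (handled via \cref{no-siblings-equal}, \cref{nodes-subsets}, and \texttt{new-edges}) or lies in $\vec{H}_{1:j-1}$ (handled via the induction hypothesis and the four branches of \texttt{expand-edge}, using \cref{edge-endpoints} and \cref{holes-complete} to relate $\vec{H}_j \in H_k'$ to $\vec{H}_j \in pars(I_k)$). You even make explicit the step the paper leaves implicit in the both-endpoints branch — that the two implementations of $\vec{H}_j$ must coincide or they would form a second sibling pair — so no further changes are needed.
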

\begin{proof}
This is a proof by induction on \(j\).

For \(j=0\), 
\(\st{\big(I_1 \cap \emptyset, I_2 \cap \emptyset\big)}{I_1, I_2 \in N(ModelGraph(P)), siblings(I_1, I_2) = \{(i_1, i_2)\}, par(i_1) \in \emptyset} = \emptyset \subset E_0 = \emptyset\).

For \(j \geq 1\):

We consider some \((I_1, I_2) \in \st{\big(I_1 \cap impls(\vec{H}_{1:j}), I_2 \cap impls(\vec{H}_{1:j})\big)}{I_1, I_2 \in N(ModelGraph(P)), siblings(I_1, I_2) = \{(i_1, i_2)\}, par(i_1) \in \vec{H}_{1:j}}\) and show that \((I_1, I_2) \in \st{(I_1, I_2)}{(H_1, I_1, H_2, I_2) \in E_j}\).
Let \(\{(i_1, i_2)\} = siblings(I_1, I_2)\).

If \(par(i_1) = par(i_2) = \vec{H}_{j}\), then since \(siblings(I_1 - i_1, I_2 - i_2) = \emptyset\), by \cref{no-siblings-equal}, \(I_1 - i_1 = I_2 - i_2\).
Since \(I_1 - i_1 \subset impls(\vec{H}_{1:j-1})\), by \cref{nodes-subsets}, \(\exists H'\ s.t.\  (H', I_1 - i_1) \in N_{j-1}\), therefore \(\texttt{new-edges}(H', I_1 - i_1, \vec{H}_j) \subset E_j\), and because \(\vec{H}_j \in pars(I_1)\) implies \(\vec{H}_j \in H'\) and \((i_1, i_2) \in {impls(\vec{H}_j) \choose 2}\), \((H' \cup holes(i_1), I_1 - i_1 \cup i_1 = I_1, H' \cup holes(i_2), I_1 - i_1 \cup i_2 = I_2) \in \texttt{new-edges}(H', I_1 - i_1, \vec{H}_j)\).

If \(par(i_1) = par(i_2) \neq \vec{H}_{j}\), then by induction, \((I_1 \cap impls(\vec{H}_{1:j-1}), I_2 \cap impls(\vec{H}_{1:j-1})) \in \st{(I_1, I_2)}{(H_1, I_1, H_2, I_2) \in E_{j-1}}\), so \(\exists (H_1', I_1', H_2', I_2') \in E_{j-1}\) such that \(I_1' = I_1 \cap impls(\vec{H}_{1:j-1})\) and \(I_2' = I_2 \cap impls(\vec{H}_{1:j-1}))\).

If \(\vec{H}_j \not\in H_1'\) and \(\vec{H}_j \not\in H_2'\), then by \cref{edge-endpoints} and \cref{holes-complete}, \(I_1 = I_1'\), \(I_2 = I_2'\), and by the definition of \(\texttt{expand-edge}\), \((H_1', I_1', H_2', I_2') = (H_1', I_1, H_2', I_2) \in \texttt{expand-edge}(H_1', I_1', H_2', I_2', \vec{H}_j) \subset E_j\).

If \(\vec{H}_j \in H_1'\) and \(\vec{H}_j \not\in H_2'\), then \(\exists i \in impls(\vec{H}_j)\) such that \(I_1 = I_1' \cup \{i\}\) and \(I_2 = I_2'\), and \(\forall i' \in impls(\vec{H}_j)\), \((H_1' \cup holes(i'), I_1' \cup \{i'\}, H_2', I_2') = (H_1' \cup holes(i'), I_1, H_2', I_2) \in \texttt{expand-edge}(H_1', I_1', H_2', I_2', \vec{H}_j) \subset E_j\).
Similarly, if \(\vec{H}_j \not\in H_1'\) and \(\vec{H}_j \in H_2'\), then \(I_1 = I_1', I_2 = I_2' \cup \{i\}\) and \((H_1', I_1', H_2' \cup holes(i), I_2' \cup \{i\}) = (H_1', I_1, H_2' \cup holes(i), I_2) \in \texttt{expand-edge}(H_1', I_1', H_2', I_2', \vec{H}_j) \subset E_j\).

If \(\vec{H}_j \in H_1'\) and \(\vec{H}_j \in H_2'\), then since \(par(i_1) \neq \vec{H}_j\), \(\exists i \in impls(\vec{H}_j)\) such that \(I_1 = I_1' \cup \{i\}\) and \(I_2 = I_2' \cup \{i\}\), and \(\forall i' \in impls(\vec{H}_j)\), \((H_1' \cup holes(i), I_1' \cup \{i\}, H_2' \cup holes(i), I_2' \cup \{i\}) = (H_1' \cup holes(i), I_1 \cup \{i\}, H_2' \cup holes(i), I_2) \in \texttt{expand-edge}(H_1', I_1', H_2', I_2', \vec{H}_j) \subset E_j\).
\end{proof}

\begin{theorem}[$E_{|\vec{H}|}$ is the correct and complete edge set]
\label{edges-correct}
\(\st{(I_1, I_2)}{(H_1, I_1, H_2, I_2) \in E_{|\vec{H}|}} = E(ModelGraph(P))\).
\end{theorem}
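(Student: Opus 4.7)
The plan is to prove the two set inclusions separately, with each direction essentially bundling together lemmas that have already been established.

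For the inclusion $\st{(I_1, I_2)}{(H_1, I_1, H_2, I_2) \in E_{|\vec{H}|}} \subset E(ModelGraph(P))$, I would pick an arbitrary tuple $(H_1, I_1, H_2, I_2) \in E_{|\vec{H}|}$ and verify the two clauses in the definition of $E(ModelGraph(P))$. First, \cref{edge-endpoints} gives $(H_1, I_1), (H_2, I_2) \in N_{|\vec{H}|}$, and \cref{nodes-correct} then yields $I_1, I_2 \in N(ModelGraph(P))$. Second, \cref{edge-neighbors} gives $|siblings(I_1, I_2)| = 1$. These two facts are exactly the defining conditions for membership in $E(ModelGraph(P))$.

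For the reverse inclusion, I would take an arbitrary $(I_1, I_2) \in E(ModelGraph(P))$ and apply \cref{edges-complete} with $j = |\vec{H}|$. Because $\vec{H}$ is a topological ordering of all holes in $P$, we have $impls(\vec{H}_{1:|\vec{H}|}) = impls(P)$, so $I_k \cap impls(\vec{H}_{1:|\vec{H}|}) = I_k$ for $k \in \{1,2\}$ (using $I_k \subset impls(P)$ from $valid_P(I_k)$). Moreover, if $\{(i_1, i_2)\} = siblings(I_1, I_2)$ then $par(i_1)$ is a hole of $P$ and hence lies in $\vec{H}_{1:|\vec{H}|}$, so the side condition of \cref{edges-complete} is automatically satisfied. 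The lemma then places $(I_1, I_2)$ in $\st{(I_1, I_2)}{(H_1, I_1, H_2, I_2) \in E_{|\vec{H}|}}$.

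Both directions are essentially bookkeeping once the preceding lemmas are in hand; the only subtlety is confirming that the hypothesis $par(i_1) \in \vec{H}_{1:j}$ of \cref{edges-complete} is vacuously met at $j = |\vec{H}|$, and that the intersections with $impls(\vec{H}_{1:|\vec{H}|})$ collapse to the original selections. The real work was done in proving \cref{edges-complete}, \cref{edge-endpoints}, and \cref{edge-neighbors}; no new induction or case analysis should be required here.
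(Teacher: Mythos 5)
Your proposal is correct and matches the paper's proof essentially step for step: the forward inclusion combines \cref{edge-endpoints}, \cref{nodes-correct}, and \cref{edge-neighbors}, and the reverse inclusion instantiates \cref{edges-complete} at $j = |\vec{H}|$, observing that the intersections with $impls(\vec{H}_{1:|\vec{H}|})$ collapse and the condition $par(i_1) \in \vec{H}_{1:|\vec{H}|}$ is automatic. No differences worth noting.
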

\begin{proof}
\(\forall (H_1, I_1, H_2, I_2) \in E_{|\vec{H}|}\), \(I_1, I_2 \in N_{|\vec{H}|}\) by \cref{edge-endpoints}, so \(I_1, I_2 \in N(ModelGraph(P))\) by \cref{nodes-correct}, and also \(|siblings(I_1, I_2)| = 1\) by \cref{edge-neighbors}.
Therefore \((I_1, I_2) \in E(ModelGraph(P))\), so \(\st{(I_1, I_2)}{(H_1, I_1, H_2, I_2) \in E_{|\vec{H}|}} \subset E(ModelGraph(P))\).

Because \(\vec{H}_{1:|\vec{H}|} = pars(P)\), 
\begin{align*}
&\st*{ \big(I_1 \cap impls(\vec{H}_{1:|\vec{H}|}), I_2 \cap impls(\vec{H}_{1:|\vec{H}|})\big)}{\begin{aligned} &I_1, I_2 \in N(ModelGraph(P)),\\ &siblings(I_1, I_2) = \{(i_1, i_2)\},\\ &par(i_1) \in \vec{H}_{1:|\vec{H}|} \end{aligned}}\\
&\qquad = \st*{(I_1, I_2)}{I_1, I_2 \in N(ModelGraph(P)), |siblings(I_1, I_2)| = 1}\\
&\qquad = E(ModelGraph(P))
\end{align*}
So by \cref{edges-complete}, \(E(ModelGraph(P)) \subset \st*{(I_1, I_2)}{(H_1, I_1, H_2, I_2) \in E_{|\vec{H}|}}\)
\end{proof}

\subsubsection{Efficiency}
\label{sec:org047ec90}
For a modular program \(P\), let \(N = |N(ModelGraph(P))|\) and \(E = |E(ModelGraph(P))|\). Index the graph nodes as \(\st*{n_{i}}{i \in [1\dots N]} = N(ModelGraph(P))\) and the edges as \(\st*{(e_{i, 1}, e_{i, 2})}{i \in [1\dots E]} = E(ModelGraph(P))\).
Let \(H = |holes(P)|\).

The minimum size possible of the explicit representation of \(ModelGraph(P)\) is then \(\sum_{i \in [1\dots N]} |n_i| + \sum_{j \in [1\dots E]} |e_{j, 1}| + |e_{j, 2}|\) identifiers.

\begin{enumerate}
\item \textbf{Runtime efficiency}.
\label{sec:orgfda031b}
Consider the selection set insertion operations \(I\cup\{i\}\) in \texttt{expand-node}.
Since implementations are never removed from any selection sets, and selection sets are never removed from model prefix sets, the number of such insertions must be \(O(\sum_{i \in [1\dots N]} |n_i|)\).

Consider the hole union operations \(H \cup holes(i)\) in \texttt{expand-node}.
Since each \(h \in holes(i)\) is eventually replaced by exactly one implementation, the amortized cost is again \(O(\sum_{i \in [1\dots N]} |n_i|)\).

The same arguments apply to the edge operations in \texttt{new-edges} and \texttt{expand-edge}:
the number of implementation set insertions must be \(O(\sum_{j \in [1\dots E]} |e_{j, 1}| + |e_{j, 2}|)\), and the amortized cost of each hole union operation is also \(O(\sum_{j \in [1\dots E]} |e_{j, 1}| + |e_{j, 2}|)\).

The set inclusion checks \(h \in H\) result in a small constant overhead for \texttt{expand-node}, \texttt{new-edges} and \texttt{expand-edge}.
Since \texttt{expand} is called \(H\) times with at most \(N\) nodes and \(E\) edges, the number of inclusion checks is \(O(H * N + H * E)\).

The resulting runtime complexity is then \(O(H*N + H*N)\).

\item \textbf{Space efficiency.}
\label{sec:org481b46c}
Because the nodes \(N_{j}\) and edges \(E_{j}\) are the entire program state, \(|N_{j}| \leq |N(ModelGraph(P))|\) and \(|E_{j}| \leq |E(ModelGraph(P))|\), and the elements of \(N_j\) and \(E_j\) are less than or equal to the elements of \(N(ModelGraph(P))\) and \(E(ModelGraph(P))\), the program state never exceeds the size of the output, so space complexity is \(\Theta(\sum_{i \in [1\dots N]} |n_i| + \sum_{j \in [1\dots E]} |e_{j, 1}| + |e_{j, 2}|)\).

\item \textbf{Notes on efficiency}.
\label{sec:orgc94b31c}
Since \(N_{\vec{H}}\) does not depend on any \(E_{j}\), when we only calculate the graph nodes we have a runtime complexity of \(O(H*N)\) and space complexity of \(O(\sum_{i \in [1\dots N]} |n_i|)\).

This method can efficiently handle the extreme cases in \cref{fig:pathological}.
We have no no asymptotic time or space overhead for the example \cref{fig:patho:tall}, and for \cref{fig:patho:wide}, we have no space overhead but an asymptotic time overhead of \(O(H*H)\) set inclusion checks.
\end{enumerate}
\subsection{\texttt{ModelNeighbors}}
\label{sec:orgaecc377}
\label{neighbors-algorithm}

Let \(Nei(I,P)\) be a shorthand for our previously defined \texttt{ModelNeighbors}:, \(Nei(I,P) = \texttt{ModelNeighbors}(P, I) = \st*{I'}{(I, I') \in E(\texttt{ModelGraph}(P))}\).

\noindent
Let \(Limit(I, P)\) be equal to a modular program \(P\), except that, for all holes \(h \in pars(P)\):
\[impls_{Limit(I, P)}(h) =
   \begin{dcases} 
   impls_P(h) \cap I & \text{if } h \in pars(I) \\
   impls_P(h)        & \text{otherwise}
   \end{dcases}\]

\noindent
We can compute \(Nei\) efficiently with \(Limit\) and our previously defined \(ModelGraph\) algorithm:

\[Nei(I, P) = \bigcup_{i \in I} \bigcup_{i' \in impls(par(i)) - \{i\}} N(ModelGraph(Limit(I-\{i\}\cup \{i'\}, P)))\]

\subsubsection{Proofs of correctness and completeness}
\label{sec:orgdd80503}
\begin{lemma}[Limit(I, P) is a valid modular program]
\label{limit-valid}
\(valid_{structure}(P)\), \(valid_P(I)\) implies \(valid_{structure}(Limit(P, I))\).
\end{lemma}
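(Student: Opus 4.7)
The plan is to verify each of the three structural constraints of \cref{structural-constraints} for $Limit(P, I)$, leaning on the observation that $Limit$ only restricts the implementation set of $P$ without altering the base: $impls(Limit(P, I)) \subset impls(P)$ and $P_{base}$ is unchanged. Throughout I would use the restatement of $valid_P(I)$ that every $h \in holes(P_{base}) \cup holes(I)$ has exactly one implementation in $I$, together with the companion restatement that $pars(I) \subset holes(P_{base}) \cup holes(I)$.

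For structural constraint 1 (acyclicity), I would observe that $MDG(Limit(P, I))$ is a subgraph of $MDG(P)$: its implementation-nodes form a subset of $impls(P)$, and since $par$ and $holes$ of each surviving implementation are determined syntactically and therefore unchanged, each of its edges is an edge of $MDG(P)$. Any cycle in $MDG(Limit(P, I))$ would thus yield a cycle in $MDG(P)$, contradicting $valid_{structure}(P)$.

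For structural constraint 2 (every hole has an implementation), I would fix an arbitrary $h \in holes(Limit(P, I))$ and split on whether $h \in pars(I)$. If $h \in pars(I)$, then by definition $impls_{Limit(P, I)}(h) = impls_P(h) \cap I$, and $valid_P(I)$ supplies a unique $i \in I$ with $par(i) = h$, so the intersection is nonempty. If $h \notin pars(I)$, the implementations of $h$ are unchanged from $P$ by the definition of $Limit$, so nonemptiness is inherited from $valid_{structure}(P)$. Structural constraint 3 (uniqueness of hole identifiers and of $(\text{hole}, \text{implementation})$ pairs) is immediate: $Limit$ only deletes implementations and never renames or duplicates, so any collision in $Limit(P, I)$ would already exist in $P$ and would contradict $valid_{structure}(P)$.

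The only delicate point I anticipate is structural constraint 2: we must ensure that intersecting with $I$ does not empty the implementation set of any hole that still appears in $Limit(P, I)$. This is precisely what $valid_P(I)$ guarantees for holes in $pars(I)$, so the potential obstacle dissolves once the case split on whether $h \in pars(I)$ is made cleanly; holes outside $pars(I)$ are simply left untouched by $Limit$ and inherit nonemptiness from $P$.
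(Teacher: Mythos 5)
Your proposal is correct and follows essentially the same route as the paper: acyclicity and identifier uniqueness are inherited because $Limit$ only removes implementations (so $MDG(Limit(P,I))$ is a subgraph of $MDG(P)$), and the non-emptiness of each hole's implementation set is handled by exactly the same case split on whether $h \in pars(I)$, invoking $valid_P(I)$ in the first case and $valid_{structure}(P)$ in the second. The paper is terser on constraints 1 and 3, but your elaboration fills in the same reasoning it implicitly relies on.
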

\begin{proof}

\noindent
Requirements \cref{acyclic,unique-ids} are implied by \(valid_{structure}(P)\) and \(impls(Limit(P, I)) \subset impls(P)\).

Requirement \cref{every-hole} is \(\forall h \in holes(P) \cup holes(P_{base})\), \(impls_{Limit(P, I)}(h) \neq \emptyset\).
For every \(h\), if \(h \not\in pars(I)\), then \(impls_{Limit(P, I)}(h) = impls_P(h)\), which must be non-empty given \(valid_{structure}(P)\).
If \(h \in pars(I)\), then \(impls_{Limit(P, I)}(h) = impls_P(h) \cap I\), which must have one element by the definition of \(valid_P(I)\).
\end{proof}

\begin{lemma}[Neighbors of $I$ are valid under $Limit(I-\{i\}\cup \{i'\}, P)$.]
\label{neighbors-valid}
\(\forall I' \in N(ModelGraph(P))\) such that \(siblings(I, I') = \{(i, i')\}\), \(valid_{Limit(I-\{i\}\cup \{i'\}, P)}(I')\).
\end{lemma}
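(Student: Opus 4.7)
The plan is to verify the three defining conditions of $valid_{P'}(I')$ directly, where I let $J = I - \{i\} \cup \{i'\}$ and $P' = Limit(J, P)$. Two preliminary observations will streamline everything: since $(i, i') \in siblings(I, I')$ forces $par(i) = par(i')$ and $i \neq i'$, we have $pars(J) = pars(I)$; and since $Limit$ modifies only the per-hole implementation sets, $P'_{base} = P_{base}$.

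First I would dispatch the two easy conditions. Condition 3, $siblings(I', I') = \emptyset$, depends only on $I'$ and is inherited from $valid_P(I')$. Condition 2, $pars(I') = holes(P'_{base}) \cup holes(I')$, similarly reduces to $pars(I') = holes(P_{base}) \cup holes(I')$ by the preliminary observation, which is again inherited from $valid_P(I')$.

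The bulk of the work is Condition 1, $I' \subset impls(P')$. Unpacking the definition of $Limit$, this reduces to showing that every $i'' \in I'$ with $par(i'') \in pars(J) = pars(I)$ lies in $J$. I would split into two sub-cases. If $i'' \in I \cap I'$, then $i'' \neq i$ (since $i \notin I'$ by $(i, i') \in siblings(I, I')$ combined with $siblings(I', I') = \emptyset$), hence $i'' \in I - \{i\} \subset J$. If instead $i'' \in I' \setminus I$ with $par(i'') \in pars(I)$, then by $valid_P(I)$ there exists $i^* \in I$ with $par(i^*) = par(i'')$, and $i^* \neq i''$ (because $i'' \notin I$), giving $(i^*, i'') \in siblings(I, I')$; the assumption $siblings(I, I') = \{(i, i')\}$ then forces $i^* = i$ and $i'' = i'$, so $i'' = i' \in J$.

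The step I expect to be delicate is sub-case (b): it is tempting to argue it away by claiming $I'$ can only differ from $I$ at the single hole $par(i)$, but this is not literally true because $i'$ may introduce new holes of its own that $I'$ must cover. The subtlety is precisely that such newly-introduced holes have parents \emph{outside} $pars(I)$, so the restriction imposed by $Limit$ does not apply to them; this is what makes the $pars(i'') \in pars(I)$ hypothesis in the case split do its work.
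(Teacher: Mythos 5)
Your proof is correct and follows essentially the same route as the paper's: conditions 2 and 3 are inherited because $Limit$ leaves the base, $holes$, and $pars$ untouched, and condition 1 comes down to showing that any element of $I'$ over a hole of $pars(I)$ other than $i'$ would create a second pair in $siblings(I,I')$. The paper phrases that last step as a proof by contradiction while you run the case analysis directly, but the content is the same.
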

\begin{proof}

\noindent
\begin{enumerate}
\item \textbf{\(I' \subset impls(Limit(I-\{i\}\cup \{i'\}, P))\)}:
Suppose \(\exists i_-' \in I'\) such that \(i_-' \not\in impls(Limit(I-\{i\}\cup \{i'\}, P))\).
It must be that \(par(i_-') \in pars(I-\{i\}\cup\{i'\})\), because otherwise \(impls_{Limit(I-\{i\}\cup \{i'\}, P)}(i_-') = impls_P(i_-')\) by the definition of \(Limit\).
Then \(\exists i_- \in I-\{i\}\cup\{i'\}\ s.t.\ par(i_-) = par(i_-')\), so \((i_-, i_-') \in siblings(I-\{i\}\cup\{i'\}, I')\) and \((i_-, i_-') \neq (i, i')\).

Since \(i' \in I'\), \(i_- \neq i'\), so \(i_- \in I'\).
Then \((i_-', i_-) \in siblings(I, I')\), which contradicts \(siblings(I, I') = \{(i, i')\}\), so there is no such \(i_-'\).
\item \textbf{\(pars(I') = holes(I') \cup holes(Limit(I-\{i\} \cup \{i'\}, P)_{base})\)} follows from \(valid_P(I')\) and \cref{limit-valid} because \(Limit\) does not modify \(base\), \(holes\) or \(pars\).
\item \textbf{\(siblings(I', I') = \emptyset\)} by \(valid_P(I')\).
\end{enumerate}
\end{proof}

\begin{lemma}[Selections that are valid under $Limit(I-\{i\}\cup \{i'\}, P)$ are neighbors of I.]
\label{valid-neighbors}
\(\forall I' \in N(ModelGraph(Limit(I-\{i\}\cup \{i'\}, P))), valid_P(I')\) and \(siblings(I, I') = \{(i, i')\}\).
\end{lemma}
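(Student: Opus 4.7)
The plan is to establish the two conclusions $valid_P(I')$ and $siblings(I, I') = \{(i, i')\}$ separately. The first is essentially bookkeeping: $Limit$ only restricts the per-hole implementation pools and leaves $P_{base}$, $holes$, $pars$, and $siblings$ untouched, so each of the three criteria of $valid_P(I')$ follows mechanically from $valid_{Limit(I - \{i\} \cup \{i'\}, P)}(I')$, using $impls(Limit(I-\{i\}\cup\{i'\},P)) \subset impls(P)$ for the subset condition and the identity of $holes(P_{base})$ for the coverage condition.

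For the containment $siblings(I, I') \subset \{(i, i')\}$, I take any $(j_1, j_2) \in siblings(I, I')$ and let $h := par(j_1) = par(j_2)$, so $h \in pars(I) \cap pars(I')$. By $valid_P(I)$, $j_1$ is the unique element of $I \cap impls_P(h)$; by validity of $I'$ in the limited program, $j_2$ lies in $impls_P(h) \cap (I - \{i\} \cup \{i'\})$. If $h \neq par(i)$, this intersection equals $\{j_1\}$, forcing $j_2 = j_1$ and contradicting $j_1 \neq j_2$. Hence $h = par(i)$, giving $j_1 = i$ and $j_2 = i'$.

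The reverse containment reduces to showing $i' \in I'$, and since $i'$ is the unique $Limit$-implementation of $par(i)$, this in turn reduces to the existence claim $par(i) \in pars(I')$. I expect this to be the main obstacle, because a priori $I'$ could avoid needing $par(i)$ by dropping every $I$-implementation that references it. My plan is a contradiction argument on $S := \{h \in pars(I) \mid h \notin pars(I')\}$. For $h \in S$, $h \notin holes(P_{base})$ (since $holes(P_{base}) \subset pars(I')$), so by $valid_P(I)$ there is some $i^* \in I$ with $h \in holes(i^*)$. Reusing the $\subset$ analysis applied to the hole $par(i^*)$, as long as $par(i^*) \neq par(i)$ and $par(i^*) \in pars(I')$, we are forced to have $i^* \in I'$, which would put $h$ back into $holes(I') \subset pars(I')$. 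So $par(i^*) \in S$, giving a map $S \to S$ on the finite set $S$ that must cycle; such a cycle translates directly into a cycle in $G_{Dep}(P)$, contradicting the acyclicity used in \cref{holes-topo}. A small subtlety is ruling out $par(i^*) = par(i)$: if $i^* \neq i$ this would violate $siblings(I,I) = \emptyset$, and if $i^* = i$ then $par(i) \in holes(i)$ already gives a self-loop in $G_{Dep}(P)$, so this edge case cannot occur and the uniqueness argument placing $i^*$ in $I'$ goes through unchanged.
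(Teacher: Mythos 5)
Your validity transfer and the forward containment $siblings(I, I') \subset \{(i, i')\}$ are correct and essentially match the paper's argument. The gap is in the reverse containment. Your strategy is to show that $S = \{h \in pars(I) \mid h \notin pars(I')\}$ is empty by exhibiting a map $S \to S$ whose iteration would force a cycle in $G_{Dep}(P)$; but $S$ is \emph{not} empty in general, so some step must fail. The step that fails is exactly your ``small subtlety'': from $i^* = i$ and $h \in holes(i^*)$ you conclude $par(i) \in holes(i)$, but $h$ need not equal $par(i)$, so no self-loop arises and the case $i^* = i$ is not ruled out. Concretely, let the base contain a single hole $A$ with implementations $i$ and $i'$, where $holes(i) = \{B\}$ and $holes(i') = \emptyset$, and let $B$ have one implementation $b$. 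Take $I = \{i, b\}$; then $I' = \{i'\}$ is a valid selection of $Limit(I - \{i\} \cup \{i'\}, P)$, and $B \in pars(I) \setminus pars(I')$, so $S = \{B\} \neq \emptyset$. The only witness $i^*$ for $B$ is $i$ itself, $par(i^*) = A \in pars(I')$, and your map exits $S$. The lemma still holds here ($siblings(I, I') = \{(i, i')\}$), but your argument does not establish it.

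What you actually need is only $par(i) \in pars(I')$, not $S = \emptyset$, and your descent can be repaired by rooting the chain at $h_0 = par(i)$ under the assumption $par(i) \notin pars(I')$: whenever the chain reaches some $h_k$ whose witness $i^*_k$ has $par(i^*_k) = par(i)$, the successor lands back in $S$ by the standing assumption rather than escaping, so the chain stays in the finite set $S$ and the resulting cycle contradicts the acyclicity of $G_{Dep}(P)$ used in \cref{holes-topo}. The paper takes a shorter route: having established $valid_P(I')$, both $I$ and $I'$ are nodes of $ModelGraph(P)$, so if $siblings(I, I')$ were empty then \cref{no-siblings-equal} would give $I = I'$, contradicting $i \in I$ while $i \notin impls(Limit(I - \{i\} \cup \{i'\}, P)) \supset I'$; combined with your forward containment this yields equality. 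The descent you are attempting to re-derive is already packaged, correctly, inside the proof of \cref{no-siblings-equal}.
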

\begin{proof}
Because \(impls(Limit(I-\{i\}\cup \{i'\}, P)) \subset impls(P)\) and \cref{limit-valid}, \(valid_{Limit(I-\{i\}\cup \{i'\}, P)}(I')\) implies each requirement of \(valid_P(I')\).

Suppose \(siblings(I, I') = \emptyset\).
Because \(I, I' \in N(ModelGraph(P))\), by \cref{no-siblings-equal}, \(I = I'\).
Then, \(i\in I \implies i \in I'\).
However, by \(valid_{Limit(I-\{i\}\cup \{i'\}, P)}(I')\), \(I' \subset impls(Limit(I-\{i\}\cup \{i'\}, P)) \not\ni i\), which is a contradiction.
Therefore \(siblings(I, I') \neq \emptyset\).

Let \((i_2, i_2') \in siblings(I, I')\), so \(par(i_2) = par(i_2')\).
If \(par(i_2') \neq par(i)\), then by the definition of \(Limit\), \(impls_{Limit(I-\{i\}\cup \{i'\}, P)}(par(i_2')) = \{i_2\}\), so \(i_2 = i_2'\), which contradicts \((i_2, i_2') \in siblings(I, I')\).
If \(par(i_2') = par(i)\), then \(impls_{Limit(I-\{i\}\cup \{i'\}, P)}(par(i_2')) = \{i'\}\), so \((i_2, i_2')\) must be \((i, i')\).
Therefore, \(siblings(I, I') = \{(i, i')\}\).
\end{proof}

\begin{theorem}[The models of $Limit(I-\{i\}\cup \{i'\}, P)$ are exactly the neighbors of $I$ with $siblings(I, I') = \{i, i'\}$, so we can union over possible siblings]
\label{nei-correct}
\[Nei(I, P) = \bigcup_{i \in I} \bigcup_{i' \in impls(par(i)) - \{i\}} N(ModelGraph(Limit(I-\{i\}\cup \{i'\}, P))) \].
\end{theorem}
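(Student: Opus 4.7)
The plan is to prove set equality by establishing two inclusions, both of which fall out from chaining the three supporting lemmas just proved (\cref{limit-valid,neighbors-valid,valid-neighbors}) together with \cref{nodes-correct}, which identifies $N(ModelGraph(\cdot))$ with the set of valid selections of the corresponding program.

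For the forward inclusion, I would take an arbitrary $I' \in Nei(I, P)$. By the definition of $Nei$ and $E(ModelGraph(P))$, this gives $|siblings(I, I')| = 1$; let $siblings(I, I') = \{(i, i')\}$ with $i \in I$, $i' \in I'$, and note that $par(i) = par(i')$ so $i' \in impls(par(i)) - \{i\}$. By \cref{neighbors-valid}, $valid_{Limit(I-\{i\}\cup\{i'\}, P)}(I')$. Since \cref{limit-valid} gives $valid_{structure}(Limit(I-\{i\}\cup\{i'\}, P))$, \cref{nodes-correct} applies to that limited program and yields $I' \in N(ModelGraph(Limit(I-\{i\}\cup \{i'\}, P)))$, which places $I'$ in the appropriate summand of the right-hand union.

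For the reverse inclusion, I would take $i \in I$, $i' \in impls(par(i)) - \{i\}$, and $I' \in N(ModelGraph(Limit(I-\{i\}\cup\{i'\}, P)))$. Again by \cref{limit-valid} and \cref{nodes-correct}, $I'$ is valid under $Limit(I-\{i\}\cup\{i'\}, P)$. Then \cref{valid-neighbors} directly gives both $valid_P(I')$ and $siblings(I, I') = \{(i, i')\}$. Thus $I', I \in N(ModelGraph(P))$ with $|siblings(I, I')| = 1$, so $(I, I') \in E(ModelGraph(P))$ and $I' \in Nei(I, P)$.

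The only delicate point is verifying that $I$ itself is a valid selection of $P$ (needed so that \cref{limit-valid,neighbors-valid,valid-neighbors} apply), which will be an explicit hypothesis of the statement; everything else is bookkeeping. I do not expect any real obstacle here because the lemmas were designed precisely to package the structural content of the theorem --- the argument is essentially a two-line chase in each direction.
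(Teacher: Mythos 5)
Your proposal is correct and takes essentially the same route as the paper: both directions rest on \cref{neighbors-valid} and \cref{valid-neighbors}, with the paper simply phrasing the two inclusions as a single set-equality followed by a union over the sibling pairs $(i, i')$. One small note: you do not need \cref{nodes-correct} here, since $N(ModelGraph(\cdot))$ is \emph{defined} as the set of valid selections of the given program (that theorem concerns the algorithm's output $N_{|\vec{H}|}$, not the semantic definition), but this over-citation does not affect the argument.
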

\begin{proof}
By \cref{valid-neighbors,neighbors-valid}, \(\st{I}{I \in N(ModelGraph(P)), siblings(I, I') = \{(i, i')\}} = N(ModelGraph(Limit(I-\{i\}\cup \{i'\}, P)))\).

\begin{align*}
\bigcup_{i \in I} & \bigcup_{i' \in impls(par(i)) - \{i\}} N(ModelGraph(Limit(I-\{i\}\cup \{i'\}, P)))\\
=& \bigcup_{i \in I} \bigcup_{i' \in impls(par(i)) - \{i\}} \st{I}{I \in N(ModelGraph(P)), siblings(I, I') = \{(i, i')\}} \\
=& \st{I}{I \in N(ModelGraph(P)), |siblings(I, I')| = 1} \\
=& Nei(I, P) \\
\end{align*}
\end{proof}
\subsubsection{Efficiency}
\label{sec:orgb5f8605}

Since we are using the \(ModelGraph\) implementation from \cref{modelgraph-algorithm}, the runtime complexity of N(ModelGraph(P)) is \(O(H * N)\) for \(H = holes(P)\) and \(N = |N(ModelGraph(P))|\).

Our runtime complexity is then:

\begin{align*}
O(Nei(I, P)) =& O(\bigcup_{i \in I} \bigcup_{i' \in impls(par(i)) - \{i\}} N(ModelGraph(Limit(I-\{i\}\cup \{i'\}, P))))\\
=& O(\sum_{i \in I} \sum_{i' \in impls(par(i)) - \{i\}} N(ModelGraph(Limit(I-\{i\}\cup \{i'\}, P))))\\
=& O(\sum_{i \in I} \sum_{i' \in impls(par(i)) - \{i\}} H |N(ModelGraph(Limit(I-\{i\}\cup \{i'\}, P)))|)\\
=& O(H \sum_{i \in I} \sum_{i' \in impls(par(i)) - \{i\}} |N(ModelGraph(Limit(I-\{i\}\cup \{i'\}, P)))|)\\
=& O(H |Nei(I, P)|)
\end{align*}

Our space complexity is the size of the output.
\section{Additional features}
\label{sec:org11d2da9}
\label{extensions}
This section presents two extensions of the module system described in \cref{mstan-syntax,semantics}, giving users more expressive power to build their desired networks of models.
\subsection{Append blocks}
\label{sec:org875fb61}
In the base syntax, we allow modules to define a \texttt{parameters} block that upon concretization is appended onto the base parameters block.
In the same way, we can allow modules to add on to each of the other Stan blocks (except for \texttt{data}, which is fixed for the whole network of models).
The extended syntax generalizes \texttt{MODULE\_IMPLEMENTATION\_M}:

\begin{verbatim}
MODULE_IMPLEMENTATION_M:
  module "impl_identifier" hole_identifier((TYPE identifier,)*) {
    FUNCTIONS_M?
    TRANSFORMED_DATA_M?
    TRANSFORMED_PARAMETERS_M?
    PARAMETERS?
    MODEL_M?
    GENERATED_QUANTITIES_M?
    STMT_LPDF_M;*
    return EXPR_M;?
  }
\end{verbatim}

We refer to these module blocks as append blocks because they are appended to the end of their corresponding Stan blocks upon concretization.

The rules for scope and available effects are the same for within-module blocks as they are for the corresponding Stan blocks as described in \cref{stan-effects-scope}.
For example, variables defined in a module \texttt{transformed parameters} block can be referenced inside of that module's \texttt{model} block but not its \texttt{transformed data} block.
Module arguments are not available inside append blocks.

ApplyImpl is updated to include \texttt{InlineFunction} calls for each append block in the same way that it is used for \texttt{parameters}.

Concretization is still correct:
\begin{itemize}
\item Scope is preserved: For any statement in a module that could reference a global variable \texttt{g} in module append block \texttt{B}, \texttt{g} will still be a valid reference after concretization because \texttt{B} must also be included.
For any statement \texttt{S} in a module append block that references a global variable \texttt{h} in its corresponding base block, \texttt{h} will still be a valid reference after concretization because \texttt{M} will be inserted after the declarations of the base block.
\item Type correctness and effect correctness are preserved because appends are statements and must have the same effect requirements as the blocks into which they are inserted, as usual.
\item All other arguments about statement order, well-typedness, and lack of cycles are unaffected by this change.
\end{itemize}

Since the inclusion of append blocks has no affect on the module dependency graph, this change does not change the correctness of the network of models or neighbor algorithms.

Use cases of append blocks include:
\begin{itemize}
\item Defining functions for use in the module and as arguments to its descendants.
\item Defining transformed data or transformed parameters that are used in the module.
\item Adding a prior distribution to the model block for each parameter defined the module, without requiring that the module signature have the \texttt{LPDF} effect.
\item Emitting the likelihood of the data under a distribution defined the module as a generated quantity (as in the case study in \cref{birthday}).
\end{itemize}

\subsection{Module fields}
\label{sec:org17442f5}
Sometimes, two or more behaviors only make sense to include together.

For example, suppose we want to define a change of variables transformation.
We need to define a \texttt{transform} function and corresponding \texttt{inverse} function.
Suppose we want to abstract the transformation into a hole so that it can be swapped out.
If we were to define separate holes for \texttt{Transform} and \texttt{Inverse}, we may end up with implementations may not match, because they can be selected independently.
We would rather have one hole, \texttt{Transformation}, that packages compatible \texttt{transform} and \texttt{inverse} implementations together.

We can achieve this coupling by letting modules contain multiple named behaviors call \emph{fields}.
With fields, we would defined one hole, \texttt{Transformation}, with two fields, \texttt{forward} and \texttt{reverse}, referenced in code as \texttt{Transformation.forward} and \texttt{Transformation.reverse}.

Fields of a module also share the append blocks (and therefore global scope) of a module.

The syntax for associated behavior introduces a second, optional variant of \texttt{MODULE\_IMPLEMENATION\_M} where arguments are moved into field declarations:

\begin{verbatim}
MODULE_IMPLEMENTATION_M: module "impl_identifier" hole_identifier {
  FUNCTIONS_M?
  TRANSFORMED_DATA_M?
  TRANSFORMED_PARAMETERS_M?
  PARAMETERS?
  MODEL_M?
  GENERATED_QUANTITIES_M?

  field field_identifier?(TYPE identifier*) {
    STMT_LPDF_M;*
    return EXPR_M;*
  } +
}
\end{verbatim}
We use \texttt{impl\_identifier} to stand in for valid field names, while \texttt{field} is a new keyword.

Each of a hole's fields must be implemented by a corresponding field block and implementations as though it were a new hole.
A field \(\texttt{f}_{\texttt{i}}\) in hole \(\texttt{h}\) is referenced as \(\texttt{h.f}_{\texttt{i}}\texttt{(..)}\), except when the field identifier \(\texttt{f}_{\texttt{i}}\) is empty, in which case the reference is \(\texttt{h(..)}\). 

Each of a hole's fields is treated like a separate hole for the purposes of concretization, and they have no differences in terms of the correctness of the resulting Stan program.
Holes with fields are treated like holes without fields for the purpose of building the module graph, model graph and neighbor sets; an implementation has a dependency on a hole \texttt{h} if any of its fields depends on \texttt{h}.

Coupling behavior into the fields of a module is only one constraint that we could impose on co-selection of implementations.
We discuss a more general constraint logic in \cref{additional-features}.
\section{Macros}
\label{sec:org2e671ad}
\label{macros}
While the module system described so far is in theory flexible enough to describe arbitrarily complex networks of models, it may sometimes be too verbose to be practical.

For instance, suppose we want to build a regression model, but we don't know which of \(N\) features to include. 
Our model space of interest then consists of one model for each subsets of features.
Since we want to choose inclusion or exclusion for each feature, we will need at least \(N\) holes, each with two implementations, so we will need to write \(2N\) implementations.
This is cumbersome and repetitive for large \(N\).

To make the language more expressive, we define a series of ``macros'' as shorthand ways of generating larger modular programs.
Each macro is translated by the compiler back into the basic module language, so we can then use the same algorithms defined in \cref{algorithms}.

\Cref{macro-table} is a summary of the macros described in the following sections.

\begin{table}[htbp]
\centering
\begin{tabular}{lll}
Name & Syntax & Description\\
\hline
Collection hole & \texttt{H+} & Select a subset of implementations rather than one\\
Indexed hole & \texttt{H[i..j]} & Copy implementations of \(H\) for each index in range\\
Hole instance & \texttt{H<j>} & Copy a hole; same selection but new parameters\\
Hole copy & \texttt{H<<j>>} & Copy a hole; independent selection\\
Ranged versions & \texttt{H<i..j>} & Apply the macro for each index in range,\\
 & \texttt{H<<i..j>>} & \quad collecting an array of results\\
Multi-ranges & \texttt{i..j,k..l,..} & Same as a range, but for each index combination\\
Range exponent & (\texttt{i..j)\textasciicircum{}e} & Same as a range, but \texttt{e} indices without replacement\\
Hole product & \(\texttt{H}_1\texttt{*}\texttt{H}_2\texttt{(..)(..)}\) & New hole with implementations \(impls(\texttt{H}_1) \times impls(\texttt{H}_2)\)\\
Hole exponent & \texttt{H\textasciicircum{}j} & Same as a hole product but without replacement\\
\end{tabular}
\caption{\label{macro-table}Summary of macros}

\end{table}

The expansion of each macro transforms the program's module graph in some way, it may synthesize new modules with new (Stan) code, and it may define a translation of selection strings (\cref{selection-string}) to apply to the post-expansion module graph.
\subsection{Collection holes}
\label{sec:org030720f}
\label{collection-holes}
A ``collection hole'' is a hole that can be filled with any number of implementations, rather than exactly one.

Consider our regression model example where we want to use a subset of \(N\) features.
For each feature, we need a hole with one implementation that includes the feature and another that does not.
Then, we need to collect all of the holes into an array to pass into the regression.
The ``collection hole'' macro automates this pattern, so that the user can instead write one ``collection hole'' with \(N\) implementations instead of \(N\) holes and \(2N\) implementations.

Collection holes are identified by a \texttt{+} at the end of a hole identifier.
The value of a non-\texttt{void} collection hole is an array containing the values of the selected implementations in an undefined order.

\Cref{collection-trans} shows how expansion of collection holes modifies a program's module graph.

\begin{figure}[htbp]
\centering
\includegraphics[width=360pt]{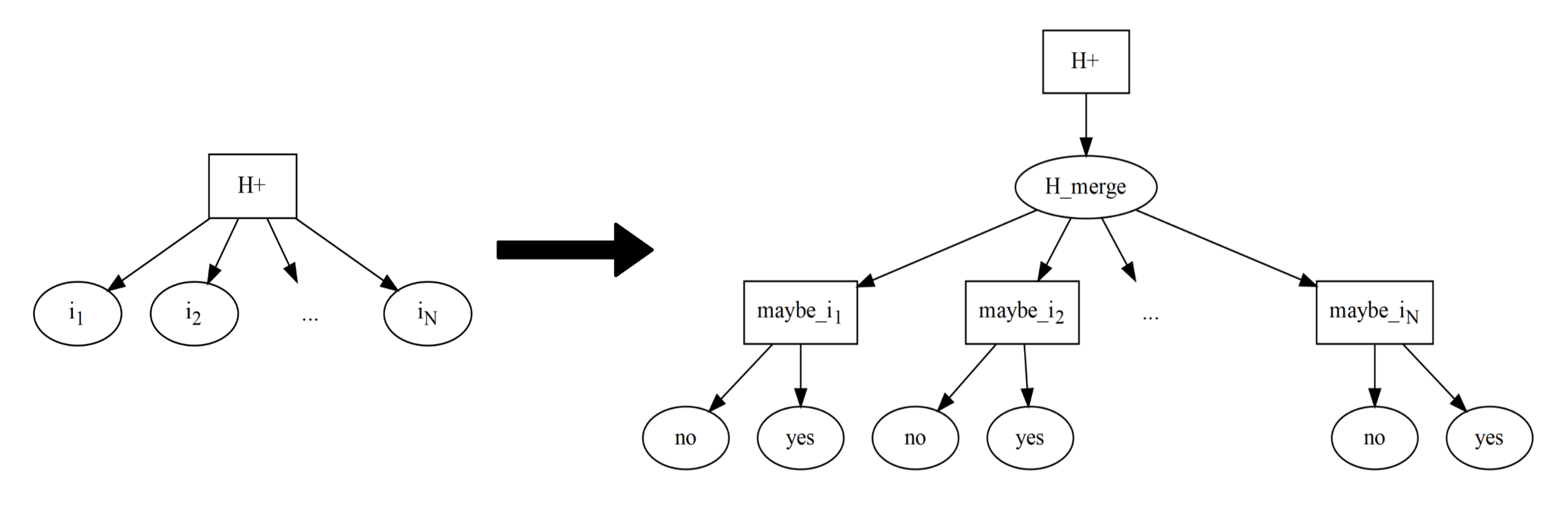}
\caption{\label{collection-trans}Collection hole module graph transformation. A collection hole \texttt{H+} produces a modified hole that lets a user select which implementations to include (\texttt{yes}) or exclude (\texttt{no}), and then concatenates the results (\texttt{H\_merge}).}
\end{figure}

For each implementation \texttt{i} of \texttt{h+}, the implementation \texttt{yes} is the same as \texttt{i} but wraps its result in a singleton array, while \texttt{no} does nothing and returns an empty list.
The implementation \texttt{merge\_h} returns a concatenation of the arrays returned by each new hole.

Users can then select a subset of implementations in their selection strings, as in \texttt{..h:[i\_1,i\_2,..]..}.
If a collection hole \texttt{h} has a set of implementations \texttt{C}, then a selection string \(\texttt{s}_{before},\texttt{h:[I],s}_{after}\) for some list of implementations \(I\subset C\) is translated to \(\texttt{s}_{before}\texttt{,}\) \(\texttt{h:merge\_h,}\) \(\bigcup_{i\in I}\texttt{h\_i:yes,}\) \(\bigcup_{i\in C-I}\texttt{h\_i:no,}\) \(\texttt{s}_{after}\).

The network produced by a collection hole includes an edge between two nodes if they differ by exactly one inclusion\footnote{Like a Hamming graph.}.
\subsection{Indexed holes}
\label{sec:org8b27dd6}

An ``indexed hole'' is a hole that can generate additional implementations.

Indexed holes are identified by adding a range \texttt{[i..j]}, where \texttt{i} and \texttt{j} non-negative integer literals, to at the end of a hole identifier.
Implementations of indexed holes accept an index as an extra argument, denoted in brackets at the end of the hole identifier in the their definition, such as in \texttt{module "i" h[j](..) \{..\}}.
Here, \texttt{j} is the index, and can be used as an integer literal within the module, because it will be replaced with by each integer in the range upon macro expansion.
In this way, each written module implementation serves as a template for generating more.

\Cref{indexed-trans} shows how expansion of indexed holes modifies a program's module graph.

\begin{figure}[htbp]
\centering
\includegraphics[width=360pt]{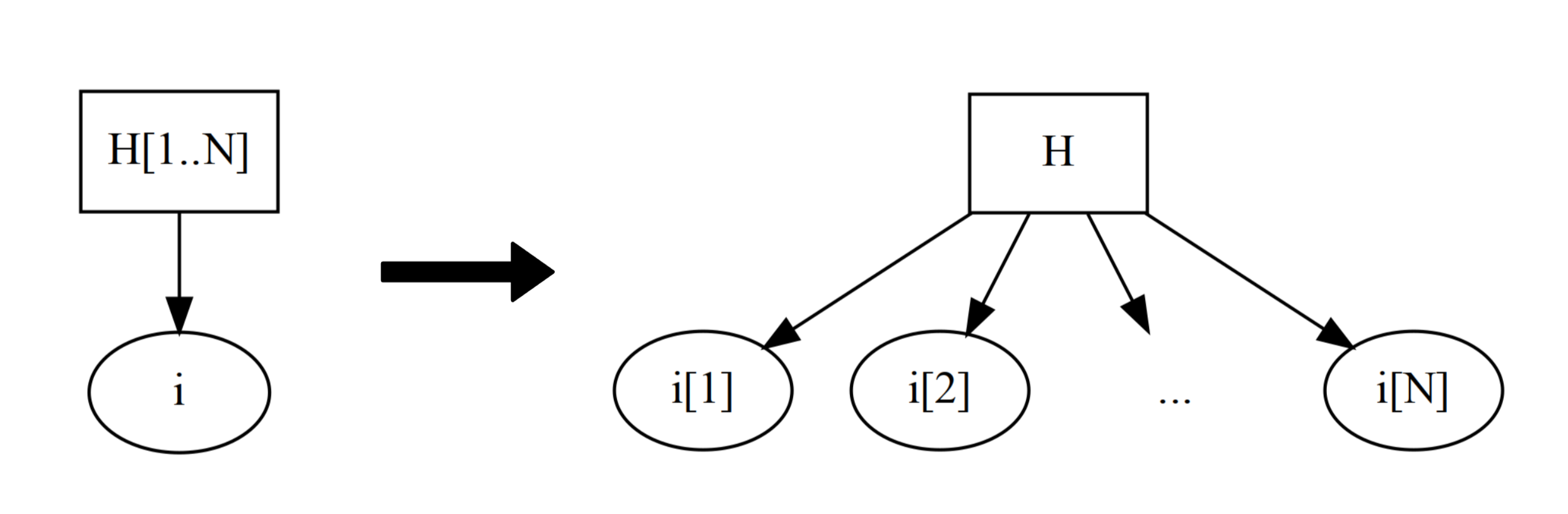}
\caption{\label{indexed-trans}Indexed hole module graph transformation. An indexed hole \texttt{H[1..N]} produces \texttt{N} copies of its implementation.}
\end{figure}

Users can then specify an indexed implementation to fill the indexed hole, as in \texttt{..h:i[5]..}.

\subsection{Hole instances and hole copies}
\label{sec:orgda3cbb4}
So far, the design of holes effectively assumes that they represent decisions about individual subcomponents of a model. For example, when a parameter is defined within a hole, it is only ever translates to single parameter in the resulting program.
It may be instead that a hole should be repeated in multiple places.

For example, suppose a hole \(h\) represents a model of a storm cloud.
What if we have data from two storm clouds?
There are three possibilities:
\begin{enumerate}
\item We only want one copy of \(h\), and data from both clouds will be used to estimate the parameters in \(h\).
\item We want two copies of \(h\), one to model each cloud, with identical implemenations but separate parameters.
\item We want two copies of \(h\), one to model each cloud, but which may have different implementations.
\end{enumerate}

Case 1 corresponds to the basic semantics: users can call \(h()\) multiple times, but all references to \(h\) are filled with the same implementation, and the append blocks of \(h\)'s implementation are only added once, so all references use the same parameters.

Case 2 is the motivation for a macro called \emph{hole instances}.

Case 3 is the motivation for a macro called \emph{hole copies}.

\subsubsection{Hole instances}
\label{sec:org6323613}
Hole instances are identified by the syntax \texttt{hole\_identifier<j>}, where \texttt{j} is a non-negative integer literal.
Hole instances are transformed in the module tree in the following way:

\Cref{instance-trans} shows how expansion of hole instances modifies a program's module graph.

\begin{figure}[htbp]
\centering
\includegraphics[width=360pt]{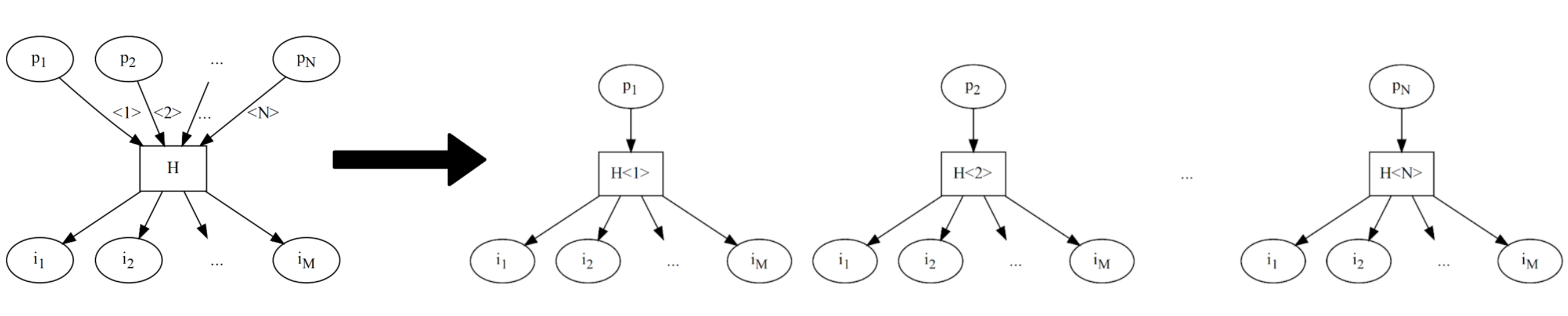}
\caption{\label{instance-trans}Hole instance module graph transformation. When some code \(\texttt{p}_i\) refers to \texttt{H<i>}, a new hole called \texttt{H<i>} is created with copies of the implementations of \texttt{H}.}
\end{figure}

Here, \(\texttt{h[1]}\dots \texttt{h[N]}\) are identical copies of \(h\) except that their local and global variables are given unique names.
This way, there are \(N\) copies of the implemenation of \(h\), each refering to its own set of new global variables such as parameters.
Each \texttt{j} found in a reference \texttt{hole\_identifier<j>} produces a new instance.

Users specify selections in the same way as if \(h\) were not copied: \texttt{..,h:i,..}. 
A selection string \(\texttt{s}_{before},\texttt{h:i},\texttt{s}_{after}\) is translated into \(\texttt{s}_{before},\bigcup_{j\in 1\dots N}\texttt{h}_{j\texttt{:}i},\texttt{s}_{after}\).

Optionally, module implementations can accept an index as an extra argument, denoted in angle brackets at the end of the hole identifier in their definition, such as in \texttt{module "i" h<j>(..) \{..\}}.
When that module implementation is used as a hole instance, the variable \texttt{j} can be used as an integer literal.
This way, \texttt{j} can specify hole instances within the module, copying deeper into the module graph.

\subsubsection{Hole copies}
\label{sec:org8ecc278}
Hole copies are identified by the syntax \texttt{hole\_identifier<<j>>}, where \texttt{j} is a non-negative integer literal.
Hole copies behave the same way as hole instances, except that implementations must be selected independently for each generated hole \texttt{h<<j>>}, as in:
\(\texttt{..,h<<1>>:i}_{\texttt{1}},\dots,\texttt{h<<N>>:i}_{\texttt{N}}\texttt{,..}\).

\subsubsection{Ranged hole instances and copies}
\label{sec:org588502d}
To generate many hole instances or copies automatically, \texttt{h[1:N](..)} is translated into a Stan array \texttt{\{ h[1](..), h[2](..), .., h[N](..) \}} and \texttt{h<1:N>(..)} is translated into \texttt{\{ h<1>(..), h<2>(..), .., h<N>(..) \}}.
\subsection{Multi-ranges and range exponents}
\label{sec:org8ce8dab}
\label{ranges}
Macros that can have ranges, namedly \texttt{H[i..j]}, \texttt{H<i..j>}, and \texttt{H<<i..j>>}, can also accept multi-ranges.
Multi-ranges are the same as ranges except that they produce one result per combination of their ranges.
For example, a \texttt{1..3,1..5} produces \texttt{(1,1),(1,2),..(3,5)}.
Implementations that accept indices as extra arguments, such as \texttt{h<j>}, must then accept \texttt{h<i,j>}.

Ranges exponents come in three variants.
\texttt{R\textasciicircum{}n} is equivalent to a multi-range with \texttt{R} repeated \texttt{n} times, for example, \texttt{(1..3)\textasciicircum{}2} is equivalent to \texttt{1..3,1..3}.
\texttt{R\textasciicircum{}Pn} is like \texttt{R\textasciicircum{}n} except that it gives ordered permutations without replacement, for example \texttt{(1..3)\textasciicircum{}P3} does not include \texttt{(1,1)}, \texttt{(2,2)}, or \texttt{(3,3)}.
\texttt{R\textasciicircum{}Cn} is like \texttt{R\textasciicircum{}Pn} except that it gives unordered combinations without replacement, for example \texttt{(1..3)\textasciicircum{}C3} does not include \texttt{(2,1)}, \texttt{(3,1)}, or \texttt{(3,2)}.

Multi-ranges and range exponents make it easier to generate holes and implementations that represent combinations.
\subsection{Hole products and hole exponents}
\label{sec:org0836312}
A hole product \(\texttt{H}_1\texttt{*H}_2\texttt{(.., ..)}\) is a hole that combines the implementations of the holes \(\texttt{H}_1\) and \(\texttt{H}_2\).
For each \(i_1 \in impls(\texttt{H}_1)\) and \(i_2 \in impls(\texttt{H}_2)\), \((i_1,i_2)\) is an implementation of \(\texttt{H}_1\times \texttt{H}_2\) that returns a tuple of the results of \(i_1\) and \(i_2\)
The arguments lists of \(H_1\) and \(H_2\) are concatenated together to make the argument list of \(\texttt{H}_1\texttt{*H}_2\).

\Cref{product-trans} shows how expansion of hole products modifies a program's module graph.

\begin{figure}[htbp]
\centering
\includegraphics[width=360pt]{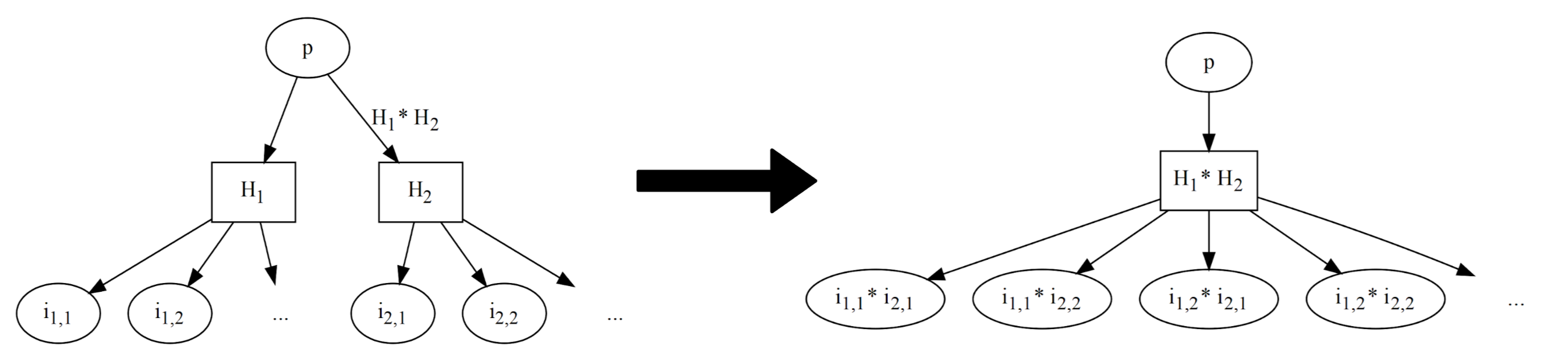}
\caption{\label{product-trans}Hole product expansion shown as a module graph transformation. When some code \texttt{p} refers to a product \(\texttt{H}_1\texttt{*H}_2\), a new hole is created with the cartesian product of \(\texttt{H}_1\) and \(\texttt{H}_2\)'s implementations.}
\end{figure}

Hole exponents are analogous to range exponents from \cref{ranges}.
\texttt{H\textasciicircum{}n} is equivalent to the product of \texttt{H} with itself \texttt{n} times.
\texttt{H\textasciicircum{}Pn} gives the ordered \texttt{n}-permuations of \(impls(\texttt{H})\) without replacement, while \texttt{H\textasciicircum{}Cn} gives the unordered \texttt{n}-combinations of \(impls(\texttt{H})\) without replacement.
\subsection{Example application of macros}
\label{sec:orgc99cabf}
Recall the regression example in which we want to include some subset of \(N\) variables in our model. Suppose \(N=100\).
To encode this model space in the module system requires \(2N=200\) module implementations.
Using an indexed collection hole, we can reduce this to one handwritten module implementation.

\begin{verbatim}
data {
    int N;
    matrix[100, N] x;
    vector[N] y;
}
parameters {
    real sigma;
}
model {
    y ~ normal(sum(Feature[1..100]+(x)), sigma);
}

module "f" Feature[n](x) {
    parameters {
        real theta;
    }
    return theta*x[n,:];
}
\end{verbatim}

This program represents a family of regression models on \texttt{y} given the features \texttt{x}, each including a different subset of features.
\texttt{y} is modeled with a normal distribution centered on the sum of the subset of features, \texttt{Feature[1..100]+(x)}, which is an indexed collection hole: the range \texttt{[1..100]} copies the \texttt{Feature} module implementation for \texttt{n=1} to \texttt{100}, and the \texttt{+} indicates that each implementation is either included or excluded from the result.
An individual Stan program can be generated from this family by supplying a selection string, such as: \texttt{Feature:[1,2,3]}, which includes only the first three features.

Now suppose we also want our regression to include some subset of 2- and 3-way interactions between variables, such as \texttt{x[3]*x[9]} or \texttt{x[4]*x[10]*x[99]}.
To encode this model space requires \(2*(100 + {100 \choose 2} + {100 \choose 3}) = 333500\) module implementations.
We can use macros to reduce this to two or three handwritten module implementations.
We will show two alternative implementations.

The first way makes use of range exponents:
\begin{verbatim}
data {
    int N;
    matrix[100, N] x;
    vector[N] y;
}
parameters {
    real sigma;
}
model {
    y ~ normal(sum( Feature[1..100]+(x) )
                 + sum( FeaturePair[(1..100)^C2]+(x) ),
                 + sum( FeatureTriplet[(1..100)^C3]+(x) ),
               sigma);
}

module "f" Feature[n](x) {
    parameters {
        real theta;
    }
    return theta * x[n];
}

module "fp" FeaturePair[n, m](x) {
    parameters {
        real theta;
    }
    return theta * x[n] .* x[m];
}

module "ft" FeatureTriplet[n, m, p](x) {
    parameters {
        real theta;
    }
    return theta * x[n] .* x[m] .* x[p];
}
\end{verbatim}
This program uses indexed collection holes with range exponentials to collect and sum a subset of the features, feature pairs, and feature triplets.
An individual Stan program can be generated by supplying a selection string such as: \texttt{Feature:[1,2,3],} \texttt{FeaturePair:[(1,2),(1,4)],} \texttt{FeatureTriplet:[(1,2,3),(4,10,99)]}.

The second way makes use of hole products and exponents:
\begin{verbatim}
data {
    int N;
    matrix[100, N] x;
    vector[N] y;
}
parameters {
    real sigma;
}
model {
    vector[N] total = rep_vector(0, 100);
    for ((t, r) in Theta*Col[1..100]+()) {
      total += t * r;
    }
    for ((t, r1, r2) in Theta*Col[1..100]^C2+()) {
      total += t * r1 .* r2;
    }
    for ((t, r1, r2, r3) in Theta*Col[1..100]^C3+()) {
      total += t * r1 .* r2 .* r3;
    }

    y ~ normal(total, sigma);
}

module "t" Theta() {
  parameters {
    real theta;
  }
  return theta;
}

module "r" Col[n]() {
  return x[n];
}
\end{verbatim}
By taking a product \texttt{Theta*Col[1..100]}, we are generating an implementation of \texttt{Col} for each index 1 to 100, and then producing a new parameter for each of those implementations.
By taking a product of \texttt{Theta} with the exponent \texttt{Col[1..100]\textasciicircum{}2}, we are producing a new parameter for each pair of indices.
Each of these products returns an array of tuples of values which must then be multiplied and summed into the \texttt{total} vector.
The selection strings for this program look like:
\texttt{Theta*Col:[(t,1),(t,2),(t,3)],} \texttt{Theta*Col\textasciicircum{}2:[(t,1,2),(t,1,4)],} \texttt{Theta*Col\textasciicircum{}3:[(t,1,2,3),(t,4,10,99)]}.

These programs have \(\approx 333500\) module implementations and represent networks with \(2^{100 + {100 \choose 2} + {100 \choose 3}} \approx 10^{50000}\) models.
How can we use such large programs and spaces?
By never explicitly representing them.
Macros instantiate synthetic modules lazily, only when they are selected.
Large networks can be explored effiently by only enumerating neighbors with \texttt{ModelNeighbors}; in this case, the network diameter and branching factor are a more managable 166750.
\section{Example case studies}
\label{sec:org83ade2b}
In this section we present two small but real-world probabilistic modeling case studies that we have translated into the modular Stan language, and we discuss their benefits. 
These case studies showcase only two of the motivating use cases listed in \cref{intro}; the rest we for future work.

In addition, we present a web interface that can be used to follow along with our two examples.
\subsection{Interactive web interface}
\label{sec:orgbfa72b0}

We have built a prototype web interface for development and interactive visualizations of modular Stan \cite{website}.
\Cref{fig:interface} shows its interface.

\begin{figure}[htbp]
\centering
\includegraphics[width=.9\linewidth]{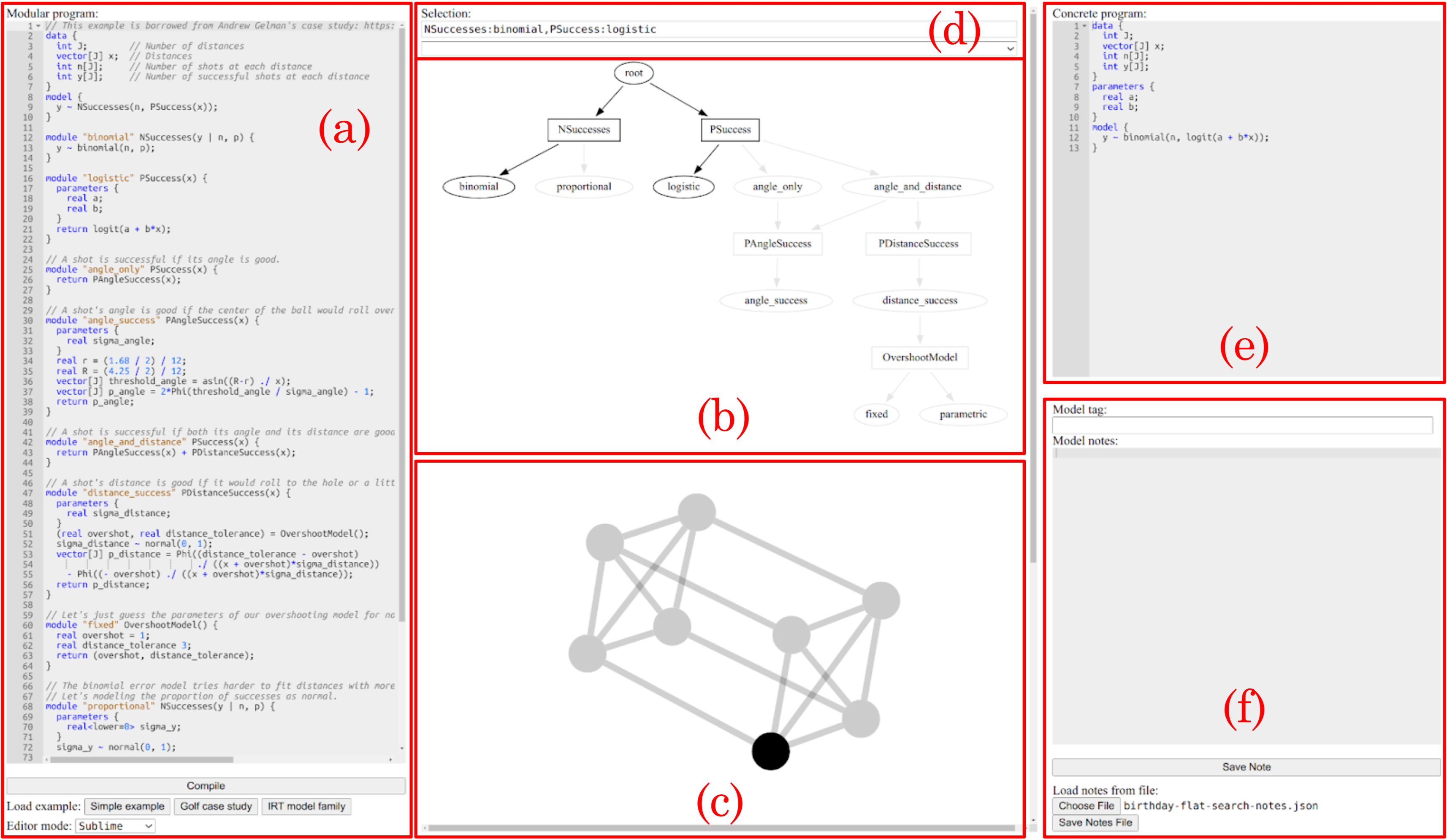}
\caption{\label{fig:interface}A labeled screenshot of the prototype web interface for modular Stan.}
\end{figure}

Users can write a modular Stan program or load an example program and compile it at (a).
When they do, interactive visualizations are produced: (b) the module graph and (c) the model graph.

The page keeps track of a module selection set. Users can modify the selection by: selecting or deselecting implementations the module graph (b), selecting complete models in the model graph (c), editing the selection string directly (d), or selecting a previously labeled model (d).
When the selection is modified, the model graph (c) highlights nodes compatible with that selection, and when the selection is valid, the corresponding concrete Stan program is displayed (e) and labels and notes associated with that program can be edited (f). The set of model labels and notes can be saved and loaded as a text file.

Users can also bookmark and annotate nodes in model graph.
Annotations can be saved and loaded as files separate from the modular Stan file, in a format that maps between the model's unique selection set and model labels and annotations.

\bigskip

Interactive versions of the following two case studies can be found at \cite{website-golf} and \cite{website-birthday}.

\subsection{``Golf'' case study: Modular Stan for ease and clarity of development}
\label{sec:orge5cd027}
    \label{golf}
This section gives a basic demonstration of how modular Stan can cleanly support and express a typical model development workflow, as an example of application \cref{tracking}.

The ``Golf'' case study \cite{golf-case-study} follows the development of a Bayesian statistical model for describing the probability that a professional golfer will sink a shot given their distance from the hole.

We represent the modeling process as a single modular Stan program.
The base of the program is the part that remains constant throughout development:

\begin{verbatim}
data {
  int J;        // Number of distances
  vector[J] x;  // Distances
  int n[J];     // Number of shots at each distance
  int y[J];     // Number of successful shots at each distance
}
model {
  y ~ NSuccesses(n, PSuccess(x));
}
\end{verbatim}
The \texttt{data} block describes \texttt{J} distances, where the \texttt{j}th distance is \texttt{x[j]} feet, and \texttt{y[j]} shots out of \texttt{n[j]} were successful.

The \texttt{model} block describes an abstracted modeling approach: we model the number of successes \texttt{y} as being drawn from some distribution \texttt{NSuccesses} parameterized by the number of attempts \texttt{n} and the probability of success \texttt{PSuccess}, which itself is a function of the distance \texttt{x}.
\texttt{NSuccesses} and \texttt{PSuccess} are holes.

A natural distribution to choose for \texttt{NSuccesses} is the binomial distribution. We express this as a module:

\begin{verbatim}
module "binomial" NSuccesses(y | n, p) {
  y ~ binomial(n, p);
}
\end{verbatim}

A simple way to take a real value like \texttt{x} to a probability is the \texttt{logit} function, so we choose to explore that option first for \texttt{PSuccess}:

\begin{verbatim}
module "logistic" PSuccess(x) {
  parameters {
    real a;
    real b;
  }
  return logit(a + b*x);
}
\end{verbatim}

If we stop here, we have only one choice of implementation for each of our holes, so our modular Stan program defines only one valid Stan program:

\begin{verbatim}
data {
  int J;        // Number of distances
  vector[J] x;  // Distances
  int n[J];     // Number of shots at each distance
  int y[J];     // Number of successful shots at each distance
}
parameters {
  real a;
  real b;
}
model {
  y ~ binomial(n, logit(a + b*x));
}
\end{verbatim}
This program implement logistic regression, which is the first model explored in the case study.

The next step in the probabilistic workflow or Box's loop is to criticize our model by applying it to the data.
We find that the model fit is lacking and the parameters \texttt{a} and \texttt{b} have no obvious physical interpretation.

This criticism motivates us to try a more sophisticated, more physically realistic implementation for \texttt{PSuccess}.
If we suppose that a shot will be successful if its trajectory angle is sufficiently precise, and we suppose that the angle is normally distributed, then we can write our model in terms of angle variance:

\begin{verbatim}
// A shot's angle is good if the center of the ball would roll
// over the hole.
module "angle_success" PSuccess(x) {
  parameters {
    real sigma_angle;
  }
  real r = (1.68 / 2) / 12; // ball radius
  real R = (4.25 / 2) / 12; // hole radius
  vector[J] threshold_angle = asin((R-r) ./ x);
  vector[J] p_angle = 2*Phi(threshold_angle / sigma_angle) - 1;
  return p_angle;
}
\end{verbatim}

We find this model has superior fit and interpretability and we continue iterating by adding on modules in this fashion.

The completed representation of the case study can be found with our source code \cite{golf-module-source} or at the web interface along with visualizations \cite{website-golf}.
Here it is presented by the web interface:

\begin{figure}\centering
\begin{subfigure}{0.495\textwidth}
\begin{center}
\includegraphics[width=.9\linewidth]{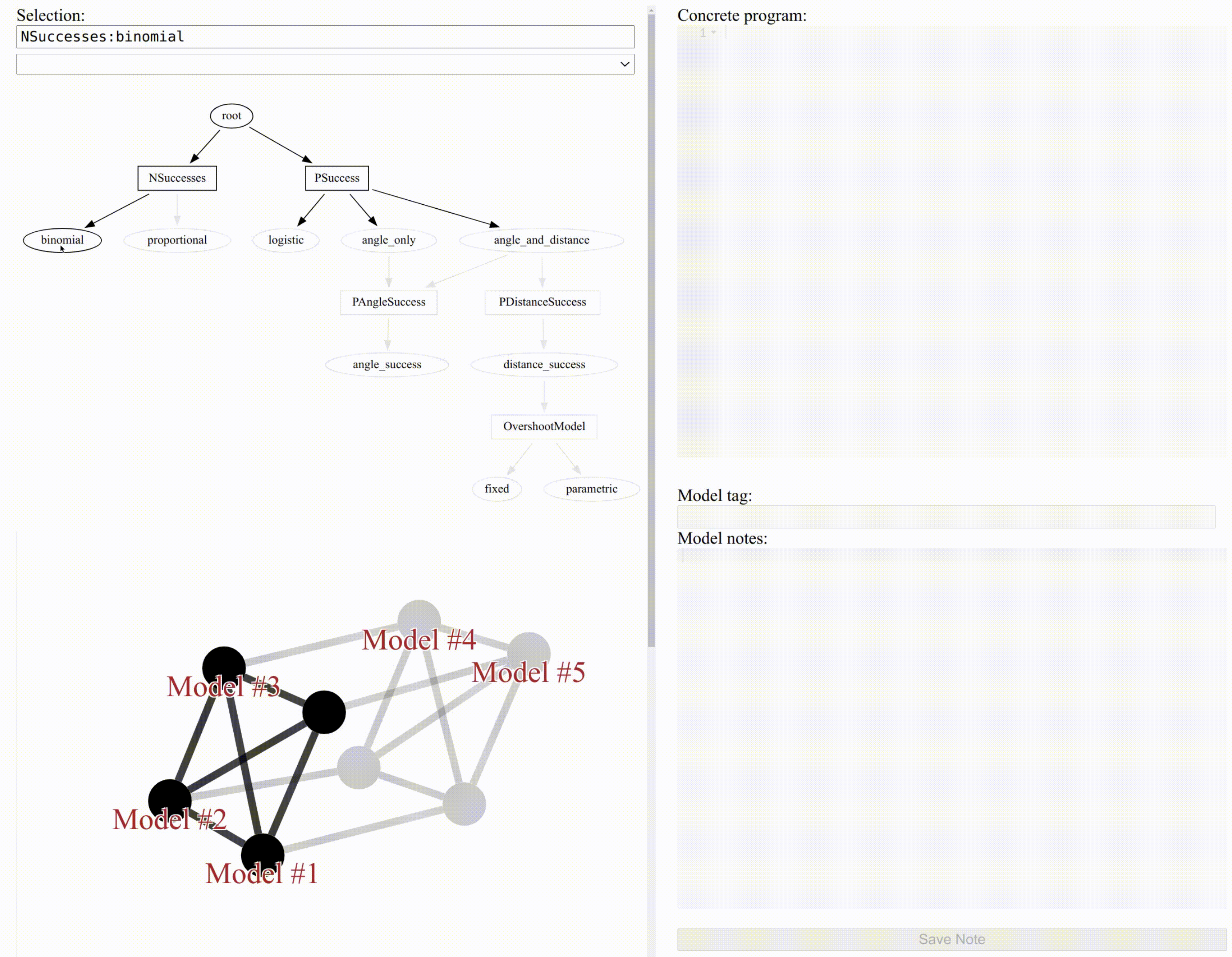}
\end{center}
\caption{}
\label{fig:golf:a}
\end{subfigure}
\begin{subfigure}{0.495\textwidth}
\begin{center}
\includegraphics[width=.9\linewidth]{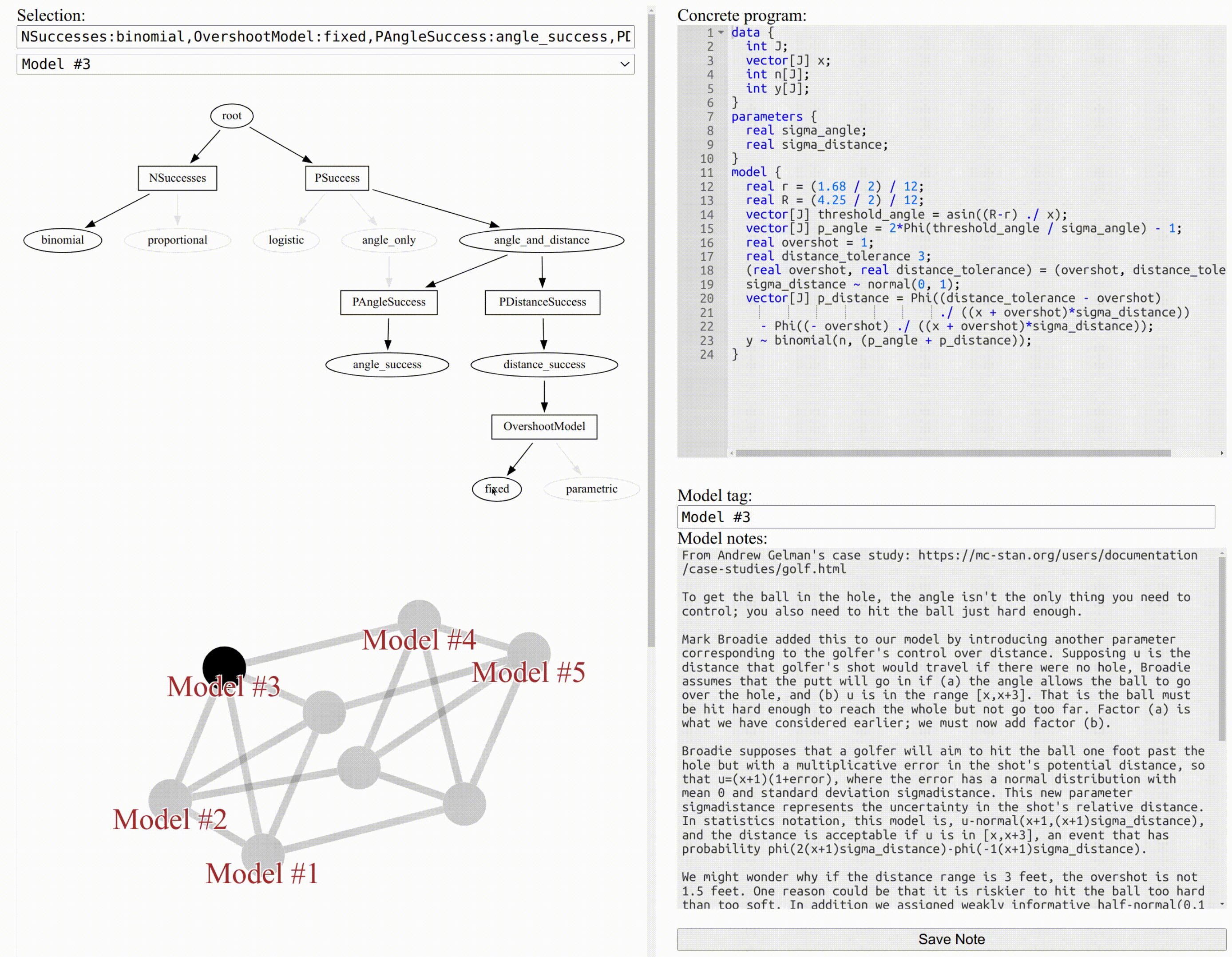}
\end{center}
\caption{}
\label{fig:golf:b}
\end{subfigure}
\end{figure}

Though this is a small example, we already see some benefits to clarity:
\begin{itemize}
\item There is only a single, minimal source file;
\item We have a standard, integrated way to draw the development path and document the decision-making evidence and rationale at each step;
\item The modular organization makes the solution space easier to understand and extend.
\end{itemize}

\subsection{``Birthday'' case study: Modular Stan as a platform for automation}
\label{sec:orgb44b2fa}
\label{birthday}

This section gives an example of how the network of models provides a platform for automation (use case \cref{automation}).
We demonstrate constructing a network, defining an evaluation metric, and performing a simple graph search.
We use a moderately sized network of models from a case study of incremental model improvements.
This approach would also apply to other multiple-model contexts, such as feature selection and symbolic regression.

The ``Birthday'' case study \cite{birthday-case-study,bda} follows the development of a statistical model of the number of babies born in the US on a given day, given birth data from 1969-1988.
The authors used a time-series Gaussian process approach.

Like many probabilistic modeling case studies and publications, the Birthday case study presents a series of models that differ by incremental variations.
The authors explicitly evaluate nine models, but the model variations they present implicitly define a much larger set of models that it would be reasonable to explore: what if a different combination of variations were applied, or in a different way?
To feasibly explore that larger set of models, we need automation, and for automation, we need an explicit representation of the model space, like the network of models offered by a modular Stan program.

We start by translating the case study into a single modular Stan program.
The translation reduces the number of lines of code from 1098 to 270 while increasing the number of models represented from nine to 120.

The translation process is largely mechanical, and involves encapsulating the variation between the given models into modules.
For example, model 2 adds a days-of-the-week Gaussian process component onto on model 1, model 3 adds a long-term-trend component onto model 2, etc.
Each of these variations becomes one or more module in the modular Stan program.
The full translation of the modular program can be found with our source code \cite{birthday-module-source} or at the web interface along with visualizations \cite{website-birthday} \footnote{Our modular Stan program would have been significantly more concise if we used the collection hole feature described in \cref{collection-holes}, but they were not implemented in our prototype compiler at the time of writing.}.

\begin{figure}[htbp]
\centering
\includegraphics[width=4.5in]{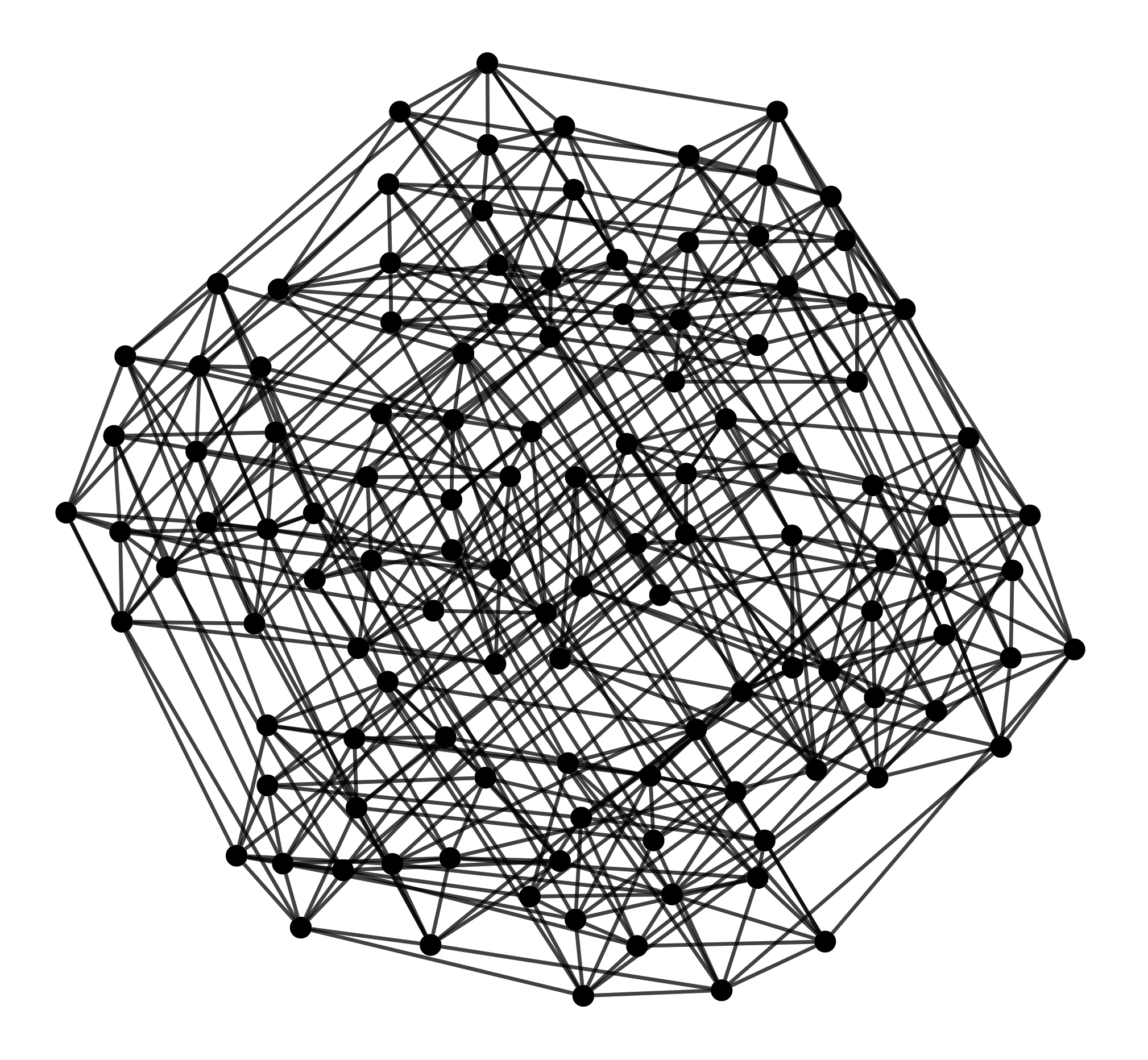}
\caption{\label{fig:birthdaygraphs}The model graph corresponding to the Birthday problem modular Stan program. Each node represents a model and each edge represents one swapped-out module.}
\end{figure}

\Cref{fig:birthdaygraphs} shows the model graph produced by the modular program.

To automatically search these 120 models for high-quality options, we must first define ``quality'' by choosing a model-scoring metric.
One reasonable approach is to measure a model's predictive accuracy by computing its Expected Log-Posterior Density (ELPD).
ELDP approximates the leave-one-out prediction accuracy of a model for a dataset \cite{elpd}.

While ELPD is relatively efficient, it could still take a long time to accurately compute ELPD for every model in our network.
Our goal, then, is to find high-quality models with as few ELPD evaluations as possible.
Here we take the simplest approach, a greedy graph search, and leave more sophisticated search methods for future work.

Our greedy graph search loops over the following steps, given an arbitrary starting point:
\begin{enumerate}
\item Score the neighbors of the current model.
\item Move to the highest scoring model seen so far, or if that is the current model, return it.
\end{enumerate}
This search algorithm is not guaranteed to be optimal; it is analogous to a gradient descent of a (likely non-convex) space.
Intuitively, the closer the maximal-neighbor operation is to a gradient, the more efficient the search will be, so we can expect that the semantically minimal changes represented by the edges in the model graph provide exactly the topology of models closest to a smooth continuous space.

The search algorithm is implemented as a short Python script that uses the prototype compiler's implementation of the \texttt{ModelNeighbors} algorithm described in \cref{neighbors-algorithm}.
Its source code is available online \cite{mstan-github}.

Starting from the case study's first model, the greedy graph search followed the path shown in \cref{fig:searchpath}:

\begin{figure}[htbp]
\centering
\includegraphics[width=4.5in]{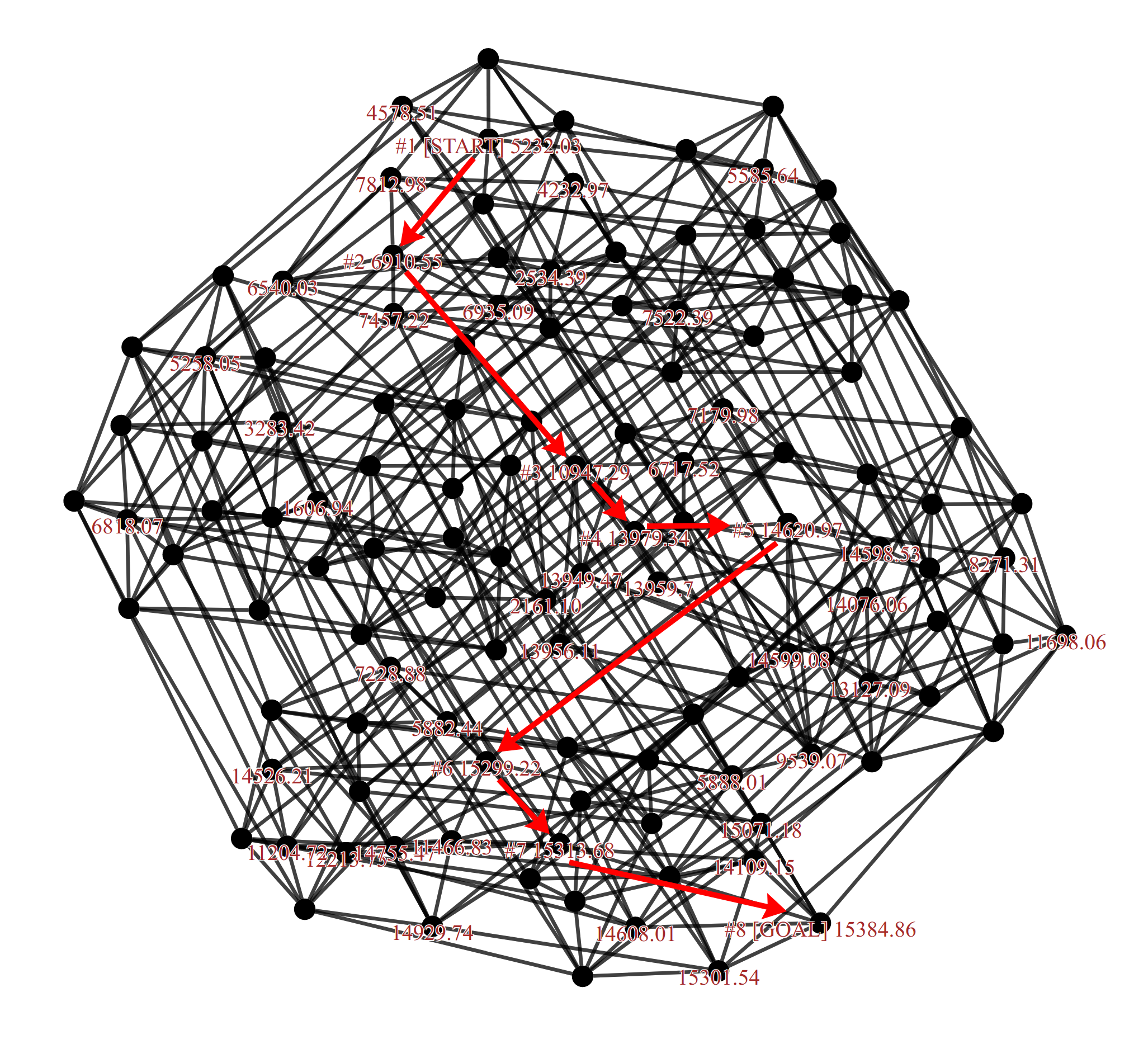}
\caption{\label{fig:searchpath}The red annotations show the ELPD scores for the assessed models. The search algorithm visited nodes along the red arrow path from starting at \texttt{[START]} and terminating at \texttt{[GOAL]}.}
\end{figure}

The search performed 47 ELPD evaluations.
The search agreed with the case study authors' final model, confirming that it has (at least locally) optimal predictive performance.

While greedy maximization of ELPD is a naive statistical workflow and shouldn't be blindly trusted it to give a final model, it is at least useful for finding promising neighborhoods, especially for large model spaces.
\section{Future work}
\label{sec:org01deddb}
\label{future-applications}
As discussed in \cref{uses}, there are many motivating use cases for the network of models.
We hope to demonstrate and build tooling for more of these use cases, using modular Stan as a foundation:
\begin{itemize}
\item Model search.
\Cref{birthday} gave a simple algorithm for greedy search that only used the network topology.
 We hypothesize that more efficient search methods could leverage the module structure, treat search as an exploration/exploitation problem, or utilize methods from the symbolic regression literature.
\item Model space navigation tools.
Model developers must decipher promising directions in which to iterate.
If we could annotate the edges between models with meaningful joint metrics, we could provide developers with direction.
We could also use model statistics or edge weights to embed network-space into visualizations.
\item Multi-model ensemble methods.
As mentioned in \cref{uses}, multi-model ensemble methods like stacking and multiverse analysis could be applied to multi-model programs.
\item Sensitivity analysis.
To automate sensitivity analysis given a model and some metric of a model's results, like a \(p\text{-value}\), we can check the extent to which the model's results differ from its neighbors.
\end{itemize}
\label{additional-features}
In addition, we hope to expand the capabilities of the swappable module system.
\begin{itemize}
\item Explicit model signatures.
Thought we believe that our module signature inference scheme is more beginner-friendly, explicit signatures would make module reuse across applications easier.
\item Implementation co-selection logic.
While features like module fields allow users to encode the constraint that a set of implementations should always be selected together, there is no reason that user constraints on selection sets should not be arbitrarily complex, the end point being a predicate logic.
\end{itemize}
\pagebreak
\bibliographystyle{unsrtnat}
\begin{footnotesize}

\end{footnotesize}
\end{document}